\newtheorem{theorem}{Theorem}[section]
\newtheorem{definition}{Definition}[section]
\newtheorem{claim}{Claim}[section]
\newtheorem{lemma}{Lemma}[section]
\newtheorem{corollary}{Corollary}[section]
\newtheorem{observation}{Observation}[section]
\newtheorem{example}{Example}[section]
\newenvironment{oneshot}[1]{\@beginlemma{#1}{\unskip}}{\@endlemma}
\newcommand{\lbfl}{{\sc Lbfl}}
\newcommand{\cfl}{{\sc Cfl}}
\newcommand{\ufl}{{\sc Ufl}}
\date{}
\begin{document}

\title{
Integrality gaps for strengthened LP relaxations of Capacitated
and Lower-Bounded Facility Location\thanks{
This research has been co-financed by the European Union (European
Social Fund -- ESF) and Greek national funds through the Operational
Program ``Education and Lifelong Learning'' of the National Strategic
Reference Framework (NSRF) - Research Funding Program:
``Thalis. Investing in knowledge society through the European Social Fund''.}
}

\author {Stavros G. Kolliopoulos\thanks{Department of Informatics and
Telecommunications, National and Kapodistrian 
University of Athens, Panepistimiopolis Ilissia, Athens
157 84, Greece; (\texttt{www.di.uoa.gr/}\~{\tt sgk}). Part of this
work conducted while visiting the 
  IEOR Department, Columbia University, New York, NY 10027.}   
\and Yannis Moysoglou\thanks{ 
Department of Informatics and
Telecommunications, National and Kapodistrian 
University of Athens, Panepistimiopolis Ilissia, Athens
157 84, Greece; (\texttt{gmoys@di.uoa.gr}). Partially supported by an
NKUA-ELKE graduate fellowship.} }

\date{July 8, 2013}

\maketitle

\thispagestyle{empty}

\begin{abstract}
The metric uncapacitated facility location problem (\ufl)  enjoys a
special stature in approximation algorithms 
as  a testbed for various techniques, among which  LP-based
methods  have been especially prominent and successful.  
Two generalizations of \ufl\  are  {\em capacitated facility
location (\cfl\/)} and 
{\em lower-bounded facility location (\lbfl\/).}
In the former, every facility has a capacity which is the maximum demand that can be assigned
to it, while in the latter, every open facility is required to serve a given minimum amount of
demand.
Both \cfl\ and \lbfl\ are 
approximable within a constant factor but  
their respective  natural LP relaxations
have an unbounded  integrality gap. 
One could hope that different, less natural relaxations might provide 
better lower bounds.  According to Shmoys and Williamson, the existence of a
relaxation-based algorithm for \cfl\ is one of the top 10  open
problems  in approximation algorithms. 

In this paper we give the first results on this problem and they are
negative in nature.  We  show  unbounded
integrality gaps for two substantial families of strengthened formulations.

The first family we consider  is the 
hierarchy of LPs resulting from 
repeated applications of the lift-and-project Lov\'{a}sz-Schrijver
procedure starting from the standard relaxation. We show that
 the LP
relaxation for \cfl\ resulting after $\Omega(n)$ rounds, where $n$ is
the number of facilities in the instance, has  unbounded integrality
gap. Note that  the Lov\'{a}sz-Schrijver
procedure is known to yield an exact formulation for \cfl\ in at most $n$ rounds.

We also introduce the family of {\em proper} relaxations 
which generalizes to its logical extreme  the  classic
star relaxation, an equivalent form of the natural LP. 
We characterize the behavior  of proper relaxations for both
\lbfl\  and \cfl\ through  a sharp
threshold phenomenon under which the integrality gap drops from unbounded to 1.
\end{abstract}

\clearpage
\setcounter{page}{1}

\section{Introduction}
\label{intro}

Facility location problems have been studied extensively in 
 operations research,
mathematical programming, and theoretical computer science. 
The 
{\em uncapacitated facility location (\ufl)} problem is defined as follows. 
A set $F$ of 
facilities  and a set $C$ of  clients  are given. 
Every client has to be assigned to an opened facility. 
Opening a facility
$i$ incurs a nonnegative cost $f_i,$ while assigning a client $j$ to facility
$i$ incurs a nonnegative connection cost  $c_{ij}.$  The goal is to open a subset 
$F' \subseteq F$ of the facilities and  assign each
client to an open facility so that  the total cost is minimized. 
Hochbaum gave a greedy  $O(\log |C|)$-approximation algorithm
\cite{Hochbaum82}. By a straightforward reduction from Set Cover this
cannot be improved, unless {\sf P = NP} \cite{RazS97}. 

In the {\em metric} \ufl\ 
the connection  costs satisfy the following variant of the triangle inequality:
$c_{ij} \leq c_{ij'} + c_{i'j'} + c_{i'j}$ for any $i, i'\in F$ and $j, j' \in C.$
The first constant-factor approximation of $3.16$ was given by Shmoys, 
Tardos and Ardaal \cite{ShmoysTA97}. Over the years,
\ufl\ has served as a prime testbed for several techniques in
the design of approximation algorithms (see, e.g., 
\cite{ShmoysWbook}). Among those techniques LP-based
methods, such as filtering, randomized rounding and the primal-dual method,
 have been particularly prominent and have yielded several improved bounds.
After a long series of papers 
the currently best approximation ratio
for metric \ufl\ is $1.488$ \cite{Li11}. 
Guha and Khuller \cite{GuhaK99} proved that there is 
no $\rho$-approximation algorithm for metric \ufl\ with 
$\rho < 1.463$ unless ${\sf NP} \subseteq {\sf
DTIME}(n^{O(\log \log n)})$ using Feige's hardness result for Set
Cover \cite{Feige98}.
Sviridenko (see \cite{Vygen05}) showed that the lower bound holds
unless {\sf P = NP.}
In this paper we focus on two generalizations of the metric \ufl:  the 
 {\em capacitated facility location (\cfl\/)} and
 the {\em lower-bounded facility location (\lbfl\/)} problems.
To our knowledge  the $1.463$ lower bound  is the only inapproximability result known
for these two as well.

\cfl\/ is the  generalization of metric \ufl\ where every facility $i$ has a
capacity $u_i$  that specifies  the maximum number of clients that may
be
assigned to $i.$ In {\em uniform} \cfl\ all facilities have
the same capacity $U.$  
  Finding an approximation algorithm for  \cfl\/ that uses a linear programming
lower bound, or even proving a constant integrality gap for an
efficient  LP
relaxation, are notorious open problems. Intriguingly, 
the  natural LP relaxations  have 
an unbounded integrality gap and
the only known  $O(1)$-approximation algorithms are
based on local search.  The currently best ratios 
for the non-uniform  and the uniform case are $5$ \cite{BansalGG12} and $3$  
\cite{AggarwalLBGGJ12} respectively. 
Compared to local search, relaxations have 
the distinct advantage that they provide, on an instance-by-instance basis,
a  concrete  lower
bound on the optimum. A small  gap between the
optimal integer and fractional solutions  could be exploited
to speed up an exact  computation. From the viewpoint of
approximation, 
comparing the LP optimum against the  solution  output by an
LP-based algorithm establishes 
a  guarantee than is at least as strong as the one established  a priori by  worst-case
analysis.
In contrast, when a  local search algorithm terminates, 
it is not at all clear what the lower
bound is. 
According to Shmoys and Williamson \cite{ShmoysWbook} devising 
a relaxation-based algorithm for \cfl\  is one of
the top $10$ open problems in approximation algorithms.

 \lbfl\ is in a sense the opposite problem to \cfl\ and 
was introduced independently by Karger and Minkoff \cite{KargerM00} and Guha et
al. \cite{GuhaMM00} in the context of network design problems with buy-at-bulk
features.  In an instance of \lbfl\ every facility $i$ comes with a  lower
bound $b_i$ 
which is the minimum number of clients that must be assigned
 to $i$  if we open it. In {\em uniform} \lbfl\ all the lower bounds
have the same value $B.$  \lbfl\ is even less well-understood than \cfl. 
 The first true approximation algorithm for the uniform case 
was given in \cite{Svitkina08} with a performance guarantee of
$448,$ which has been recently improved  to $82.6 $ \cite{AhmadianS12}. 
Interestingly, the \lbfl\ algorithms
from \cite{Svitkina08,AhmadianS12}  both use a \cfl\ algorithm on a
suitable instance as a subroutine.

Studying the limits of linear programming  relaxations for 
intractable problems is
an active area of research. 
The inherent challenge in this work is  to characterize collections
of LPs 
for which no explicit description is known. 
The
main  direction is to lower bound 
the size of extended formulations
that express optimal or near-optimal solutions,
or to determine
the  integrality gap of  comprehensive families  of
valid LP relaxations. 
  Yannakakis \cite{Yannakakis91}  proved early on that 
any 
symmetric linear relaxation that expresses the Traveling Salesman
polytope must have exponential size. 
 Recent results lift the symmetry assumption 
\cite{FioriniMPTD12}   and  characterize  the size of LPs that
express approximate solutions  to Clique \cite{BraunFPS12,BravermanM13}.

A lot of effort has been  devoted to 
understanding the quality of relaxations obtained by an iterative 
lift-and-project procedure. Such procedures define hierarchies of 
successively  stronger relaxations, where  valid inequalities are added at 
each level. After at most $n$ rounds, where $n$ is the number of 
variables, all 
valid inequalities have been  added and thus the integer polytope is
expressed. 
Relevant methods  include those  developed  by Balas et
al. \cite{BalasCC93},  Lov\'{a}sz and Schrijver \cite{LovaszS91} (for
linear and semidefinite programs, denoted respectively LS and LS$_{+}$), 
Sherali and Adams \cite{SheraliA90} (denoted SA),    Lassere  \cite{Lasserre01}
(for semidefinite programs). 
See
\cite{laurent} for a comparative discussion.  
Exploring the
structure of the successive relaxations  in a hierarchy 
is of intrinsic interest in polyhedral
combinatorics. The seminal work of Arora et al.
\cite{AroraBL02,AroraBLT06} introduced the use of 
hierarchies as a model of computation for obtaining  hardness of
approximation results. 
Proving that the integrality 
gap for a problem remains large  after many 
rounds   is an unconditional  guarantee against the  class of
sophisticated relaxations
obtained through the specific procedure. 
Despite the amount of effort, the effect on approximation  of the
 different hierarchies is not well-understood. Vertex Cover is a
  prominent case among the problems studied early on. 
Arora et al. \cite{AroraBLT06} showed that after
$\Omega(\log n)$ rounds of the LS procedure the
integrality gap for Vertex Cover remains $2-\epsilon.$ Schoenebeck at
al. \cite{SchoenebeckTT07} proved that the $2-\epsilon$ gap survives
for $\Omega(n)$ rounds of  LS. The body of 
work on   hierarchies keeps growing, see, e.g., 
\cite{GeorgiouMPT10,FernandezdlVKM07,SchoenebeckTT07b,CharikarMM09,MathieuS09,Tulsiani09,BhaskaraCVGZ12}. Some
of
those results examine  semidefinite relaxations, a direction we do not
 pursue here. 

Investigating the strength  of linear relaxations  is driven by the 
perception of LP-based algorithms as a  powerful paradigm for designing
approximation algorithms. 
Recent work inspired from 
\cite{Raghavendra08} 
explores a complementary direction: how to translate integrality gaps for LPs into {\sf UGC}-based
hardness of approximation results \cite{KumarMTV11}.

In  recent work,  improved  approximations  were given  for
$k$-median   \cite{LiS13}      and   capacitated   $k$-center
\cite{CyganHK12,AnBS13},  problems closely  related to
facility location.  For both, the improvements are obtained 
by LP-based techniques
that  include preprocessing  of the  instance in  order to  defeat the
known integrality gap. For $k$-median, the authors of \cite{LiS13}
state that their 
$(1+\sqrt{3} + \epsilon)$-approximation algorithm   can
be converted  to a  rounding algorithm on an 
 $O(\frac{1}{\epsilon^2})$-level LP  in the SA hierarchy.  In \cite{AnBS13}
the authors raise as an important question  to understand  the
power  of lift-and-project methods  for capacitated  location
problems, including  whether they 
automatically capture such preprocessing steps.

\subsection{Our results}
In this paper we give the first characterization  of 
 the    integrality gap  for families of 
 linear relaxations  for metric \cfl\ and \lbfl\  and thus provide the
 first results on the open problem of \cite{ShmoysWbook}. 
We study two substantial families of
strengthened LPs. 
Our derivations make  no time-complexity assumptions 
and are thus unconditional.  We also partially  answer 
the question of \cite{AnBS13} for \cfl:  if there is an efficient relaxation, 
it is not captured even after a linear number of rounds in
the LS hierarchy.

We first introduce    
the family of  proper relaxations
which are ``configuration''-like linear programs.
The so-called Configuration LP was  used by 
Bansal and Sviridenko 
\cite{BansalS06} for the Santa Claus problem and has yielded valuable insights
and improved results, mostly
for resource allocation  and scheduling problems
(e.g., \cite{Svensson11, AsadpourFS11,
HaeuplerSS11, SviridenkoW13}).
A configuration in a scheduling setting usually refers to a set of
jobs $J_i$ that
can be feasibly assigned to a given  machine $i$ while meeting some load
constraint. A typical Configuration LP has therefore    an exponential number of
variables.    
The analogue of the Configuration
LP for facility location already exists (see, e.g., \cite{JainMMSV03}): it is the well-known 
{\em star relaxation,} in which every  variable corresponds to a {\em star,} i.e., a
facility $f$ and a set
of clients assigned to $f.$ 
The natural star relaxation 
for   \cfl\ and \lbfl\  is  equivalent to the standard LPs 
so it has an unbounded integrality gap.  
We take the idea of a star  to its logical  extreme by 
introducing  classes. 
A {\em class} consists of a set with an arbitrary number of facilities and clients
together with an 
assignment of each client to a facility in the set. 
The definition
of a class can thus vary from simple, ``local'' 
assignments of  some clients to  a  single facility, to  
 ``global'' snapshots  of the instance that 
express  the assignment  of many
clients to a large set of  facilities.  
A {\em proper relaxation} for an instance is defined by a collection
$\mathcal{C}$ of classes and a decision variable for every class. 
We allow great freedom in 
defining  ${\cal C}\colon$ 
the only requirement   is that the resulting
formulation is symmetric and valid. 
The {\em complexity $\alpha$} of a proper relaxation is the maximum fraction
 of the 
available facilities that is contained in a class of $\mathcal{C}.$
Proper LPs are stronger than the standard relaxation. 
One can
construct infinite 
families  of instances 
where, by  increasing the complexity in a proper relaxation, one cuts off  more
and more fractional solutions.  
In this sense, all proper LPs for an instance can
be  thought of as forming  a (non-strict)    hierarchy, with the star
relaxation at the lowest level. 
We characterize their behavior  through a threshold result: 
anything less than maximum complexity results in unboundedness of
the integrality gap, while there are
proper relaxations of maximum complexity with an integrality gap of
$1$.  In the latter, $\mathcal{C}$ corresponds simply to 
the set of all integer feasible solutions. 
Our precise results 
are the following theorems. Their proofs rely 
on the symmetry of the formulations. 

\begin{theorem}
\label{theorem:proper}
Every proper relaxation for uniform \lbfl\ with complexity $\alpha < 1$ has an
unbounded integrality gap of $\Omega(n)$ where $n$ is the number of
facilities. There exist proper relaxations of complexity $\alpha=1$ that have an
integrality gap of $1.$ 
\end{theorem}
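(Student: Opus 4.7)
The proof proposal naturally splits into two parts, matching the two statements of the theorem.

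For the positive direction ($\alpha=1$ gives gap $1$), I would take $\mathcal{C}$ to be the collection of all feasible integer solutions of the instance, each one encoded as a single class whose facility set is the set of opened facilities and whose assignment is the integer client-to-facility map. Every such class has complexity at most $1$ by construction. The LP $\min \sum_{S} c(S)\, z_S$ subject to $\sum_S z_S = 1$, $z \geq 0$, then expresses exactly the convex hull of integer solutions, so its integrality gap is $1$.

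For the negative direction, the plan is to combine a symmetric unbounded-gap instance for \lbfl\ with a symmetrization argument. I would exhibit a uniform \lbfl\ instance $\mathcal{I}_n$ on $n$ facilities whose standard LP optimum $L$ is a factor $\Omega(n)$ below the integer optimum, and whose automorphism group $\Gamma$ acts transitively on facilities. For any valid symmetric proper relaxation $\mathcal{L}$ of complexity $\alpha<1$, the goal is to produce a feasible LP solution of cost $O(L)$; combined with the integer optimum of $\Omega(nL)$, this gives gap $\Omega(n)$.

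The core of the argument exploits the $\Gamma$-invariance of $\mathcal{L}$'s constraints. Averaging any feasible $z$ over $\Gamma$ yields a $\Gamma$-symmetric feasible solution of the same cost, and its aggregate $y_i = \sum_{S\ni i}z_S$, $x_{ij} = \sum_{S\ni(i,j)} z_S$ is a feasible (and symmetric) standard LP solution of equal cost. Conversely, starting from the (symmetric) optimal standard LP solution $(y^*, x^*)$, I would construct a $\Gamma$-symmetric $z$ aggregating to $(y^*, x^*)$, of cost $L$. The complexity bound $\alpha<1$ enters crucially here: every admissible class omits at least one facility, so its $\Gamma$-orbit covers each facility with frequency exactly $|\mathrm{fac}(S)|/n \leq \alpha$, allowing a $\Gamma$-symmetric combination of classes to realize the uniform spread $y_i^* = 1/n$ without a single class needing to encompass the entire instance. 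At $\alpha = 1$ precisely, a single class \emph{can} encompass the whole instance, which is exactly what lets the positive-direction construction collapse the gap to~$1$, explaining the sharpness of the threshold.

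The main obstacle will be two-fold: (i) constructing $\mathcal{I}_n$ with both a genuine $\Omega(n)$ gap for the standard LP and enough symmetry for the averaging argument (highly symmetric \lbfl\ instances often have standard LP value equal to the integer value, so the instance must be delicately balanced, e.g.\ by using a large but imprimitive $\Gamma$ that permutes facilities within clusters only); and (ii) exhibiting, for an \emph{arbitrary} valid $\mathcal{C}$ of complexity $<1$, a symmetric decomposition of $(y^*, x^*)$ into classes in $\mathcal{C}$. The second point is where the validity of $\mathcal{L}$ must be leveraged to guarantee that $\mathcal{C}$ contains enough ``small-footprint'' classes to support the required spread while still respecting the class-level lower-bound constraint.
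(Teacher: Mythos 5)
Your positive direction ($\alpha=1$, gap $1$) is correct and coincides with the paper's Theorem 3.1: take $\mathcal{C}$ to be the set of characteristic vectors of feasible integer solutions. One small technical point: the constellation LP's constraints are per-client and per-facility (not a single $\sum_S z_S = 1$), but because every class in this $\mathcal{C}$ assigns all clients, any single client's equality constraint already forces $\sum_S z_S = 1$, so your LP is equivalent.

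For the negative direction, however, your plan diverges from the paper in a way that hides the real difficulty, and as written it would not go through. You propose to take the standard LP's \emph{optimal} solution $(y^*, x^*)$ and realize it as the projection of a $\Gamma$-symmetric feasible $z$ in the proper relaxation. But a proper relaxation is in general \emph{strictly stronger} than the standard LP: its point, and the entire content of the theorem, is that some fractional $(y^*,x^*)$ that are feasible for the standard LP cannot be lifted to a feasible $z$ (the paper's Example 3.1 exhibits exactly this). So you cannot simply assert that the standard LP optimum has a preimage. The paper instead constructs a \emph{custom} bad fractional point that is designed from the start to be realizable. The mechanism is the following, and it is the piece you list as obstacle (ii) without supplying: \textbf{validity forces a useful class into $\mathcal{C}$.} The paper considers an integral solution $s$ opening $n$ of the $n+1$ facilities with exactly $B=n^2$ clients on each; since the relaxation is valid, some $z$ in its polytope projects to the $0$-$1$ vector of $s$, and a counting argument shows that every class in the support of $z$ assigns exactly $B$ clients to each of its facilities. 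By the complexity bound $\alpha<1$, every such class touches at most $n-c$ facilities with $c\ge 2$. By symmetry ($P_1$), the entire orbit of one such class $cl_0$ lies in $\mathcal{C}$. The bad fractional solution is then built by distributing measure over this orbit in two carefully balanced "rounds" (type $A$ and type $B$), and the final step chooses costs (a simplex for cheap facilities and two far-away costly ones) so that this solution has cost $O(nD)$ while any integral solution costs $\Omega(n^2D)$. Your proposal correctly senses that symmetry plus $\alpha<1$ is what makes small-footprint classes available, but without the validity-driven extraction of $cl_0$ and the explicit two-round construction, the argument that the needed $z$ exists is missing rather than deferred.
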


\begin{theorem}
\label{theorem:proper2}
Every proper relaxation for uniform \cfl\ with complexity $\alpha < 1$ has an
unbounded integrality gap of $\Omega(n^2)$ where $n$ is the number of
facilities. There exist proper relaxations of complexity $\alpha=1$ that have an
integrality gap of $1.$ 
\end{theorem}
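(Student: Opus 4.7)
As in Theorem~\ref{theorem:proper}, the statement has two directions. For the easy ``$\alpha=1$'' part, the plan is to take $\mathcal{C}$ to consist of exactly one class per integer feasible solution, namely the class containing every facility, every client, and the corresponding capacity-respecting assignment. The resulting relaxation has complexity exactly $1$ and its feasible region is the convex hull of integer solutions, so its integrality gap is $1$.

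For the hard ``$\alpha<1$'' direction, the plan is to exhibit, for each $n$, a uniform \cfl\ instance $I_n$ on $n$ facilities with capacity $U$ and with a rich symmetry group $G\le S_n$ acting on the facilities (and compatibly on the clients), together with a $G$-invariant fractional vector $\bar x$ that is feasible for every proper relaxation of $I_n$ of complexity less than $1$ and has cost a $\Theta(n^2)$ factor smaller than the cost of any integer feasible solution. The instance should exploit the capacity constraint to amplify the $\Omega(n)$ gap of the \lbfl\ construction of Theorem~\ref{theorem:proper}: when the total demand is chosen to be $\Theta(Un)$ and the per-facility opening costs are $\Theta(n)$, any integer solution must open $\Omega(n)$ facilities and so pays $\Omega(n^2)$, while the LP will be allowed to open each facility at extent $O(1/n)$ and pay only $O(1)$.

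The core technique is the standard group-averaging symmetrization already used in the proof of Theorem~\ref{theorem:proper}. Given any feasible $x$, I would form
\[
\bar x(C)\;:=\;\frac{1}{|G|}\sum_{g\in G} x(gC),
\]
which by symmetry and validity of the formulation is again feasible with the same objective value; in particular classes in the same $G$-orbit receive equal weight. Here the complexity hypothesis $\alpha<1$ enters crucially: every class $C\in\mathcal C$ omits at least one facility, so its $G$-orbit already spreads its support uniformly across all $n$ facilities. I would then define $\bar x$ directly as a low-intensity symmetric opening, check the per-client demand constraints, the per-class capacity constraints, and the linking constraints between class variables and individual facility/assignment variables, using the fact that each such constraint, being symmetric and touching fewer than $n$ facilities, only sees aggregate per-orbit quantities.

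The main obstacle is the last step: verifying feasibility of the hand-designed $\bar x$ against the constraints of an \emph{arbitrary} proper relaxation, whose class structure we do not control. The key lemma must say that, after symmetrization, the only information a class $C$ carries is the multiset of facility-client ``types'' appearing in $C$, and that any valid constraint on these types is automatically met by uniformly scaling down a single feasible integer configuration. The second difficulty, specific to \cfl, is calibrating demand, capacity, and opening cost so that the integer lower bound of $\Omega(n^2)$ is realized, i.e.\ integrality forces both many open facilities \emph{and} an expensive cost per opening, while the symmetric fractional solution pays only a $1/n^2$ share of it.
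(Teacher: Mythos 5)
Your treatment of the $\alpha=1$ direction is correct and matches the paper's Theorem~\ref{thm:gap1}. However, the plan for the $\alpha<1$ direction has two genuine gaps, one of which makes the proposed construction infeasible as stated.

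First, the proposed cost/demand calibration does not produce a gap. With capacity $U$ and total demand $\Theta(Un)$, the capacity constraint $\sum_j x_{ij}\le Uy_i$ forces $\sum_i y_i \ge |C|/U = \Theta(n)$ in \emph{every} feasible fractional solution, so if each open facility costs $\Theta(n)$ the LP also pays $\Omega(n^2)$ and the gap is $O(1)$, not $\Omega(n^2)$. The paper avoids this by making only a \emph{single} facility costly (cost $1$, all others free, all distances $0$) and choosing $|C|=(n-1)U+1$ with $U=n^2$; then every integer solution must open the costly facility at $y_n=1$, while the fractional solution can set $y_n=1/n^2$ (just enough to take the one ``extra'' unit of demand above $(n-1)U$), giving the $n^2$ gap. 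The gap does not come from the total number of opened facilities but from how small the single fractional $y_n$ can be driven, and $U=n^2$ is what controls that.

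Second, group-averaging a feasible $x$ does not by itself yield a \emph{cheap} feasible solution: averaging preserves cost, so if you start from an integer solution you end up paying integer cost. The mechanism you are missing is the use of \emph{validity}: pick the integer solution $s$ that opens all $n$ facilities, feed it to the unknown proper relaxation, and observe that $s$ must be expressible as a positive combination of classes whose per-facility assignments are contained in those of $s$. A short averaging argument on ``densities'' then shows that the support must contain a class $cl_0$ not touching facility $n$ that saturates capacity (density $U$) on all its $t\le n-1$ facilities. Only then does symmetry ($P_1$) come into play: you take \emph{all} permuted copies of $cl_0$, distribute small measure over them in two rounds (one over permutations of all $n$ facilities, one over permutations of the first $n-1$ only), and tune the two total measures to land on the target projection $y^*_i=1$ for $i<n$, $y^*_n=1/n^2$. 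Your ``key lemma'' about multisets of types and uniformly scaling down an integer configuration is not a substitute for this: the bad solution is not a scaled copy of any integer solution, and without first extracting $cl_0$ from validity you have no handle on what classes the adversary's $\mathcal{C}$ actually contains.
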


The second family we investigate consists of  linear  relaxations resulting from
repeated applications of the 
LS procedure starting from the natural LP relaxation for  \cfl\/.
We show that a specific bad solution with unbounded integrality gap 
 survives $\Omega(n)$ rounds of LS.
The solution is defined on an instance $I$ 
 with $n$ facilities and $m=\Theta(n^4)$ 
clients.

It is  well-known that the LS procedure 
 extends to mixed 0-1  programs \cite{LovaszS91,BalasCC93} such as  \cfl
 \ with general client demands.  In that case the convex hull of the
 mixed-integer feasible set is known to be obtained the latest at the $p$th
 level of the LS hierarchy, where $p$ is the number of binary variables
 (\cite{LovaszS91},  \cite[Theorem~2.6]{BalasCC93}). For  \cfl,   
$p$ equals the number $n$ of facilities. 
In our instance 
$I,$ the clients have unit demands and as such the integer and
mixed-integer versions of the problem are 
equivalent. 
 In the lifting procedure, we treat  both the facility opening and the assignment
 variables as binary. 
It is easy to see that in  every round we obtain a polytope which is at least as tight as the one obtained 
when only the facility-opening variables are binary. 
 Therefore  our lower bound of $\Omega(n)$  applies also 
 to the mixed-integer LS lifting procedure and is linear in the parameter $p.$ 
Our proof is via protection matrices \cite{LovaszS91}. 
Using a simple reformulation of LS we give an explicit, fully
constructive definition of the matrices generated at each level that
witness the survival of the bad  fractional solution. 
The result is the following. 

\begin{theorem}
\label{theorem:ls-cfl}
For every sufficiently large $n,$  
there is an instance of uniform \cfl \   with $n$ facilities and $\Theta(n^4)$ clients 
so that the integrality gap 
 after $\Omega(n)$ rounds of the LS procedure is  $\Omega(n).$  
\end{theorem}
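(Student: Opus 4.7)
My plan is to construct an explicit uniform \cfl\ instance $I$ with $n$ facilities and $m=\Theta(n^{4})$ clients together with a fractional solution $(x^{*},y^{*})$ whose cost is an $\Omega(n)$ factor below the integer optimum, and to certify that $(x^{*},y^{*}) \in N^{\Omega(n)}(K)$ by inductively constructing symmetric protection matrices, where $K$ denotes the natural LP relaxation.

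The instance is designed so that all $n$ facilities are interchangeable by permutation: uniform capacity $U=\Theta(n^{3})$, uniform opening cost $f$, and $m=\Theta(n^{4})$ clients placed symmetrically around the facilities so that the cheapest integer assignment must open a constant fraction of them, while the symmetric fractional solution with $y^{*}_{i}=\Theta(1/n)$ opens each facility only slightly and saturates the capacity constraint $\sum_{j}x_{ij}\le U y_i$. Comparing $\sum_{i} f\, y^{*}_{i}=\Theta(f)$ to the integer opening cost $\Theta(nf)$, and bounding the fractional connection cost using the symmetric client placement, gives the $\Omega(n)$ integrality gap for the basic LP.

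For the LS argument, I would use the standard reformulation: to place $(x^{*},y^{*})$ in $N(K)$ it suffices to exhibit, for every $0/1$ variable $z$ of the LP, two points $p^{z,0},p^{z,1}\in K$ with $p^{z,b}_{z}=b$ and $(x^{*},y^{*}) = (1-z^{*})p^{z,0}+z^{*}p^{z,1}$; this data assembles into an explicit protection matrix $Y^{(1)}$ witnessing $(x^{*},y^{*})\in N(K)$. Using the automorphism group of $I$, I would produce $p^{z,1}$ by conditioning on ``$z=1$'' and redistributing the opening and demand of the involved facility uniformly across the remaining $n{-}1$ facilities, and define $p^{z,0}$ analogously by closing; both operations preserve the constraints $x_{ij}\le y_i$, $\sum_i x_{ij}=1$, and $\sum_j x_{ij}\le U y_i$ because of the slack built into the symmetric instance. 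For higher levels $k$, I would apply the same construction recursively on the \emph{residual} instance $I_{S,T}$ in which a set $S$ of facilities has been forced open and a set $T$ forced closed, with induction hypothesis that $I_{S,T}$ admits such a symmetric construction whenever $|S|+|T|\le ck$ for a small constant $c>0$. Since the residual instance keeps $\Omega(n)$ facilities and $\Omega(n^{4})$ residual capacity, the scheme iterates for $\Omega(n)$ levels.

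The main obstacle is verifying the protection-matrix conditions at every level of the recursion: simultaneously for every choice of conditioned variable and every fixing $(S,T)$ with $|S|+|T|=O(n)$, the conditional solution on $I_{S,T}$ must remain feasible for $K$, and in particular the capacity constraint must hold with the right slack after $\Omega(n)$ capacity-consuming fixings (open facilities) and capacity-releasing fixings (closed facilities whose demand is absorbed by the rest). Getting this quantitative control is what pins down the parameter choice $m=\Theta(n^{4})$, $U=\Theta(n^{3})$, and the linear-in-$n$ bound on the number of surviving LS rounds; exploiting the permutation symmetry of $I$ to restrict the space of candidate matrices is what makes the explicit construction tractable.
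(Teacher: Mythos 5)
There is a genuine gap, and it starts at the base instance. Your symmetric design — $n$ facilities with the same opening cost $f$, all opened fractionally to $y_i^* = \Theta(1/n)$, capacity $U = \Theta(n^3)$, and $m = \Theta(n^4)$ clients — produces a point that is not even in the natural LP polytope $K$. The capacity constraints force $\sum_{i,j} x_{ij} \le U\sum_i y_i^* = \Theta(n^3)$, whereas the assignment constraints $\sum_i x_{ij} = 1$ force $\sum_{i,j} x_{ij} = m = \Theta(n^4)$. More generally, with a uniform opening cost any \emph{feasible} fractional solution must satisfy $\sum_i y_i \ge m/U = \Theta(n)$, so the fractional opening cost is the same order as the integer one and no $\Omega(n)$ gap can appear from opening costs alone. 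The paper sidesteps this precisely by breaking the symmetry between facilities: it uses $n$ ``cheap'' facilities with opening cost $0$ opened fully ($y_i = 1$) to absorb almost all of the $nU+1$ units of demand, plus $\Theta(n)$ ``costly'' facilities with opening cost $1$ opened to only $\Theta(1/n^2)$ each, and zero distances. Since $nU$ is one unit short of the demand, every integer solution must open a costly facility (cost $\ge 1$), while the fractional cost is $\Theta(1/n)$. Your fully symmetric instance has no analogue of that ``one extra client'' mechanism and cannot create the gap.

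Even granting a correct base instance, the recursion as written is not enough. Conditioning a protection matrix on $z = 1$ or $z = 0$ is not the same as integrally fixing a facility ``open'' or ``closed'' and recursing on a residual instance: the set of conditioned variables includes all $\Theta(n\cdot m)$ assignment variables $x_{ij}$, and conditioning on one of them propagates non-integrally to many other coordinates (you must redistribute demand while preserving $\sum_i x_{ij}=1$, $x_{ij}\le y_i$, and the capacity constraint simultaneously, and the resulting point is generally still fractional). You also have to verify the symmetry of the protection matrix at every node, which couples the choices made across different conditioned coordinates. The paper makes this concrete by writing explicit type-1/type-2 witness vectors for every variable, tracking a small set of quantitative invariants on how far each coordinate can drift after $k$ rounds (which is exactly where the parameters $U=n^3$, $a=n^{-2}$, $b=H/n^2$ get pinned down), and checking the symmetry condition coordinate-pair by coordinate-pair. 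Your sketch acknowledges this as ``the main obstacle'' but does not supply the mechanism, and the underlying instance you would run it on is infeasible.
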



\subsection{Other related work}
Koropulu et al. \cite{KoropuluPR00} gave the first constant-factor approximation
algorithm for uniform \cfl. Chudak and Williamson
\cite{ChudakW05} obtained a ratio of  $6,$ subsequently  improved to $5.83$
\cite{CharikarG99}.
 P\'{a}l et al. \cite{PalTW01} gave
the first constant-factor approximation for non-uniform \cfl. This was improved
by Mahdian and P\'{a}l \cite{MahdianP03} and Zhang et al. \cite{ZhangCY04} to a
$5.83$-approximation algorithm. As mentioned, the currently best guarantee is 5,
due  to Bansal et al. \cite{BansalGG12}. All these approaches use local search. 

Levi et al.  \cite{LeviSS12} gave a  5-approximation 
algorithm, based on the standard LP, for the special case of \cfl\ where
all facilities have the same opening cost. 
In the {\em soft-capacitated} facility location problem one is allowed to open
multiple copies of the same facility. 
Work on this problem 
includes \cite{ShmoysTA97,ChudakS99, ChudakW05, JainV01}.  
As observed in \cite{JainMMSV03}  a $\rho$-approximation for  \ufl\ yields a
$2\rho$-approximation for the case with soft capacities. Mahdian, Ye and Zhang
\cite{MahdianYZ03} noticed a sharper tradeoff and obtained a
$2$-approximation. A tradeoff between the blowup  of capacities and
the cost approximation for \cfl\ was studied in
\cite{AbramsMMP02}. Bicriteria approximations for \lbfl\  appeared in
\cite{KargerM00,GuhaMM00}. 

For hard capacities and general demands the feasiblity of the
{\em unsplittable} case, where the demand of each client has to be assigned to a
single facility, is NP-complete, as {\sc Partition} reduces to it. 
Bateni and Hajiaghayi \cite{BateniH12} considered the unsplittable problem 
with an $(1+\epsilon)$ violation of the capacities and obtained an
$O(\log n)$-approximation.

\bigskip

The outline of this paper is as follows. In Section~\ref{sec:prel} we give
preliminary definitions and in Section~\ref{sec:firstfamily} we introduce the proper
relaxations. The proofs of Theorems \ref{theorem:proper}, \ref{theorem:proper2} are  in Sections
\ref{sec:proof_theorem_p1} and \ref{sec:proof_theorem_p2} respectively. In Section~\ref{sec:ls} we present the
necessary background for the Lov\'{a}sz-Schrijver procedure. The proof of
Theorem~\ref{theorem:ls-cfl} is in Section~\ref{sec:ls_cfl}. We conclude with a 
discussion of our results in Section~\ref{sec:open}.

\section{Preliminaries}
\label{sec:prel}

Given an instance $I(F,C)$ of \cfl\ or \lbfl, we use $n, m$ to denote $|F|$ and $|C|$
respectively.
We will show our negative results for uniform, integer, capacities and lower
bounds. Each client can be thought of as representing one unit of demand.
 It  is  well-known  that in such a setting  the  splittable  and
unsplittable versions  of the problem are equivalent. 
The following 0-1  IP is the standard  valid formulation of uncapacitated 
facility location with unsplittable unit demands.

\begin{align}
\min \sum_{i \in F} f_iy_i + \sum_{i \in F}\sum_{j
  \in C} x_{ij}c_{ij}   & &    \label{obj} \\   
x_{ij} \leq y_i   &&  \forall i \in F, \forall j \in C   \label{x<y}  \\
\sum_{i \in F} x_{ij} =1   &&   \forall j \in C   \label{eq}  \\
x_{ij} \in \{0,1\} &&  \forall i \in F, \forall j \in C  \label{eq:intx} \\
y_i \in \{0,1\}   && \forall i \in F   \label{eq:inty}
\end{align}

We give two well-known relaxations  for the problem.  The first one is the
natural  LP resulting from the above IP by replacing the integrality constraints
 with: 

\begin{align}
0 \leq x_{ij} \leq 1 &&  \forall i \in F, \forall j \in C  \label{1>x>0}  \\
0 \leq y_i \leq 1  && \forall i \in F    \label{eq:nni}
\end{align} 

To obtain the standard LP relaxations for 
uniform \cfl\ and \lbfl\ the following constraints are added
respectively:

\begin{align}
\sum_ {j} x_{ij} \leq U y_i    && \forall i \in F   \label{sat1}\\
\sum_ {j} x_{ij} \geq By_i    && \forall i \in F   \label{sat2}
\end{align} 

In the rest of the paper we slightly abuse terminology by 
using  the term {\em (LP-classic)} for both LPs. It will be clear from the context to
which problem we refer (\cfl\  or \lbfl). 

The second well-known LP  is the star relaxation.
A {\em star} is a set consisting of some  clients and
one facility. Let $\mathcal{S}$ be a set of stars. For a star  $s\in \mathcal{S},$ let  $x_s$ be an indicator variable denoting
whether  $s$ is picked.   The cost  $c_s$ of star $s$   is equal
to the opening cost of the corresponding facility plus the cost of
connecting the star's clients to it. 

\begin{align}
\min \sum_{s} c_sx_{s}   &&   \tag{LP-star} \\
\sum_{s \ni  j } x_s = 1   && \forall j \in C  \label{star:client} \\
\sum_{s \ni  i } x_s \leq  1   && \forall i \in F \label{star:facility} \\ 
x_s \geq 0       &&  \text{for all stars } s \in \mathcal{S} \label{star:nn} 
\end{align} 

Defining $\mathcal{S}$ as the set of all stars $s$ where  the total number of the clients in $s$ 
is at most the
capacity $U$ (at least the bound $B$),  we get corresponding relaxations for
  \cfl\ (\lbfl).
In the rest of the paper we  slightly abuse terminology by 
using    {\em (LP-star)}  to refer to the
star relaxation for the problem we examine each time (\cfl\  or \lbfl).

It is well known  that for both \cfl\ and \lbfl, (LP-classic) and (LP-star) are
equivalent, therefore (LP-star) can be solved in  polynomial time.
 For
the sake of completion we include the relevant 
Lemma~\ref{lemma:ap-classic} in the Appendix.

\section{Proper Relaxations}     \label{sec:firstfamily}

In this section  we introduce  the family of  proper relaxations.

Consider a   $0$-$1$ $(y,x)$ vector on the set of
 variables  of the  classic  relaxation (LP-classic)
such that $y_i \geq x_{ij}$ for all $i \in F, j \in C.$  The meaning  of
 $y_i=1$ is the usual one that
 we open facility $i.$  Likewise, the meaning of $x_{ij}=1$ is
 that we assign client $j$ to facility $i$. We call such a vector a 
 \emph{class}. Note that  the definition is quite general  and a class
 can be defined  from any such $(y,x)$, which may or  may not have a
 relationship  to a  feasible  integer solution.  
Classes generalize the notion of a star. 
We  denote the  vector
 corresponding  to a  class $cl$  as $(y,x)_{cl}$.  We  associate with
 class  $cl$   the  {\em cost   of  the  class}   $c_{cl}=\sum_{i \mid
   y_i=1  \in
   (y,x)_{cl}} f_i+  \sum_{i,j \mid  x_{ij}=1 \in  (y,x)_{cl}}
 c_{ij}$. Let the {\em assignments of class} $cl$ be defined as 
 $Agn_{cl}= \{(i,j) \in F\times C \mid  x_{ij}=1$ in $(y,x)_{cl}\}$.
We say that $cl$ {\em contains}  facility $i,$ if the corresponding entry
$y_i$  in the vector $(y,x)_{cl}$ equals $1.$  
The set of facilities contained in $cl$ is denoted by 
 $F(cl).$

\begin{definition}  {\bf (Constellation LPs)} \label{def:constell}
Let $\mathcal{C}$ be a set of classes defined for an instance $I(F,C)$
of
\lbfl. Let $x_{cl}$  be a variable associated with class $cl \in
\mathcal{C}.$
The {\em  constellation LP with class set}
  $\mathcal{C},$ denoted LP($\mathcal{C}$),  is defined as

\begin{align*} 
\min \sum_{cl \in \mathcal{C}} c_{cl}x_{cl} &&  \tag{LP($\mathcal{C}$)} \\
\sum\nolimits_{{cl} \mid  \exists i:(i,j) \in  Agn_{cl}}
x_{cl}=1 && \forall j \in C   \\
\sum\nolimits_{{cl} \mid  i  \in  F({cl})} x_{cl} \leq 1 &&
\forall i \in F  \\
 x_{cl} \geq 0  && \forall cl \in \mathcal{C} 
\end{align*} 

\end{definition}
In what follows we will refer simply to a constellation LP when
$\mathcal{C}$ is implied from the context.  
We define the \emph{projection} $s'=(y^{s'}, x^{s'})$ of solution $s=(x^s_{cl})_{cl \in \mathcal{C}}$ 
of a  constellation LP to the classic facility opening
and assignment variables $(y,x)$ as $y_i^{s'}=\sum_{cl|i\in cl}x_{cl}^{s}$ and 
$x_{ij}^{s'}=\sum_{cl| (i,j) \in Agn_{cl}}x_{cl}^{s}$.

We will restrict our attention to  constellation LPs that satisfy a  natural property:   the LP is symmetric
 with respect to the clients and  the  facilities. 
The fact  that all facilities have the  same capacity / lower bound and all
clients have unit demand makes  this  property quite sound. 
For a class   $cl$ and
$f_1: \{1,...,n \} \rightarrow \{1,...,n \}$ a permutation of the facilities, we denote by $cl_{f_1}$
the class resulting by exchanging  for all $k, j$ the values  of the $y_{k}$  and
$x_{kj}$ coordinates 
of $(y,x)_{cl}$  with   the value of  the $y_{f_1(k)}$  and $x_{f_1(k)j}$ coordinates of
$(y,x)_{cl}$. Similarly,  for $f_2: \{1,...,  m\} \rightarrow \{1,...,
m\}$ a permutation of the clients, we denote  by $cl_{f_2}$ the class resulting  by exchanging for
every $i$  the value of  the $x_{ik}$ coordinate of  $(y,x)_{cl}$ with
 the value of the $x_{if_2(k)}$ coordinate of $(y,x)_{cl}$.

\begin{definition} {\bf ($P_1$: Symmetry)} \label{def:symmetry}
We  say  that property  $P_1$  holds for the constellation linear program LP($\mathcal{C}$)   if  the
following is  true: let $\phi  (n)$ be any  permutation of $F$  and $\mu
(m)$ any permutation of $C$.
 Then, for every  class $cl \in \mathcal{C},$  $cl_{\phi}$ 
and $cl_{\mu}$ are also in $\mathcal{C}.$
\end{definition}

The second property we require is the obvious one that the 
relaxation  is {\em valid,} i.e., the projection of its  feasible region to 
$(y,x)$ contains all the characteristic vectors of the feasible integer solutions of the instance.

\begin{definition}  {\bf (Proper Relaxations)}  \label{def:proper} 
We call {\em proper relaxation}  for \cfl\ (\lbfl\/)  a constellation LP
 that is valid and satisfies property $P_1.$ 
\end{definition}

Relaxation (LP-star) is
obviously a proper relaxation, while (LP-classic) is equivalent to
(LP-star). Therefore proper relaxations generalize the known natural
relaxations for \cfl\ and \lbfl.

\subsection{Complexity of proper relaxations}
\label{proper proof}

Our main result on proper relaxations is that     proper LPs that  are not
``complex'' enough have an unbounded integrality gap while those  that
are sufficiently ``complex'' have an integrality gap of $1.$  To that end, we
define  the complexity  of  a  proper LP 

\begin{definition}   \label{def:complexity}
Given an instance $I(F,C)$ of \cfl\ (\lbfl\/)
let $F'$ be  a 
maximum-cardinality set  of open facilities in an integral feasible
solution. The {\em complexity $\alpha$} of a  proper
relaxation $LP(\mathcal{C})$ for $I$ is
defined as the $\sup_{cl \in \mathcal{C}} \frac{|F(cl)|}{|F'|}.$ 
\end{definition}

Note that for \lbfl\ it is possible  to have a proper relaxation with complexity greater
than $1$. 
The  complexity of  a  proper LP  represents the  maximum
fraction of the  total number of feasibly openable  facilities that is
allowed in a single class. 
For   a  proper relaxation  $LP(\mathcal{C}),$ the  complexity
describes to what extent 
 classes in $\mathcal{C}$ consider the instance locally. 
A complexity of nearly $1$
means that there are classes that take into consideration almost the whole instance
at once, while a low complexity means that all classes consider
the assignments of a small fraction of the instance at a time.  
By increasing the complexity of a proper LP  for a given instance 
 we can produce strictly stronger 
proper relaxations. A simple example is given below.

\begin{example}\label{proper_str}
An increased complexity allows strictly stronger proper relaxations.
\end{example}

First we show how one can construct any integer solution using classes that open the
same number of facilities.
Consider an integer solution $s$ with opened facilities $1,...,t$. We will use the following classes 
in which exactly $r<t$ facilities are opened:
For any set of  $t$ consecutive classes in a cyclic ordering, namely $(1,...,r),(2,...,r+1),...,(t,...,r-1)$, define a class that opens those facilities and makes the same assignments to them 
as $s$. Then the integer solution is obtained  if for every $cl$ we set $x_{cl}=1/r$.
Observe that the latter solution is feasible for the proper relaxation.

We give a toy example showing that by increasing the complexity, we can
get strictly stronger relaxations. Consider an \lbfl\ instance with $4$ facilities $2$ sets $S_1,S_2$
of $13$ clients each and 2 sets $S_3,S_4$ of $9$ clients each and $B=10$.  For the star relaxation
(complexity $\alpha=1/4$ for this instance)
there is a feasible solution $\bar{s}$ whose projection to $(y,x)$
 is the following $(\bar{y},\bar{x})$: for facility $1,$ $\bar{y}_1=1$ and is assigned $S_1$ integrally, for facility $2,$ $\bar{y}_2=1$ and is assigned $S_2$ integrally, for facility $3,$ $\bar{y}_3=9/10$ and is assigned each client of $S_3$ with a fraction of $9/10$ and each of $S_4$ with $1/10$, and similarly for facility $4,$ $\bar{y}_4=9/10$ and is assigned
each client of $S_4$ with a fraction of $9/10$ and each of $S_3$ with $1/10$. Actually
a direct consequence of Theorem \ref{theorem:proper} is that for any proper relaxation of the same complexity as the star relaxation, the above solution is feasible.

Now consider the following proper relaxation: all characteristic vectors 
of integer solutions with at most
$3$ facilities are classes plus all the 
vectors of solutions with $4$ facilities restricted in any $3$ facilities ($3/4$ parts of integer solutions that open all four facilities).
It is symmetric and valid by the previous discussion and has complexity $\alpha=3/4$. 
In any assignment of values to the class variables  that projects to $(\bar{y},\bar{x})$ the following are true:
since classes with less than $3$ facilities are integer solutions, they contain
assignments for all the clients and thus if we were 
to use a non-zero measure of such classes we would make non-zero assignment 
that does not exist in the support of $(\bar{y},\bar{x})$.
 If we use
classes with exactly $3$ facilities, then exactly one of facilities $3,4$ must be present, 
since no integer solution opens them both with just the clients in $S_3 \cup S_4$. 
So we have to use at least $\bar{y_3}+\bar{y_4}=18/10$ measure of such classes. 
So each one of facilities $1,2$
must be present in more than a unit of classes, which would make the solution infeasible.

\medskip

If we allow the complexity to be $1$, then one can find
proper relaxations that have integrality gap equal to $1.$ 


\begin{theorem}  \label{thm:gap1}
There is a proper LP relaxation for \cfl\ (\lbfl\/)  that has complexity $1$ 
and whose projection to $(y,x)$ expresses the
integral polytope. 
\end{theorem}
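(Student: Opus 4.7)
The plan is to take the construction to its logical extreme by letting $\mathcal{C}$ be the set of classes obtained from the characteristic vectors of all feasible integer solutions of the instance. That is, for every feasible integer $(y,x)$ for uniform \cfl\ (resp.\ \lbfl), I include a class $cl$ with $(y,x)_{cl}=(y,x)$, whose cost $c_{cl}$ equals the objective value of that solution. I expect verifying the three requirements (complexity $1$, property $P_1$, and that the projection to $(y,x)$ equals the integer hull) to be essentially routine; the content is that each requirement follows from uniformity of the capacities/lower bounds and unit demands.

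First I would verify property $P_1$: since capacities (resp.\ lower bounds) are uniform and every client has unit demand, any permutation $\phi$ of $F$ or $\mu$ of $C$ maps a feasible integer solution to another feasible integer solution, so $cl_\phi, cl_\mu \in \mathcal{C}$. Validity is immediate because every integer feasible $(y,x)$ already appears as a class: setting $x_{cl}=1$ for that class and $0$ elsewhere gives a feasible solution of LP($\mathcal{C}$) projecting to $(y,x)$. For the complexity bound, by Definition~\ref{def:complexity}, $F'$ is a maximum-cardinality open set across all integer feasible solutions, so $|F(cl)|\le |F'|$ for every $cl\in\mathcal{C}$, with equality attained by the class corresponding to $F'$ itself; hence $\alpha=1$.

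The core step is showing that the projection to $(y,x)$ coincides with the convex hull of integer feasible solutions. For one direction, every projected solution is manifestly a convex combination: note that each class assigns \emph{every} client $j$ (to some facility) since classes are integer feasible, so summing the client equality constraint for any single $j$ gives $\sum_{cl\in\mathcal{C}} x_{cl} = 1$, turning the $x_{cl}$'s into a probability distribution. The formulas
\[
y_i^{proj} = \sum_{cl\,:\, i \in F(cl)} x_{cl} = \sum_{cl} x_{cl}\, y_i^{cl},\qquad
x_{ij}^{proj} = \sum_{cl\,:\,(i,j)\in Agn_{cl}} x_{cl} = \sum_{cl} x_{cl}\, x_{ij}^{cl},
\]
then show the projection is exactly $\sum_{cl} x_{cl}\,(y^{cl},x^{cl})$, i.e.\ a convex combination of integer feasible points. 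The facility constraint $\sum_{cl\,:\,i\in F(cl)} x_{cl} \le 1$ is automatic from $\sum_{cl} x_{cl}=1$. Conversely, any convex combination $\sum_{k} \lambda_k (y^{(k)},x^{(k)})$ of integer feasible solutions is realized by assigning $x_{cl_k}=\lambda_k$, establishing equality of the two polytopes and therefore integrality gap $1$.

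The main (minor) obstacle I foresee is being careful about the distinction between integer feasible solutions with different facility sets: we need each $(y,x)$ class to contribute to the $y_i$ and $x_{ij}$ sums exactly as its $0/1$ entries dictate, which is the content of the projection identity above. Everything else is bookkeeping enabled by uniformity.
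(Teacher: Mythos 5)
Your proposal is correct and takes essentially the same approach as the paper: define $\mathcal{C}$ as the set of all integer feasible solutions, observe that each class assigns every client so the client equality constraint forces $\sum_{cl} x_{cl}=1$, and conclude that any feasible point is a convex combination of integer solutions. You spell out the verification of $P_1$, validity, and complexity~$1$ in more detail than the paper (which dismisses these as clear), but the core argument is identical.
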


{\noindent{\em Proof of Theorem \ref{thm:gap1}.}}
For a given instance let $\mathcal{C}$ consist of a class for each
distinct integral solution. The resulting $LP(\mathcal{C})$ is clearly
proper. Let $x$ be any feasible solution of $LP(\mathcal{C})$ and let
$S$ be the support of  the solution. For every $cl \in S,$ and 
for every client $j \in C,$ there is an $i \in F,$ such that $(i,j)
\in Agn_{cl}.$ Therefore 
$$ \sum_{cl \in S} x_{cl} = 1.$$
This implies that $x$ is a convex combination of integral
solutions. By the boundedness of the feasible region of
$LP(\mathcal{C}),$ the  corresponding polytope is integral.  
\mbox{} \hfill \mathqed  

Clearly not every LP with complexity $1$ has an integrality gap of $1$
since it might contain weak classes together with  strong
ones.

We proceed  to show that a complexity of $1$  is necessary 
in order to avoid a dramatic drop in solution quality as stated in 
Theorems \ref{theorem:proper} and \ref{theorem:proper2}.

\section{Proof of Theorem \ref{theorem:proper}}
\label{sec:proof_theorem_p1}

Our proof includes the following steps. We define an instance $I$ 
and consider any proper relaxation $LP(\mathcal{C})$ for $I$ that has complexity
$\alpha <1.$ 
Given $\alpha,$ we use   the validity  and symmetry properties to show the existence of
a specific set of classes in $\mathcal{C}$. Then we use these classes to construct a
desired feasible fractional solution, relying again on symmetry. 
In the last step  we specify  the distances between the clients and  the facilities, so
that the instance is metric and the constructed solution proves unbounded integrality
gap.

\subsection{Existence of a certain type of classes}

Let us fix for the remainder of the section 
an instance $I$ with $n+1$ facilities, where $n$ is
sufficiently large to ensure  that $\alpha n \leq n - c_0$  where
$c_0,$  is a
constant greater than or equal to  $2$. Let the bound $B=n^2$, and let
the number of  clients be $n^3$. Notice that  there are enough clients
to open $n$ facilities, with  exactly $n^2$ clients assigned  to each
one that is opened. The  facility costs  and the assignment  costs will  be defined
later.  Recall  that the  space  of  feasible  solutions of  a  proper
relaxation is independent of the costs.

 We  assume that  the  facilities are  numbered
$i=1,2,...,n+1$. 
For a solution $p$ we  denote by $Clients_p(i)$
the set  of clients  that are assigned  to facility $i$  in solution
$p$, and  likewise for a  class $cl$ we denote  by $Clients_{cl}(i)$
the set of clients that are assigned to facility $i$.  
Consider  an  integral  solution  $s$  to  the  instance  where 
facilities $1,...,n$  are opened. 
Since our proper  relaxation is valid, it must have   a feasible 
solution $s'=(x_{cl})_{cl \in \mathcal{C}}$  whose projection to $(y,x)$ gives the characteristic
vector of $s$.  We prove the existence of a  class $cl_0,$ with some desirable
properties, in the support of $s'.$ 

By Definition~\ref{def:constell},
$s'$ can only be obtained as a positive combination of classes $cl$ such that for
every    facility   $i$    we   have    $Clients_{cl}(i)   \subseteq
Clients_s(i)$, Otherwise,  if the variables  of a  class $cl$
with $Clients_{cl}(i)  \setminus  Clients_s(i) \neq \varnothing$ have  non-zero value,
then in $s'$ there will be  some client assigned to
some facility with a positive fraction, while the projection of $s',$ namely $s,$
does  not include  the
particular  assignment.  
Moreover,  since exactly $B$ clients are  assigned to each
facility in  $s$,   for every facility $i$
that   is  contained   in  such   a  class   $cl,$  $Clients_{cl}(i)=
Clients_s(i)$. To see why this  is true, 
since in  $s$ we have $y_i=1,$ for all $i\leq n,$   it follows  that for every facility $i\leq n$,
 $\sum_{cl  | \exists (i,j)  \in  Agn_{cl}} x_{cl}  =1$.  
But  then  we have  that
$|Clients_s(i)|=B=\sum_{cl  |   \exists  (i,j)  \in  Agn_{cl}}
x_{cl}|Clients_{cl}(i)|$.  We have already established   that $x_{cl}>0
\implies |Clients_{cl}(i)|  \leq B$. Then $B$ is  a convex combination
of quantities less than or equal to $B$, so for all such classes $cl$
we have $|Clients_{cl}(i)|=B$.

Therefore
in the class set of any proper relaxation for $I,$ there is 
a class $cl_0$ that assigns exactly $B$ clients to each of  the
facilities in $F(cl_0).$ By the value of $\alpha,$ 
 $|F(cl_0)| \leq n -c_0.$ The following
lemma has been proved.

\begin{lemma}   \label{lemma:existence} 
Given the specific instance $I,$ any proper relaxation  of complexity
$\alpha$ for $I$ contains in its class set a class 
$cl_0$ that assigns $B$ clients to each of $n-c$ facilities, for some
integer  $c \geq 2.$ 
\end{lemma}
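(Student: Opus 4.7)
\textbf{Proof plan for Lemma~\ref{lemma:existence}.}
The plan is to exploit validity together with the tight structure of the integral solution $s$ that opens facilities $1,\dots,n$ with exactly $B=n^2$ clients per facility (which is feasible since $nB=n^3$, matching the total number of clients). Because $LP(\mathcal{C})$ is valid, there must exist a feasible solution $s'=(x_{cl})_{cl\in\mathcal{C}}$ whose projection to the $(y,x)$ variables is the characteristic vector of $s$. I will analyze the support of $s'$ and show that at least one class $cl_0$ in this support has the required shape, then conclude by bounding its size using the complexity $\alpha$.

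The first step is to restrict attention to classes $cl$ with $x_{cl}>0$ and argue that every assignment of $cl$ is an assignment of $s$. Indeed, if $(i,j)\in Agn_{cl}\setminus Agn_s$, then the projection would give $x^{s'}_{ij}\geq x_{cl}>0$, contradicting $x^{s'}_{ij}=0$. Hence $F(cl)\subseteq\{1,\dots,n\}$ and $Clients_{cl}(i)\subseteq Clients_s(i)$ for every facility $i$, and in particular $|Clients_{cl}(i)|\leq B$ whenever $i\in F(cl)$.

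The second step upgrades this inequality to an equality. For any fixed $i\in\{1,\dots,n\}$, summing the client-projection identity $\sum_{cl:(i,j)\in Agn_{cl}} x_{cl}=x^{s'}_{ij}$ over $j\in Clients_s(i)$ yields $\sum_{cl:\,i\in F(cl)} x_{cl}\cdot |Clients_{cl}(i)|=B$. On the other hand, the facility-projection identity gives $\sum_{cl:\,i\in F(cl)} x_{cl}=y^{s'}_i=1$. So $B$ is expressed as a convex combination of numbers $|Clients_{cl}(i)|\leq B$, forcing $|Clients_{cl}(i)|=B$ for every class $cl$ in the support that contains $i$. This is the delicate step I expect to be the main obstacle, because it is what turns the merely set-theoretic containment from Step~1 into the full ``$B$ clients per facility'' property demanded by the lemma.

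The final step is a size count. Pick any class $cl_0$ with $x_{cl_0}>0$ (the support of $s'$ is nonempty since its coordinates sum to a positive value in each client constraint). By Step~2, $cl_0$ assigns exactly $B$ clients to each facility in $F(cl_0)$. For this instance, a maximum-cardinality integral solution opens $|F'|=n$ facilities (since each open facility absorbs at least $B=n^2$ of the $n^3$ clients), so the complexity bound gives $|F(cl_0)|\leq \alpha n \leq n-c_0$. Setting $c:=n-|F(cl_0)|\geq c_0\geq 2$ concludes the proof.
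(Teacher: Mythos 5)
Your proposal is correct and follows essentially the same route as the paper: invoke validity to obtain $s'$ projecting to the integral solution $s$, show $Agn_{cl}\subseteq Agn_s$ for classes in the support, upgrade to $|Clients_{cl}(i)|=B$ via the same convex-combination argument, and then apply the complexity bound with $|F'|=n$. The only small point worth making explicit (which the paper also treats implicitly) is that $F(cl)\subseteq\{1,\dots,n\}$ follows from the facility projection $y^{s'}_{n+1}=0$ rather than from $Agn_{cl}\subseteq Agn_s$ alone, and that the chosen $cl_0$ should be taken from a client constraint so that $F(cl_0)\neq\varnothing$.
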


\subsection{Construction of a bad  solution}  
\label{subsec:badlbfl}

In the present section we will use the class $cl_0$ along with the
symmetric classes to construct a solution to the proper LP with 
the following
property: there are some  $q$ facilities   that
are almost integrally opened while the number of distinct  clients assigned to them will be less than $Bq$. 

Recall that by property $P_1$ every class that is isomorphic to $cl_0$ is
also a class of our proper relaxation. This means that
every set  of $n-c$ facilities and every  set of $B(n-c)$ clients
assigned to those facilities so that each facility is assigned exactly
$B$ clients, defines a class, called {\em admissible,} that belongs to the set of classes
defined of a  proper relaxation for the instance $I$.

Let  us  turn  again  to  the solution  $s$  to  provide  some  more
definitions.  For   every  facility  $i,$   $i=1,...n-1$,  we  choose
arbitrarily a client $j'$ assigned to  it by $s$. For each such facility
$i$  we   denote  by  $Exclusive(i)$   the  set  of   clients  $
Clients_s(i) - \{j' \},$ i.e., the set of clients assigned to
$i$ by $s$ after we discard $j'$ (we will also call them the
{\em exclusive clients of $i$}). For facilities $n,$ $n+1$ the sets
$Exclusive(n),$ $Exclusive(n+1)$ are identical  and defined to be equal to 
 the union of $Clients_s(n)$ with all
the  discarded clients from  the other  facilities. In  the fractional
solution that we will construct below, the clients in $Exclusive(i)$
will be almost integrally assigned to $i$ for $i=1,...,n-1$.

We  are   ready  to  describe  the  construction   of  the  fractional
solution. We will use a subset $S$ of admissible classes that 
do not contain both $n$ and  $n+1$. $S$ contains all such classes   
 $cl$ that assign to each facility $i \leq
n-1$  in the class  the set  of clients  $Exclusive(i)$ plus  one more
client selected  from the sets $Exclusive(i')$  for those facilities
$i' \leq n-1$ that do not belong  to $cl$ (there are at least $c-1$
of them). As for facility $n$ (resp. $n+1$), if it  is contained in $cl,$ then
it is  assigned some set  of $B$ clients  out of the total  $B+n-1$ in
$Exclusive(n)$ (resp. $Exclusive(n+1)$).  
All classes not in $S$  will get a value
of zero in our solution.
We
will distinguish the classes in $S$ into two types: the classes
of {\em type $A$} that contain facility  $n$ or $n+1$ but not both, and classes
of {\em type $B$} that
 contain neither $n$ nor $n+1$.

We consider  first classes of type  $A$. We give  to each such class   a
very small  quantity of  measure $\epsilon$. Let  $\phi$ be  the total
amount of measure used. We call this step $Round_A$.  The
following  lemma shows  that after  $Round_A$, the  partial fractional
solution  induced  by  the  classes  has a  convenient  and  symmetric
structure:

\begin{lemma}  \label{lemma:roundA}
After  $Round_A$,  each client  $j  \in  Exclusive(i),$  $i\leq n-1,$  is
assigned to $i$  with a fraction of $\frac{n-c-1}{n-1}  \phi$ and is
assigned to each other facility $i',$ $i' \neq i,$ $i' \leq  n-1,$ with a
fraction of $\frac{n-c-1}{(n-1)(n-2)(n^2-1)} \phi$. Each client $j \in
Exclusive(n)$ ($=Exclusive(n+1)$)   is   assigned   to   $n$ and to $n+1$ 
 with   a   fraction   of $\frac{n^2}{2(n^2+n-1)} \phi$.
\end{lemma}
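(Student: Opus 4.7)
The plan is a symmetric counting argument. Since every Type-$A$ class receives the same measure $\epsilon$, the fractional assignment $x_{ij}$ after $Round_A$ equals $\epsilon$ times the number of Type-$A$ classes $cl$ with $(i,j) \in Agn_{cl}$. Writing $\phi = \epsilon\,|T_A|$, where $T_A$ denotes the set of Type-$A$ classes, each of the three fractions in the statement will emerge as $\phi$ multiplied by the ratio (number of classes realizing the prescribed assignment)$/|T_A|$. Property $P_1$ guarantees that classes in $T_A$ are counted uniformly with respect to any given facility or client, so the counts depend only on the combinatorial type of the pair $(i,j)$.

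First I would enumerate $|T_A|$: pick one of $\{n,n+1\}$ (2 ways), pick $n-c-1$ small facilities from $\{1,\ldots,n-1\}$ in $\binom{n-1}{n-c-1}$ ways, for each chosen small facility independently pick an extra client from the $c$ sets $Exclusive(i')$ indexed by the small facilities left out of the class ($c(n^2-1)$ options each), and for the big facility pick which $n^2$ of the $n^2+n-1$ clients of $Exclusive(n)$ are assigned to it. This gives
\[
|T_A| \;=\; 2\binom{n-1}{n-c-1}\bigl[c(n^2-1)\bigr]^{\,n-c-1}\binom{n^2+n-1}{n^2}.
\]

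Next I handle the three cases. For $j \in Exclusive(i)$ with $i\leq n-1$, the assignment $(i,j)$ appears in every Type-$A$ class containing $i$ (since $Exclusive(i) \subseteq Clients_{cl}(i)$ whenever $i \in F(cl)$); forcing $i$ into the class replaces $\binom{n-1}{n-c-1}$ by $\binom{n-2}{n-c-2}$, and the ratio telescopes to $(n-c-1)/(n-1)$. For the same $j$ to be assigned instead to a different small facility $i'\neq i$, one needs $i'\in cl$, $i\notin cl$, and $j$ to be the specific extra client chosen for $i'$ from $Exclusive(i)$; this replaces $\binom{n-1}{n-c-1}$ by $\binom{n-3}{n-c-2}$ and divides by $c(n^2-1)$, yielding $(n-c-1)/[(n-1)(n-2)(n^2-1)]$. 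Finally, for $j \in Exclusive(n)$ assigned to $n$, the class must contain $n$ (which kills the factor $2$) and must include $j$ in the chosen $n^2$-subset of $Exclusive(n)$, which replaces $\binom{n^2+n-1}{n^2}$ by $\binom{n^2+n-2}{n^2-1}$; the ratio reduces to $n^2/[2(n^2+n-1)]$. The case of $j$ assigned to $n+1$ is identical by the $n\leftrightarrow n+1$ symmetry built into the construction.

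The main (and essentially only) obstacle is the bookkeeping of the binomial-coefficient cancellations, but this is routine: because $\mathcal{C}$ is closed under permutations of facilities and clients, every enumeration collapses to a product of modified binomials and fixed choices, and simple cancellation delivers the three fractions claimed by the lemma.
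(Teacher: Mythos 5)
Your proposal is correct and follows essentially the same symmetry-based counting that the paper uses: you compute, for each combinatorial type of pair $(i,j)$, the proportion of Type-$A$ classes that realize the assignment, and multiply by $\phi$. The paper phrases this as ``facility $i$ is present in a $\frac{n-c-1}{n-1}$ fraction of the classes'' and so on; you make the same argument via an explicit enumeration of $|T_A|$ and modified binomials, and the three ratio calculations you give are all correct. One small technical slip: your displayed formula for $|T_A|$ writes $[c(n^2-1)]^{n-c-1}$ for the choices of extra clients, but since every admissible class is a permuted image of $cl_0$, the $n-c-1$ extra clients must be distinct (so the correct count is the falling factorial $(c(n^2-1))_{n-c-1}$, not the power). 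This does not affect your conclusion, because in each of the three ratios the distinctness-dependent factor cancels between numerator and denominator (e.g.\ in the second case $(c(n^2-1)-1)_{n-c-2}/(c(n^2-1))_{n-c-1} = 1/(c(n^2-1))$, which is exactly what you used). You could sidestep the issue entirely by invoking only the uniformity from $P_1$ — as you already do in the preamble — and never writing the absolute count.
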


\begin{proof}
Consider a facility  $i, i\leq n-1$. Since exactly one of facilities  $n,n+1$ is present in
all the classes of type  $A$ and each class contains $n-c$ facilities,
$i$ is  present in the  classes of $Round_A$  $\frac{n-c-1}{n-1}$ of
the time due to symmetry of the classes. Each time $i$ is present in
a class  $cl$ that  class $cl$ assigns  all $j \in  Exclusive(i)$ to
$i$.  So  client  $j$  is  assigned  to $i$  with  a  fraction  of
$\frac{n-c-1}{n-1}  \phi$. When $i$  is not  present in  class $cl$,
which happens  $\frac{c}{n-1}$ of the time, then  its exclusive clients
along with  the exclusive  clients of all  the other  $c-1$ facilities
that  are also  not  present in  $cl$  are used  to  help the  $n-c-1$
facilities $i \leq n-1,$  reach the bound $B$ of clients (recall
that the number of exclusive clients of each such facility is equal to
$B-1$).  Each time  this happens, the $n-c-1$ facilities  in $cl$ need
$n-c-1$  additional clients, while  the exclusive  clients of  the $c$
facilities that are  not present in $cl$ are  $c(n^2-1)$ in total. Due
to symmetry  once again, a  specific client $j \in  Exclusive(i)$ is
assigned to  one of those  $n-c-1$ facilities $\frac{n-c-1}{c(n^2-1)}$
of the  time of those cases.  So in total  this happens $\frac{c}{n-1}
\times  \frac{n-c-1}{c(n^2-1)}  =  \frac{n-c-1}{(n-1)(n^2-1)}$ of  the
time, so it follows that client $j$ is assigned to a specific facility
$i',$ $i' \neq i,$ $i' \leq n-1,$ $\frac{n-c-1}{(n-1)(n-2)(n^2-1)}$ of the
time. The fraction with which  $j$ is assigned to $i'$ after $Round_A$
is $\frac{n-c-1}{(n-1)(n-2)(n^2-1)} \phi$.

For  the proof  of the  second part  of the  lemma,  consider facilities
$n,n+1$. Each one of those is present in the classes of type $A$ an equal 
fraction $1/2$ of the time. The
only clients that  are assigned to them are  their exclusive clients. Each
class $cl$ assigns exactly $B= n^2$ out of those $n^2+n-1$ clients. So,
due to symmetry, each client $j \in Exclusive(n)$ is present in
$cl$ $\frac{n^2}{n^2+n-1}$  of the time,  so $j$ is assigned to $n$ and $n+1$
with a fraction of $\frac{n^2}{2(n^2+n-1)} \phi$ to each.
\end{proof}

Note that  after $Round_A$ each  facility $i, i  \leq n-1,$ has  a total
amount $ \frac{(n-c-1)B}{(n-1)} \phi$  of clients (since it is present
in a class  $\frac{(n-c-1)}{(n-1)}$ of the time and  when this happens
it is  given $B$ clients).  Similarly, facilities $n,n+1$  after $Round_A$
have a total amount $B\phi /2$ each.

Now we can explain the underlying intuition for distinguishing between
the two
types of classes.  The feasible fractional solution $(y^*,x^*)$
we  intend to  construct is  the following:  for each
facility $ i\leq n-1,$ its exclusive clients are assigned to it with
a fraction of $\frac{n^2-1}{n^2}$ each, while they are assigned with a
fraction of $\frac{1}{(n^2)(n-2)}$ to  each other facility $ i'
\leq  n-1$. As  for facilities  $n,n+1$, all  of their exclusive  clients are
assigned with a fraction of $1/2$  to each.  If  we  project  the solution  to  
 $(y,x)$, the $y$ variables will be forced 
to  take   the  values
$y^*_i=\frac{n^2-1}{n^2},$ for $i \leq n-1,$ and $y^*_n=y^*_{n+1}=\frac{n^2+n-1}{2n^2}$. Observe as we give some  amount of  measure to  $Round_A$,
 the  variables  concerning the
assignments to facilities $n,n+1$ tend to their intended values in the
solution we want to construct ``faster'' than the variables concerning the
assignments to the other facilities. This is because, by Lemma~\ref{lemma:roundA}
after $Round_A$ each exclusive client  of $n,n+1$ is assigned to each of them with
a fraction of $\frac{n^2}{2(n^2+n-1)} \phi$ which is $\frac{n^2}{n^2+n-1}
\phi$ of  the intended value. At the  same time, every
exclusive  client of  each other  facility is  assigned to  it  with a
fraction of $\frac{n-c-1}{n-1} \phi$ which is $\frac{\frac{n-c-1}{n-1}
  \phi}{\frac{n^2-1}{n^2}}$  of the  intended value.  For sufficiently
large  instance  $I$,  as  $n$  tends  to  infinity,  the  assignments
to $n$ and $n+1$ will reach their intended values while there will
be   some    fraction   of   every    other   client   left    to   be
assigned. Subsequently we have to use classes of type $B$, 
to achieve the opposite effect: the
variables  concerning the  assignments of  the first  $n-1$ facilities
should tend
to their intended values ``faster''  than those of $n$ and $n+1$ (since
$n$ and $n+1$  are 
not  present in  any of  the classes  of type  $B$,  the corresponding
speed will actually be zero).

We  proceed  with  giving  the  details  of  the  usage  of  type  $B$
classes. As before,  we give to each such class  a very small quantity
of  measure $\epsilon$.  Let  $\xi$  be the  total  amount of  measure
used. We call this step $Round_B$.

\begin{lemma}   \label{lemma:roundB}
After  $Round_B$,  each client  $j  \in  Exclusive(i),$  $i\leq n-1,$  is
assigned  to $i$  with a  fraction of  $\frac{n-c}{n-1} \xi$  and is
assigned to each other  facility $i',$ $i' \neq i,$ $i'  \leq n-1,$ with a
fraction of $\frac{n-c}{(n-1)(n-2)(n^2-1)} \xi$.
\end{lemma}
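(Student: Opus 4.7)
The plan is to mirror the symmetry argument that yielded Lemma~\ref{lemma:roundA}, but restricted to the family of type~$B$ admissible classes, i.e.\ those that contain neither facility $n$ nor facility $n+1$. Each such class selects exactly $n-c$ facilities from the $n-1$ facilities in $\{1,\ldots,n-1\}$. Property~$P_1$ guarantees that the set of admissible classes is invariant under every permutation of $\{1,\ldots,n-1\}$ and every permutation of the clients inside the exclusive sets, so the uniform $\epsilon$-weighting in $Round_B$ inherits this invariance. This symmetry is what drives the whole computation.

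For the first claim, I would observe that by symmetry a specific facility $i \le n-1$ is present in exactly a $\tfrac{n-c}{n-1}$ fraction of the type~$B$ classes, hence in a $\tfrac{n-c}{n-1}$ fraction of the measure $\xi$ deposited during $Round_B$. By the construction of $S$, whenever $i \in cl$ the class assigns the full set $Exclusive(i)$ to $i$. Summing over classes, each $j \in Exclusive(i)$ therefore receives an aggregate assignment fraction of $\tfrac{n-c}{n-1}\xi$ to $i$.

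For the second claim, fix $i' \neq i$ with $i' \le n-1$ and $j \in Exclusive(i)$. The client $j$ is assigned to $i'$ only when $i \notin cl$ (so $j$ is eligible to appear as an extra client) and $i' \in cl$ (so $i'$ needs an extra). By symmetry, the fraction of type~$B$ classes with $i \notin cl$ is $\tfrac{c-1}{n-1}$, and conditioning on that event the fraction with $i' \in cl$ is $\tfrac{n-c}{n-2}$, so the joint fraction is $\tfrac{(c-1)(n-c)}{(n-1)(n-2)}$. Conditioned on both events, the pool of candidate extras is exactly the $(c-1)(n^2-1)$ exclusive clients of the $c-1$ facilities in $\{1,\ldots,n-1\}\setminus cl$, distributed symmetrically among the $n-c$ recipient facilities in $cl$. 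Hence $j$ is chosen as the specific extra of $i'$ with probability $\tfrac{1}{(c-1)(n^2-1)}$. Multiplying gives
\[
\frac{(c-1)(n-c)}{(n-1)(n-2)}\cdot\frac{1}{(c-1)(n^2-1)}\,\xi \;=\; \frac{n-c}{(n-1)(n-2)(n^2-1)}\,\xi,
\]
as claimed.

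The only point where care is needed is that, within a single type~$B$ class, the $n-c$ chosen extras must be pairwise distinct; consequently the joint distribution of extras is a uniform random injection rather than independent samples. However, for the marginal probability that one prescribed client lands at one prescribed facility, a straightforward double-counting argument (or equivalently a symmetry argument over the action of the symmetric group) yields the factor $\tfrac{1}{(c-1)(n^2-1)}$ directly, without enumerating classes. This is the only nontrivial step; everything else is bookkeeping entirely analogous to the proof of Lemma~\ref{lemma:roundA}.
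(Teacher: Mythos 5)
Your proof is correct and follows essentially the same symmetry argument as the paper's. The only cosmetic difference is that you condition sequentially on $i \notin cl$ and then $i' \in cl$ (mirroring the decomposition used in the paper's proof of Lemma~\ref{lemma:roundA}), whereas the paper's Lemma~\ref{lemma:roundB} proof takes a one-step shortcut by conditioning directly on $i' \in cl$ and observing that the single extra for $i'$ is uniform over all $(n-2)(n^2-1)$ candidate clients; both bookkeepings yield $\frac{n-c}{(n-1)(n-2)(n^2-1)}\xi$.
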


\begin{proof}
The proof  follows closely that of  Lemma~\ref{lemma:roundA}. A  facility $i, i
\leq n-1,$  is present in  a class of  type $B$ $\frac{n-c}{n-1}$  of the
time (since  $c \geq  2$ this  fraction is less  than $1$).  Each such
time, every  client $j  \in Exclusive(i)$ is  assigned to  it (again
this  is due  to the  definition  of classes  of type  $B$). So  after
$Round_B$,   $j$   is  assigned   to   $i$   with   a  fraction   of
$\frac{n-c}{n-1} \xi$.  
Also, when $i$  is  present  in a  class, it is assigned exactly one client
which is exclusive to a facility
not in the class. Since in total there are $(n-2)(B-1)$ such candidate clients,
and by symmetry, after round $B$ 
each one of them is picked an equal fraction of the time to
be assigned to $i$, we have that
each client $j$ is assigned to a facility for which $j$ is not
exclusive with  a fraction
$\frac{n-c}{(n-1)(n-2)(n^2-1)} \xi$.
\end{proof}

\noindent
To  construct  the   aforementioned  fractional  solution $(y^*,x^*)$,  set  $\phi=
\frac{n^2+n-1}{n^2}$    and     $\xi=    (\frac{n^2-1}{n^2}-\frac{n-c-1}{n-1}\phi)
\frac{n-1}{n-c}$, and  add the  fractional assignments of  the two
rounds. 

It is easy to check that the facility and assignment variables of facilities $n,n+1$
take the value they have in $(y^*,x^*)$. Same is true for the facility variables for $i \leq n-1$
and the assignment variables of the clients to the facilities they are exclusive. 
To see that the same goes for the non-exclusive assignments, observe that since
every class assign exactly $B$ clients to its facilities we have that $\sum_j x_{ij}=By_i$.
So each $i\leq n-1$ takes exactly $1-1/n^2$ demand from non-exclusive clients which  are
$(n-2)(B-1)$ in total. Thus, by symmetry of the construction, each one them is assigned to $i$
with a fraction of $\frac{B-1}{n^2(n-2)(B-1)}=\frac{1}{n^2(n-2)}$

\subsection{Proof of unbounded integrality gap of the constructed solution}

In the  present subsection, we  manipulate the costs of  instance $I$,
which we left undefined, so as to create a large integrality gap while
ensuring that the distances form a metric.

Set each facility opening cost to zero. As for the connection costs (distances)
consider the $(n-2)$-dimensional Euclidean space $\mathbb{R}^{n-2}$. Put
every facility $i,$  $i\leq n-1,$ together with its  exclusive clients on a
distinct vertex of an $(n-2)$-dimensional regular simplex with edge length
$D$. Put facilities $n,n+1$ together with their exclusive clients to a point
far away  from the simplex, so  the minimum distance from  a vertex is
$D' >> D.$ Setting $D'=\Omega(nD)$ is enough.

Since the distance between a facility and one of its exclusive clients
is  $0$,  the  cost  of  the fractional  solution  we  constructed  is
$O(nD)$. This cost  is due to the assignments  of exclusive clients of
facility $i,$ $i \leq n-1,$ to facilities $i'$ with $i' \neq i,$ $i' \leq n-1.$ 
As  for the cost  of an arbitrary integral  solution, observe
that since the $n^2+n-1$ exclusive  clients of $n,n+1$ are very far from
the  rest of  the facilities,  using $n$  of them  to  satisfy some
demand of  those facilities and help  to open all of  them, incurs a
cost of $\Omega(nD') = \Omega(n^2D).$ On the other hand, if we do not open all of the $n-1$
facilities on  the vertices of the  simplex (since they  have in total
$(n-1)(B-1)$  exclusive clients  which is  not enough  to open  all of
them), there  must be  at least  one such facility  not opened  in the
solution, thus its $B-1=\Theta(n^2)$ exclusive clients must be assigned elsewhere,
incurring a cost of  $\Omega(n^2D).$

This concludes the proof of Theorem~\ref{theorem:proper}.

\section{Proof of Theorem ~\ref{theorem:proper2}} 
\label{sec:proof_theorem_p2}

The proof is similar to that for \lbfl. 
We prove that the relaxation must use 
a specific set of classes and then we use these classes to construct a
desired feasible solution. In the last step we 
 define appropriately  the costs of the instance. 

\subsection{Existence of a specific type of classes}

Consider
an instance $I$ with $n$ facilities, where $n$ is
sufficiently large to ensure  that $\alpha n \leq n - c_0$  where
$c_0,$  is a
constant greater than or equal to  $1$. Let the capacity be $U=n^2$, and let
the number of  clients be $(n-1)U+1$. Notice that in every integer solution of the instance
 we must open  at least $n$ facilities. The  facility costs  and the assignment  costs will  be defined
later.  

 We  assume, like before, that  the  facilities are  numbered
$1,2,...,n$. 
Consider  an  integral  solution  $s$  for $I$   where  all the
facilities  are opened, and furthermore 
facilities $1..., n-1$  are assigned $U$ clients each 
and facility $n$ is assigned one client. 
Since our proper  relaxation is valid, there must be a solution $s'$ in the  space of
feasible solutions of the proper relaxation whose $(y,x)$ projection is the characteristic 
vector of $s$.  
By Definition~\ref{def:constell},
it is easy to see that $s'$ can only be obtained as a 
positive combination of classes $cl$ such that for
every    facility   $i$    we   have    $Clients_{cl}(i)   \subseteq
Clients_s(i)$.  Recall  that  since  the  complexity  of  our
relaxation is $\alpha$, the classes in the support of any solution 
have at most $n-c_0 \leq n-1$
facilities. 

Now consider the support  of $s'$. We will distinguish the classes $cl$ for
which variable $x_{cl}$ is in the support of $s'$ into 2 sets. The first set consists 
of the classes that assign exactly one client to facility $n$; call them \emph{type A} classes.
The second set  consists  of the classes that do not assign any client to facility 
$n$; call those \emph{type B} classes. By the discussion above those sets form a
partition of the classes in the support of $s'$, and moreover they are both non-empty: this is
 by the fact that at most  $n-c_0$ facilities are in any class, and by the fact
 that in $s$ all $n$ 
facilities are opened integrally. Notice also that no  class
of type B can contain facility $n$ even though the definition of a class does not
exclude the possibility that a class contains a facility to which no clients are
assigned. 

We call \emph{density} of  a class $cl$ the ratio 
$d(cl)=\frac{\sum_{i\neq n}|Clients_{cl}(i)|}{|F(cl)-\{n\}|}$. By the discussion 
above we have that $d(cl)\leq U$ for all $cl$ in the support of $s'$. The following holds:

\begin{lemma}
All classes in the support of $s'$ have density $U.$
\end{lemma}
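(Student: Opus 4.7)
\smallskip
\noindent\textbf{Proof plan.} My plan is to mimic the counting argument used for \lbfl\ right after the proof of Lemma~\ref{lemma:existence}: use tightness of the capacity constraint at each facility $i\leq n-1$ to express $U$ as a weighted average of the quantities $|Clients_{cl}(i)|$ (each of which is bounded above by $U$), and then conclude that all of them equal $U$. Since every facility contained in a class $cl$ in the support of $s'$ other than $n$ lies in $\{1,\dots,n-1\}$, this will force the density of every class in the support to be exactly $U$.

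Concretely, I would first note two facts about any $i\in\{1,\dots,n-1\}$. First, because $s$ opens $i$ and assigns the full $U$ clients to it, the corresponding capacity constraint is tight in $(y,x)$, giving $\sum_j x_{ij}=U$ and $y_i=1$. Second, because $Clients_{cl}(i)\subseteq Clients_s(i)$ for every $cl$ in the support of $s'$ and $|Clients_s(i)|=U$, we have $|Clients_{cl}(i)|\leq U$; moreover if $i\notin F(cl)$ then no clients are assigned to $i$ in $cl$, so that term contributes $0$.

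Translating $x_{ij}$ back into class variables and summing over $j$ yields
\begin{equation*}
U \;=\; \sum_j x_{ij} \;=\; \sum_{cl:\, i\in F(cl)} x_{cl}\,|Clients_{cl}(i)| \;\leq\; U\!\!\sum_{cl:\, i\in F(cl)} x_{cl} \;=\; U\,y_i \;=\; U,
\end{equation*}
so the middle inequality is tight. Since all terms are nonnegative, this forces $|Clients_{cl}(i)|=U$ for every $cl$ in the support of $s'$ with $i\in F(cl)$.

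Finally, I would plug this into the definition of density. Pick any $cl$ in the support of $s'$; every $i\in F(cl)\setminus\{n\}$ lies in $\{1,\dots,n-1\}$ and therefore contributes $|Clients_{cl}(i)|=U$, so
\begin{equation*}
d(cl)=\frac{\sum_{i\neq n}|Clients_{cl}(i)|}{|F(cl)\setminus\{n\}|}=\frac{U\cdot|F(cl)\setminus\{n\}|}{|F(cl)\setminus\{n\}|}=U,
\end{equation*}
which is what we wanted to show. The only subtle point is making sure $|F(cl)\setminus\{n\}|>0$ so the density is well-defined; this follows because a class of type A or B in the support must contain at least one facility other than $n$, since otherwise the single facility $n$ inside $cl$ would force positive fractional assignment of some client to $n$ beyond what $s$ does (for type A) or the class would be empty/infeasible (for type B).
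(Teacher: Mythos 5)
Your main argument is correct, and it is a genuinely different (and in fact finer-grained) route than the paper's proof of this lemma. The paper proves the lemma with a single aggregate average: it sums the total demand on all of $\{1,\dots,n-1\}$ to get $\sum_{cl}d(cl)\,|F(cl)\setminus\{n\}|\,x_{cl}=(n-1)U$, uses $y_i=1$ for all $i\leq n-1$ to get $\sum_{cl}|F(cl)\setminus\{n\}|\,x_{cl}=n-1$, normalizes to a convex combination $\sum_{cl}m_{cl}d(cl)=U$ with weights $m_{cl}\propto x_{cl}|F(cl)\setminus\{n\}|$, and invokes $d(cl)\leq U$. You instead fix a single facility $i\leq n-1$ and run the same averaging argument \emph{per facility}, concluding the stronger fact that $|Clients_{cl}(i)|=U$ for every class in the support containing $i$; the density conclusion then follows trivially. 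This is exactly the argument the paper itself uses for \lbfl\ in Section~\ref{sec:proof_theorem_p1} (just after stating that $Clients_{cl}(i)\subseteq Clients_s(i)$), so your proof unifies the two sections, and it buys a slightly stronger statement than the lemma as written.

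One caveat: your closing paragraph on well-definedness of $d(cl)$ is not actually sound. A class $cl$ of type~A with $F(cl)=\{n\}$ assigning only the single client $j_0$ that $s$ gives to $n$ does \emph{not} ``force positive fractional assignment of some client to $n$ beyond what $s$ does'' — it assigns exactly what $s$ does, at measure $x_{cl}\leq 1$, and nothing in the LP excludes it. Likewise, an ``empty'' class with $F(cl)=\varnothing$ is permitted by Definition~\ref{def:constell}; it contributes nothing to any constraint but is not infeasible. So neither of your stated reasons rules these out. This does not break your main argument (you only ever \emph{use} facilities $i\in F(cl)\cap\{1,\dots,n-1\}$, and for those the per-facility averaging works and yields $d(cl)=U$ whenever $F(cl)\setminus\{n\}\neq\varnothing$); it just means the reason $d(cl)$ is defined for the classes that matter downstream is not the one you gave. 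Note that the paper's own proof silently has the same issue — classes with $|F(cl)\setminus\{n\}|=0$ have $m_{cl}=0$ and are not constrained by the convex-combination argument — and what actually saves the corollary is the separate counting fact that type~A classes alone cannot account for the mass $\sum_{cl}|F(cl)\setminus\{n\}|\,x_{cl}=n-1$ because each has $|F(cl)\setminus\{n\}|\leq n-2$ and total measure at most $1$, forcing $\sum_{\text{type }B}x_{cl}|F(cl)\setminus\{n\}|\geq 1>0$.
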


\begin{proof}
The amount of demand that a class $cl$ contributes to the demand assigned to the set
of the first $n-1$ facilities by $s'$ is $d(cl)|F(cl)-\{n\}|x_{cl}.$
 We have $\sum_{cl}d(cl)|F(cl)-\{n\}|x_{cl}=(n-1)U$.
 Observe
that by the projection of $s'$ on $(y,x)$ and by the fact that for $i=1,...,n-1$,
 $y_i=1$ in $s$, we have $\sum_{cl}|F(cl)-\{n\}|x_{cl}=n-1$. 
Setting $m_{cl}=\frac{x_{cl}|F(cl)-\{n\}|}{n-1}$ we have from
the above $\sum_{cl}m_{cl}=1$ and $\sum_{cl}m_{cl}d(cl)=U$. 
The latter together with the fact that $d(cl)\leq U$ we have that $d(cl)=U$ for all classes
$cl$ in the support of $s'$.
\end{proof}

The following corollary is immediate from the above:

\begin{corollary}
There is a type $B$ class in the support of $s'$ that has density $U.$
\end{corollary}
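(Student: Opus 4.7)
The plan is essentially immediate from what has already been established, so I would present it as a one-line synthesis of two earlier facts. First I would recall that in the paragraph preceding the density lemma it was observed that the sets of type $A$ and type $B$ classes in the support of $s'$ form a partition and are both non-empty: the reason being that $s$ opens all $n$ facilities integrally while any single class contains at most $n-c_0 \le n-1$ facilities, so at least one class in the support must omit facility $n$ (a type $B$ class) and at least one must include it (a type $A$ class, contributing the single client assigned to $n$ in $s$).

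Next I would simply invoke the preceding lemma, which asserts that \emph{every} class $cl$ in the support of $s'$ satisfies $d(cl)=U$. Picking any type $B$ class from the support — which exists by the previous paragraph — yields a class that is simultaneously of type $B$ and of density $U$, which is exactly the statement of the corollary.

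There is no real obstacle here: the work was done in the lemma (by viewing $U$ as a convex combination $\sum_{cl} m_{cl} d(cl)$ of quantities bounded above by $U$, forcing each $d(cl)$ in the support to equal $U$) and in the earlier non-emptiness argument. The corollary is just the conjunction of these two facts, and the proof can be written in two sentences of the form: \emph{By the preceding discussion the set of type $B$ classes in the support of $s'$ is non-empty; by the previous lemma every class in the support has density $U$; hence any type $B$ class in the support witnesses the claim.}
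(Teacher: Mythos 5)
Your argument matches the paper's own reasoning exactly: the paper states the corollary is ``immediate from the above,'' meaning precisely the combination of the non-emptiness of the type $B$ classes in the support (established before the density lemma) with the lemma that every class in the support has density $U$. Your two-sentence synthesis is correct and is the intended proof.
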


So far we have proved that 
in the class set of any proper relaxation for $I,$ there is 
a class $cl_0$ of type $B$ with density $d(cl_0)=C$.
 Let $|F(cl_0)| = t \leq n-1.$

\subsection{Construction of a bad solution}

Consider the symmetric classes of $cl_0$ for all permutations of the $n$ facilities
and for all permutations of the clients. Those classes are not necessarily in the support of $s'$. Take a quantity of measure $\epsilon$ and distribute it equally among all 
those classes. Since class $cl_0$ has density $U,$ all those symmetric classes
assign on average $U$ clients to each of their facilities. 
Due to symmetry, each facility is in a class $\epsilon \frac{t}{n}$ of the time and is assigned $\epsilon \frac{t}{n}U$ demand. Each client is assigned to
each facility $\epsilon\frac{tU}{((n-1)U+1)n}$ of the time. We call that step of our construction \emph{round $A$}.

Now consider the symmetric classes of $cl_0$ for all permutations of the first $n-1$ facilities
and for all permutations of the clients (those classes are well defined since $t\leq n-1$).
Again distribute a quantity of measure $\epsilon$ equally among all 
those classes. Similarly to the previous, each facility is in a class $\epsilon \frac{t}{n-1}$ of the time and is assigned $\epsilon \frac{t}{n-1}U$ demand. Each client is assigned to
each facility $\epsilon\frac{tU}{((n-1)U+1)(n-1)}$ of the time. 
We call that step of our construction \emph{round $B$}.

Spending $\phi=\frac{1}{nt}$ measure in round $A$ and $\xi=\frac{(n-1)(1-1/n^2)}{t}$ 
measure in round $B$ we construct a solution $s_b$ whose projection to $(y,x)$ is the 
following $(y^*,x^*)$:
$y^*_i=1$ for $i=1,...,n-1$, $y^*_n=\frac{1}{n^2}$, and for every client $j,$ $x^*_{nj}=\frac{U/n^2}{(n-1)U+1}$ and 
$x^*_{ij}=\frac{1-x^*_{nj}}{n-1}$ for $i=1,...,n-1.$ It is easy to see that $s_b$ is
a feasible solution for our proper relaxation.

Now simply set all distances to $0$, and define the facility opening costs as 
$f_n=1$ and $f_i=0$ for $i\leq n-1.$ It is easy to see
that the integrality gap of the proper relaxation is $\Omega (n^2)$. 
In Section \ref{sec:ls_cfl}, where we prove unbounded integrality gap for the
Lov\'{a}sz-Schrijver procedure, 
 we will have to use a somewhat more general "bad" solution on an instance with 
 many costly facilities.

\section{The LS Hierarchy}
\label{sec:ls}

The Lov\'{a}sz-Schrijver hierarchy was defined in \cite{LovaszS91}. For a 
comprehensive presentation and 
various reformulations see \cite{Tulsiani11-chapter}. In this section
we give the necessary definitions and the reformulation we are
employing in our proof. 

In \cite{LovaszS91} an operator $N$ was defined which refines a convex set 
$K \subseteq [0,1]^n$, when applied to  it.
After $n$ applications the resulting convex set is the integer hull of $K$. 
Starting with a polytope $P \subseteq [0,1]^n$ we define
$cone(P)=\{y = (\lambda, \lambda z_1,... , \lambda z_n) \mid  \lambda \geq 0, (z_1,...,z_n) \in P\}$.
The following Lemma characterizes the vectors  of $cone(P)$ 
that survive  after $m$ iterations.

\begin{lemma} [\cite{LovaszS91}]\label{lemmals}
 If $K$ is a cone in $\mathbb{R}^{n+1}$, then $z \in N^m(K)$ iff there is an $(n+1)\times (n+1)$ symmetric
matrix $Y$ satisfying
\begin{itemize}
\item[1.] $Ye_0=diag(Y)=z.$
\item[2.] For $1 \leq i \leq n$, both $Ye_i$ and $Y(e_0-e_i)$ are in $N^{m-1}(K).$
\end{itemize}
\end{lemma}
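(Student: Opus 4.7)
The plan is to prove the characterization by induction on $m$, with the base case following essentially by definition of the Lovász--Schrijver operator $N$, and the inductive step being a transparent unfolding of the recursive definition $N^m(K) = N(N^{m-1}(K))$.

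For the base case $m=1$, I would appeal directly to the definition of $N$ from \cite{LovaszS91}: a vector $z$ lies in $N(K)$ precisely when there exists a symmetric ``protection matrix'' $Y$ such that $Ye_0 = \mathrm{diag}(Y) = z$ and, for every coordinate $i$, both $Ye_i \in K$ and $Y(e_0-e_i) \in K$. Since $N^0(K) = K$ by convention, this matches the statement of the lemma at level $1$. The reformulation may require writing $N(K)$ in the protection-matrix form rather than the original ``projection of a lifting'' form; this is a straightforward rewriting that I would carry out once at the start.

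For the inductive step, I would assume the lemma for $m-1$ and note that $N^{m-1}(K)$ is itself a convex cone in $\mathbb{R}^{n+1}$ (the $N$ operator preserves conicity and convexity, and produces a subset of $[0,1]^n$ via the $e_0$ coordinate, which is what is needed to iterate). Applying the base-case characterization to $N^{m-1}(K)$ in place of $K$ gives: $z \in N^m(K) = N(N^{m-1}(K))$ iff there is a symmetric $Y$ with $Ye_0 = \mathrm{diag}(Y) = z$ and $Ye_i, Y(e_0-e_i) \in N^{m-1}(K)$ for each $1 \leq i \leq n$. This is precisely the claim at level $m$, closing the induction.

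The only real step of substance, and the one I would be most careful about, is making sure that the identification of the defining conditions on $Y$ matches between the base case and the recursive step --- in particular that both the symmetry requirement and the coupling $Ye_0 = \mathrm{diag}(Y)$ are imposed uniformly at every level, so that no hidden integrality condition on the coordinates of $Y$ is smuggled in. Once this identification is set up correctly the induction is mechanical, which is consistent with the lemma being stated as a restatement of a result from \cite{LovaszS91} rather than a new contribution.
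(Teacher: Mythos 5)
The paper does not prove this lemma; it is cited directly from Lov\'{a}sz--Schrijver \cite{LovaszS91}, so there is no author proof to compare against. Your argument is correct, but it is worth noting that the ``induction'' is a bit of a false front: the inductive hypothesis (the matrix characterization of $N^{m-1}(K)$) is never actually invoked in your inductive step. What you really use is only that $N^{m-1}(K)$ is again a convex cone in $\mathbb{R}^{n+1}$, so that the definition of the operator $N$ --- $z \in N(K')$ iff there is a symmetric $Y$ with $Ye_0 = \mathrm{diag}(Y) = z$ and $Ye_i, Y(e_0-e_i) \in K'$ for all $i$ --- can be applied with $K' = N^{m-1}(K)$. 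Combined with $N^m(K) = N(N^{m-1}(K))$, this gives the statement directly, with no recursion on the structure of $Y$. So the lemma is really just the definition of $N$ unfolded once, plus the (standard) observation that $N$ maps cones to cones; your proof arrives at the right place, and your care about imposing symmetry and the $Ye_0=\mathrm{diag}(Y)$ coupling uniformly, with no hidden integrality on $Y$, is exactly the point one should check.
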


In such a case, $Y$ is called \emph{the protection matrix} of $z$. Since we are interested in
the projection of the cones on the hyperplane $z_0=1$ which contains our
original polytope, we restate the conditions of survival of $z$ as the following
corollary which is immediate from Lemma \ref{lemmals}.

\begin{corollary} [\cite{AroraBL02}]\label{corls}
 Let $K$ be a cone in $\mathbb{R}^{n+1}$ and suppose $z \in \mathbb{R}^{n+1}$ 
where $z_0 = 1.$ Then $z \in N^m(K)$ iff there
is an $(n+1)\times(n+1)$ symmetric matrix $Y$ satisfying

\begin{itemize}
\item[1.]  $Ye_0= diag(Y) = z.$
\item[2.]  For $1 \leq i \leq n$: If $z_i = 0$ then $Ye_i ={\bf 0}$; 
If $z_i =1$ then $Ye_i=z$; Otherwise, $Ye_i/z_i$, $Y(e_0-e_i)/(1-z_i)$
both lie in the projection of $N^{m-1}(K)$ onto the hyperplane $z_0 = 1.$
\end{itemize}
\end{corollary}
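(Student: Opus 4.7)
The plan is to derive the corollary directly from Lemma \ref{lemmals} by inspecting what the two conditions there say about a vector $z$ with $z_0 = 1$, and then interpreting everything on the hyperplane $z_0 = 1$ rather than inside the cone.

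The central observation I would start with is that the symmetry of $Y$ together with condition (1) of Lemma \ref{lemmals} forces the $0$-th coordinate of each column $Ye_i$ to equal $z_i$. Indeed, $(Ye_i)_0 = e_0^\top Y e_i = e_i^\top Y e_0 = z_i$ since $Ye_0 = z$. With this identity in hand I split into the three cases dictated by the value of $z_i \in [0,1]$. If $z_i = 0$, then $Ye_i$ is a vector in the cone $N^{m-1}(K)$ whose $0$-th coordinate is $0$; but every element of $cone(Q)$ for $Q \subseteq [0,1]^n$ has the form $(\lambda, \lambda q_1, \dots, \lambda q_n)$ with $\lambda \geq 0$, so a zero $0$-th coordinate forces the whole vector to be zero, giving $Ye_i = \mathbf{0}$. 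If $z_i = 1$, then the complementary column $Y(e_0 - e_i)$ has $0$-th coordinate $z_0 - z_i = 0$, so by the same cone argument $Y(e_0 - e_i) = \mathbf{0}$, which rearranges to $Ye_i = Ye_0 = z$. In the remaining case $0 < z_i < 1$, both $(Ye_i)_0 = z_i$ and $(Y(e_0 - e_i))_0 = 1 - z_i$ are strictly positive, so scaling $Ye_i$ by $1/z_i$ and $Y(e_0 - e_i)$ by $1/(1-z_i)$ produces vectors in $N^{m-1}(K)$ whose $0$-th coordinate is $1$, i.e., points in the projection of $N^{m-1}(K)$ onto the hyperplane $z_0 = 1$. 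This establishes the forward direction.

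For the reverse direction, suppose $Y$ satisfies the conditions in the corollary. I need to verify the two conditions of Lemma \ref{lemmals}. Condition (1) is identical. For condition (2), in the cases $z_i = 0$ and $z_i = 1$ the vectors $Ye_i$ and $Y(e_0 - e_i)$ are either $\mathbf{0}$ or equal to $z$, both of which lie in $N^{m-1}(K)$ (the zero vector by convexity of the cone, and $z$ because $N^m(K) \subseteq N^{m-1}(K)$). In the fractional case, the hypothesis guarantees that $Ye_i/z_i$ and $Y(e_0-e_i)/(1-z_i)$ lie in the hyperplane-$1$ slice of $N^{m-1}(K)$; multiplying by $z_i$ and $1-z_i$ respectively and using that $N^{m-1}(K)$ is a cone yields that $Ye_i, Y(e_0-e_i) \in N^{m-1}(K)$, as required. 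The one place that requires a moment's care, and which I expect to be the main (if minor) obstacle, is the $z_i = 0$ case, where one must justify that membership in a cone with zero homogenizing coordinate forces the vector to vanish; this uses the fact that $K = cone(N^{m-1}(P))$ arises from a polytope in $[0,1]^n$, so all nonzero elements have strictly positive $0$-th coordinate.
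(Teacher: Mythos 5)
The paper does not actually prove Corollary~\ref{corls}; it is stated as a cited result from~\cite{AroraBL02} and the text before it only remarks that it is ``immediate from Lemma~\ref{lemmals}.'' So there is no paper proof to compare against, and your proposal is supplying exactly the verification the paper leaves to the reader. Your forward direction is sound: the identity $(Ye_i)_0 = e_i^\top Y e_0 = z_i$ (from symmetry and $Ye_0 = z$), combined with the observation that $N^{m-1}(K) \subseteq K = cone(P)$ consists of vectors of the form $(\lambda, \lambda q)$ with $\lambda \geq 0$, correctly forces $Ye_i = \mathbf{0}$ when $z_i = 0$, forces $Y(e_0-e_i) = \mathbf{0}$ hence $Ye_i = z$ when $z_i = 1$, and yields the normalized slice membership when $0 < z_i < 1$.

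There is, however, a circularity in your reverse direction that needs repair. To verify condition~(2) of Lemma~\ref{lemmals} for coordinates $z_i \in \{0,1\}$, you must place $z$ itself in $N^{m-1}(K)$, since then $Ye_i$ and $Y(e_0-e_i)$ are each $\mathbf{0}$ or $z$. You justify $z \in N^{m-1}(K)$ by invoking $N^m(K) \subseteq N^{m-1}(K)$ — but $z \in N^m(K)$ is precisely the conclusion the reverse direction is meant to establish, so you are assuming what you are proving. The fix is to derive $z \in N^{m-1}(K)$ from a fractional coordinate: if some $z_j \notin \{0,1\}$, then the hypothesis gives $Ye_j, Y(e_0-e_j) \in N^{m-1}(K)$, and since $N^{m-1}(K)$ is a convex cone and $Ye_0 = Ye_j + Y(e_0 - e_j)$, we get $z \in N^{m-1}(K)$; the integral entries then pose no difficulty. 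If $z$ is fully integral the corollary's condition~(2) is satisfied vacuously by $Y = zz^\top$, and the reverse implication needs the (here harmless) side condition that $z \in K$; this degenerate case does not arise in the paper, whose bad solution has fractional entries, but your write-up should not paper over it with a circular appeal.
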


Let $Y_i$ denote the vector $Y^Te_i,$ i.e., the $i$th row of $Y.$ 
Corollary  \ref{corls} makes  it  convenient to  work with  individual
vector solutions that can be  combined as rows to build the protection
matrix. 
 We focus now on the survival of
a vector  $z$ for one  round and state  some simple properties of $Y.$

Given a protection matrix $Y$ of $z,$ 
we define a set of at most $2n$ {\em
  witnesses} of vector $z.$ For each variable $z_i$, $1\leq i \leq n,$ there are at most
$2$ such witnesses: the one that equals $Y_{{i}}/z_{i}$
(if  $z_i\neq 0$),   which we  call \emph{type  1 witness  of $z$
  corresponding to variable $z_i,$} and the vector $\frac{Y_0-Y_{i}}{1-z_i}$
(if $z_i \neq 1$), which we call \emph{type 2 witness of $z$
  corresponding to  variable $z_i.$} For the validity  of the upcoming
observation recall  that if $z_i =0,$  and hence the  type $1$ witness
corresponding to $i$ is undefined, $Y_i = {\bf 0}.$ 

\begin{observation}\label{obs_diag}
The condition that $Y$'s main diagonal is equal to the vector $Y_0$ is
equivalent to the following: the variable $z'_i$ of the type $1$ witness
$z'$ corresponding to variable $z_i\neq 0$ is equal to $1.$
\end{observation}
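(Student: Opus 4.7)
The proof is essentially a one-line unfolding of the definitions, so the plan is just to pin down exactly what ``diagonal equals $Y_0$'' and ``$z'_i = 1$'' mean in terms of the entries of $Y$, and observe they coincide coordinatewise.

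First I would recall the three ingredients explicitly. Since $Ye_0 = z$, the zeroth column (equivalently, by symmetry, the zeroth row) of $Y$ is the vector $z$; in particular $Y_{i0} = Y_{0i} = z_i$ for every $i$. The main diagonal of $Y$ is the vector $(Y_{00}, Y_{11}, \dots, Y_{nn})$. So the hypothesis ``$\mathrm{diag}(Y) = Y_0$'' says precisely that $Y_{ii} = z_i$ for every $1 \le i \le n$. On the other hand, the type $1$ witness corresponding to a variable $z_i \neq 0$ is by definition $z' = Y_i / z_i$, so its $i$-th coordinate is
\[
z'_i \;=\; \frac{(Y_i)_i}{z_i} \;=\; \frac{Y_{ii}}{z_i}.
\]

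For the forward direction, if $Y_{ii} = z_i$ for every $i$, then whenever $z_i \neq 0$ we immediately get $z'_i = z_i/z_i = 1$, as claimed.

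For the reverse direction, suppose that $z'_i = 1$ whenever $z_i \neq 0$. Then the display above gives $Y_{ii} = z_i$ for every such $i$. For the remaining indices, namely those with $z_i = 0$, Corollary~\ref{corls} forces $Ye_i = \mathbf{0}$, and in particular the $i$-th coordinate $Y_{ii}$ of $Ye_i$ is $0 = z_i$. Combining the two cases yields $Y_{ii} = z_i$ for all $i$, i.e.\ $\mathrm{diag}(Y) = z = Y_0$. There is no real obstacle here; the only thing to be a little careful about is handling the $z_i = 0$ case in the reverse direction separately, since the type $1$ witness is not defined there and one must invoke the $Ye_i = \mathbf{0}$ clause of Corollary~\ref{corls} instead.
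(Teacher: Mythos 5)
Your proof is correct and is exactly the intended unfolding of the definitions (the paper states this as an observation without proof, having already flagged the $Y_i = \mathbf{0}$ convention for $z_i = 0$ as the ingredient needed to handle the zero case). Nothing to add.
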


The rows of $Y$ that correspond to zero variables in $z$ are filled 
with zeros and called {\em
  special.}  
Moreover if $z_i=1,$ $Y_i= z.$ 
To account for these requirements it is not enough that the integer values
in $Y_0$ appear on the main diagonal. The following claim states that
they are replicated across  all witnesses. 

\begin{claim}\label{simplefact1}
Let  $z'$ be a witness  of  $z$.  If  for some $i$, $z_i\in
\{0,1\}$, then $z'_i=z_i.$
\end{claim}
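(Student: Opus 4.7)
The plan is to unwind the definitions of the two witness types and exploit the symmetry of $Y$ together with the structural conditions on $Y$ given by Corollary~\ref{corls}. I will split into two cases according to whether $z_i = 0$ or $z_i = 1$.

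First I would handle the case $z_i = 0$. The key input is that the $i$-th row $Y_i$ of the protection matrix is identically zero, which is exactly the condition $Ye_i = \mathbf{0}$ from Corollary~\ref{corls}. By symmetry of $Y$, the $i$-th column is also zero, so $Y_{ki} = 0$ for every $k.$ Plugging this into the type~1 witness $z' = Y_k/z_k$ gives $z'_i = 0 = z_i$, and plugging into the type~2 witness $z' = (Y_0 - Y_k)/(1-z_k)$ gives $z'_i = (z_i - 0)/(1-z_k) = 0 = z_i$, using that $Y_0 = z$ on the diagonal-defining side.

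Next I would handle the case $z_i = 1$. Here Corollary~\ref{corls} yields $Ye_i = z$, i.e., $Y_i = z$. By symmetry, $Y_{ki} = Y_{ik} = z_k$ for all $k$. The type~1 witness corresponding to $z_k \neq 0$ then satisfies $z'_i = Y_{ki}/z_k = z_k/z_k = 1 = z_i$, and the type~2 witness corresponding to $z_k \neq 1$ satisfies $z'_i = (Y_{0i} - Y_{ki})/(1-z_k) = (1 - z_k)/(1 - z_k) = 1 = z_i$. Combining both cases closes the claim.

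There is no real obstacle here: the content is purely a bookkeeping verification, and the two ingredients doing the work are (i) the symmetry $Y = Y^T$ and (ii) the boundary conditions $Ye_i = \mathbf{0}$ (when $z_i = 0$) and $Ye_i = z$ (when $z_i = 1$) from Corollary~\ref{corls}. The only mild subtlety is to remember that when $z_i = 0$ the type~1 witness corresponding to $z_i$ itself is not defined, so the claim is only asserted for witnesses that exist; the cases above treat witnesses corresponding to a different index $z_k$, which is exactly what is needed.
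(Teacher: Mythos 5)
Your proof is correct, and it takes the approach the paper itself implies: the paper states Claim~\ref{simplefact1} without a detailed proof, merely pointing to the boundary conditions $Ye_i = \mathbf{0}$ (when $z_i=0$), $Ye_i = z$ (when $z_i=1$) from Corollary~\ref{corls} and the symmetry of $Y$, which is exactly the pair of ingredients you use. Your explicit case split and the observation that the claim only applies to witnesses that are actually defined (i.e., corresponding to a different index $k$) are sound bookkeeping that fills in what the paper leaves implicit.
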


To enforce symmetry  for a special row $Y_i={\bf  0}$ that corresponds
to a  variable $z_i=0,$ it must be  the case that the  $i$th column is
set to  zero as well.  This is ensured by  Claim~\ref{simplefact1} for
all  entries $Y_{ki}$  of  the column  for  which $z_k  \neq 0.$  (The
remaining entries  of the column belong  to special zero  rows and are
equal to zero anyway). 
For  the remaining rows, it will be convenient to 
  express  each variable of a type 1 $z'$  child of $z$ corresponding
to some variable  $z_i$, by defining the factors by which  the variables of $z'$
differ from the  corresponding variables of $z$. Then  the symmetry condition of
$Y$ is satisfied by ensuring that  the condition of the following claim on those
factors holds.

\begin{claim}\label{simplefact2}
Let indices $q, t$ take values in $\{1,\ldots,n\}.$ The symmetry condition of the
protection  matrix  of  $z$  holds  iff  Claim~\ref{simplefact1} holds
 and for  each type  1  witness   $z'$  of  $z$
corresponding to variable $z_q$, for which 
$z'_t=z_tf$, $z_t \neq 0,$ 
then, for  the type 1 witness  $z''$ of  $z$ corresponding to variable
$z_t$, we have $z''_q=z_qf$.
\end{claim}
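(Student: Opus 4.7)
The plan is to unfold the symmetry condition $Y_{qt}=Y_{tq}$ one entry at a time, expressing both sides through the corresponding type $1$ witnesses, and then to see that after accounting for zero/one coordinates via Claim~\ref{simplefact1}, the only remaining content is the multiplicative factor relationship stated in Claim~\ref{simplefact2}.

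First, recall from the definition of a type $1$ witness that, whenever $z_q\neq 0$, the $q$-th row of $Y$ satisfies $Y_q=z_q\,z'$, where $z'$ is the type $1$ witness corresponding to $z_q$; in particular $Y_{qt}=z_q\,z'_t$. Symmetrically, when $z_t\neq 0$ we have $Y_{tq}=z_t\,z''_q$ with $z''$ the type $1$ witness corresponding to $z_t$. The plan is then to split the proof into cases according to the values of $z_q$ and $z_t$. When both are strictly positive, the identity $Y_{qt}=Y_{tq}$ reads $z_q z'_t=z_t z''_q$; writing $z'_t=z_tf$ (permissible since $z_t\neq 0$) immediately gives $z''_q=z_qf$, which is precisely the condition in Claim~\ref{simplefact2}, and the implication is reversible.

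Next I would handle the boundary cases. If $z_q=0$, the row $Y_q$ is special (equal to $\mathbf{0}$), so $Y_{qt}=0$, and the symmetry requirement $Y_{tq}=0$ amounts to $z''_q=0$ when $z_t\neq 0$; this is exactly what Claim~\ref{simplefact1} gives, since $z_q=0$ forces $z''_q=z_q=0$ in every witness. The situation with $z_q=1$ is analogous: by Corollary~\ref{corls} we have $Y_q=z$, so the type $1$ witness corresponding to $z_q$ is $z$ itself and the factor $f$ equals $1$, which reduces the required condition $z''_q=z_q f$ to $z''_q=z_q=1$, again delivered by Claim~\ref{simplefact1}. The case $z_t\in\{0,1\}$ is symmetric. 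Thus Claim~\ref{simplefact1} together with the factor condition covers every pair $(q,t)$.

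For the converse direction, I would simply observe that if $Y$ is symmetric, then $Y_{qt}=Y_{tq}$ in every case, and the above derivations run backwards: the $(0,1)$-entries of $z$ force the witness coordinates by Claim~\ref{simplefact1}, and in the remaining case both $z_q,z_t$ are strictly between $0$ and $1$, so dividing the identity $z_q z'_t=z_t z''_q$ by $z_qz_t$ gives the stated factor relation. The main obstacle I anticipate is purely bookkeeping — making sure that the factor $f$ is well-defined exactly in the regime $z_t\neq 0$ and that the special-row convention ($Y_i=\mathbf{0}$ when $z_i=0$) is consistently used on both sides of the equality so that no spurious symmetry obligation is overlooked.
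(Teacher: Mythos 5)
Your proof is correct and follows the same route the paper sketches in the discussion surrounding Claims~\ref{simplefact1} and \ref{simplefact2}: unfold the entrywise symmetry $Y_{qt}=Y_{tq}$ via $Y_{qt}=z_q z'_t$ when $z_q\neq 0$, observe that the $(0,1)$-coordinates are dispatched by Claim~\ref{simplefact1} together with the special-row convention, and reduce the remaining strictly-fractional case to the multiplicative factor identity $z''_q=z_q f$. The paper leaves the verification implicit, and your case analysis supplies exactly the missing bookkeeping.
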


Observe that  when we construct a  type $1$ witness $z'$  corresponding to $z_t$,
 the type  $2$ witness  $z''$ corresponding  to $z_t$ is  automatically defined. We
 say that $z''$ is the \emph{twin} of $z'$ .
 
\begin{claim}\label{simplefact3}
Let indices $q, t$ take values in $\{1,\ldots,n\}.$ If the protection matrix of $z$
exists, 
the following must hold. 
If $z'_q=z_q(1+\epsilon)$ in the type $1$ witness  corresponding to
$z_t,$ $z_t \neq 1,$ then
$z''_q=z_q(1-\frac{z_t\epsilon}{1-z_t})$ where $z''$ is the  type $2$ twin of $z'.$
\end{claim}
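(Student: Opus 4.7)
The claim is essentially a bookkeeping identity relating the two witnesses associated with the same variable $z_t$, so my plan is to derive it by direct computation from the definitions already laid out in the excerpt.

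First I would unpack the definitions. Assuming a protection matrix $Y$ exists, the type $1$ witness corresponding to $z_t$ (with $z_t \neq 0$) is $z' = Y_t/z_t$, and its twin, the type $2$ witness corresponding to $z_t$ (with $z_t \neq 1$), is $z'' = (Y_0 - Y_t)/(1-z_t)$. By condition~1 of Corollary~\ref{corls} we have $Y_0 = z$, so in particular $Y_{0q} = z_q$. The hypothesis $z'_q = z_q(1+\epsilon)$ therefore translates to the entry-level equality $Y_{tq} = z_t z_q(1+\epsilon)$.

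Next I would substitute this into the formula for the twin. We compute
\[
z''_q \;=\; \frac{Y_{0q} - Y_{tq}}{1-z_t} \;=\; \frac{z_q - z_t z_q(1+\epsilon)}{1-z_t} \;=\; z_q \cdot \frac{(1-z_t) - z_t\epsilon}{1-z_t} \;=\; z_q\!\left(1 - \frac{z_t\epsilon}{1-z_t}\right),
\]
which is exactly the claimed identity. The quantities are well defined precisely because the hypothesis restricts us to $z_t \notin \{0,1\}$, the same regime in which both $z'$ and $z''$ are defined.

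I do not anticipate any real obstacle: the claim is a one-line consequence of the definition of witnesses together with condition~1 of Corollary~\ref{corls}, and the only thing to be careful about is the case analysis on $z_t$, which is already handled by the hypotheses ruling out $z_t = 0$ and $z_t = 1$. The reason the authors isolate this identity as a separate claim is presumably not its depth but its utility: it will be invoked repeatedly in Section~\ref{sec:ls_cfl} to certify that a prescribed pair of witnesses is consistent with some protection matrix, in tandem with Claims~\ref{simplefact1} and~\ref{simplefact2}.
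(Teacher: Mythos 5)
Your proof is correct and is the natural argument. The paper does not provide an explicit proof of Claim~\ref{simplefact3} (it is stated, like Claims~\ref{simplefact1} and~\ref{simplefact2}, as an immediate consequence of the definitions surrounding Corollary~\ref{corls}), and your direct computation — unpacking $z' = Y_t/z_t$ and $z'' = (Y_0 - Y_t)/(1-z_t)$, using $Y_0 = z$ from condition~1, and simplifying — is precisely the intended derivation.
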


To prove the existence of a  protection matrix $Y$ for a vector $z$ we
will proceed as follows. We will define a set $S(z)$ of witness
vectors that  will contain a type $1$  and a type $2$  witness for every
non-integer variable and  one of the appropriate type  for each 
integer  variable. We  will ensure  that the  vectors in  $S(z)$  meet the
conditions of Observation~\ref{obs_diag}, Claims~\ref{simplefact1},
\ref{simplefact2}  and \ref{simplefact3}. In  addition  we will prove 
that the vectors
meet the  feasibility constraints for $K.$ Then 
we will have shown how to  construct  a protection matrix $Y$ of $z$ whose rows
consist of: 
the type  1 vectors  from $S(z)$ scaled each  by the  corresponding variable
$z_i,$ together with  one special {\bf  0} vector  for each zero variable in $z.$

To prove  the survival of a vector  for many rounds we  just embed the
strategy above in an inductive argument. 
The following fact  is immediate from Corollary
\ref{corls}:   if,   for   all   $i$   the   vectors   $Y_i/z_i$   and
$\frac{Y_0-Y_i}{1-z_i}$ (that are  defined) witnessing the survival of
our initial vector $z$, survive  themselves $k$ rounds of LS, then $z$
survives actually $k+1$ rounds of LS.

We
define a tree structure which we 
call the  \emph{evolution tree $T_z$ of $z.$} Every node in $T_z$ is {\em
  associated} with a vector. The tree is defined 
recursively: vector $z$ is associated with the root of the tree, and if $v$ is a node
of $T_z,$ associated with vector $z(v),$  then  the vectors  witnessing the
survival of $z(v)$ 
are associated in one-to-one manner with the children 
of $v.$ If  there are no such witnesses,  the fractional solution $z(v)$  does
not survive one round of LS, and  we call $v$ a \emph{terminal} node. 
The number  of rounds  that our
initial vector $z$ survives, is the length of the shortest path from the root of
the evolution tree $T_z$  to a terminal node.

Given a root vector, we will show that as long as we have walked down the
evolution tree at depth $k$, where $k$  is the target number of rounds, then the
protection matrices of the root and all its descendants are well-defined. 
The inductive step shows how to define all children of a node $v$ and
therefore increase the depth of the tree by one. 
From now on  we  refer
interchangeably to a node and its  associated solution
vector. Accordingly, if $v'$ is a child of node $v,$ $z'$ ($z$) is 
associated with $v'$ (resp. $v$) and 
and  $z'$ is a type $1$ ($2$) witness  of $z$ corresponding to variable
$z_i,$ 
we will refer to node $v'$ as 
a {\em type $1$ (\mbox{resp.} $2$)  child of node-solution $z$
  corresponding to variable $z_i.$}

Finally, the following fact will be useful for the feasibility proof.

\begin{lemma}\label{eqconlemma}
Given a solution $z$ in  the evolution tree that satisfies an equality
constraint $\sum_i a_{i}z_{i}=b$,  and given a child of  $z$ that is a
type  $1$ solution  $z'$ corresponding  to some  $z_t$  that satisfies
$\sum_i a_{i}z'_{i}=b$, then the twin  type $2$ solution $z''$ of $z'$
also satisfies $\sum_i a_{i}z''_{i}=b$.
\end{lemma}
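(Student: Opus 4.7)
The plan is to exploit the fundamental relationship between $z$, its type $1$ witness $z'$, and its type $2$ twin $z''$, namely that $z$ is the $(z_t, 1-z_t)$-convex combination of $z'$ and $z''$. This arises directly from the definition of the protection matrix: the type $1$ witness corresponding to $z_t$ is $z' = Y_t/z_t$ and the type $2$ twin is $z'' = (Y_0 - Y_t)/(1-z_t)$, while condition $1$ of Corollary~\ref{corls} gives $Y_0 = z$ on the hyperplane $z_0 = 1$.

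First I would rewrite these identities as $Y_t = z_t z'$ and $Y_0 - Y_t = (1-z_t) z''$, and add them to obtain the pointwise identity
\[
z_i = z_t\, z'_i + (1-z_t)\, z''_i \qquad \text{for every } i.
\]
This step requires $z_t \neq 1$ so that $z''$ is defined at all, which is exactly the condition under which a type $2$ twin exists.

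Next I would take the linear combination $\sum_i a_i$ of this identity and solve for $\sum_i a_i z''_i$:
\[
\sum_i a_i z''_i \;=\; \frac{1}{1-z_t}\left(\sum_i a_i z_i \;-\; z_t \sum_i a_i z'_i\right) \;=\; \frac{b - z_t b}{1 - z_t} \;=\; b,
\]
using the hypothesis that both $z$ and $z'$ satisfy $\sum_i a_i (\cdot)_i = b$. This completes the argument.

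There is essentially no obstacle here: the only subtlety is the case $z_t = 1$, in which the type $2$ twin is not defined and the statement is vacuous, and the case $z_t = 0$, in which $Y_t = \mathbf{0}$ forces $z'' = z$ and the conclusion is immediate. The value of the lemma is operational rather than technical: it tells us that when verifying equality constraints along the evolution tree it suffices to check them on type $1$ witnesses, halving the work in the inductive feasibility proofs of the subsequent section.
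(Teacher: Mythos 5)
Your proof is correct and takes essentially the same approach as the paper's: the paper's argument routes through Claim~\ref{simplefact3} (the factor form $z''_i = z_i(1 - \tfrac{z_t\epsilon_i}{1-z_t})$ when $z'_i = z_i(1+\epsilon_i)$), which is exactly the identity $z = z_t z' + (1-z_t)z''$ that you derive directly from $Y_0 = z$. The algebraic core — using $\sum_i a_i z_i = \sum_i a_i z'_i = b$ to conclude $\sum_i a_i z''_i = b$ — is identical; you have simply phrased it in the cleaner convex-combination form rather than via the $\epsilon_i$ multipliers.
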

\begin{proof}
Let  $z'_{i}=z_{i}(1+\epsilon_i)$.   From  $\sum_i  a_{i}z_{i}=b$  and
$\sum_i a_{i}z'_{i}=b$ we get $\sum_i a_{i}z_{i}\epsilon_i=0$. Then by
Claim~\ref{simplefact3},          $\sum_i          a_{i}z''_{i}=\sum_i
a_{i}z_{i}(1-\frac{z_t\epsilon_i}{1-z_t})=\sum_ia_{i}z_{i}            -
\frac{z_t}{1-z_t}\sum_i a_{i}z_{i}\epsilon_i=b.$
\end{proof}

\section{Proof of Theorem~\ref{theorem:ls-cfl}}
\label{sec:ls_cfl}

In  this  section we  show  that the  integrality  gap  on a  suitable
instance of \cfl\ remains unbounded even after applying a large number
of iterations of the LS procedure.  

The instance  is the following.  Consider a set  of $n$
facilities  which have $0$  opening cost.  We call  that set
\emph{Cheap}. Moreover, 
consider a  set of $l$ facilities  that have an opening cost  of $1$ each.
Call that set  \emph{Costly}. Think of $l$ as being $\Theta(n)$; we will later
prove that the number of rounds  of survival are maximized for
$l=n$. The set of facilities $F$ is $Cheap\cup
Costly.$  Let all the facilities have the same capacity
$U=n^3$, and let there be a  total of $nU+1$ clients in the set $C.$  All clients and facilities
are at a  distance of $0$ of  each other. Clearly all integral  solutions to the
instance have a cost of at least $1$.

Consider  the following solution  $s$ to  (LP-classic): For  each
facility   $i   \in   Cheap,$   $y_i=1$,    and   for   each   client   $j,$   set
$x_{ij}=\frac{(1-a)}{n}$,  $a=n^{-2}$.  For  each  facility  $i\in  Costly,$  
$y_i=H/n^{2}=b$, for  some sufficiently large 
constant $H$ ($H=10$ is enough), and  for each
client  $j,$  set  $x_{ij}=a/l$.  The  constructed  solution  incurs  a  cost  of
$\frac{l H}{n^{2}},$ which is $\Theta(n^{-1})$ if  $l=\Theta(n).$

It is well-known that some simple valid 
inequalities are not produced early in the LS procedure.
For example, in the case of \cfl\ 
our proof implies that $\Theta(n)$ rounds
are required to obtain the simple inequality $\sum_{i \in F}  y_i \geq \lceil
|{C}| / U  \rceil$ which is facet-inducing for our instance. 
This inequality is not critical however for our proof. 
It is easy to modify the input by adding one  facility and
one client at a large distance from the rest of the instance, 
so that Theorem~\ref{theorem:ls-cfl} continues to hold 
while the inequality   above is  satisfied by a bad fractional solution.
Given an analogous  fixed set of inequalities, an adversary can modify the
instance in a similar manner.

Solution  $s$ cannot  survive $l$  rounds of  application of  the  LS procedure:
consider the path  from the root where we  descend each time via a  type 2 child
corresponding to  a $y_i$  variable of a  costly facility (a  different facility
each time).  After $l$ such steps, assuming  of course that the  nodes are defined,
one can show  that all facilities in $Costly$ will be  closed and the facilities
in $Cheap$  have to absorb  all the  demand in the  instance, which leads  to an
infeasible solution.

\begin{observation}
\label{observation:l-rounds}
Solution $s$ survives less than $l$ rounds of the LS procedure.
\end{observation}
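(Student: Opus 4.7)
The plan is to exhibit, in the evolution tree $T_s$, a descending path from the root that reaches an infeasible vector in at most $l$ steps, thereby witnessing that $s \notin N^l(K)$, where $K$ is the cone over the feasible polytope of (LP-classic). I would descend exclusively via type $2$ children, each time corresponding to the $y$-variable of a fresh costly facility.

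More precisely, starting from $z^{(0)} = s$, at each level $k = 1,\ldots,l$ I choose a previously unused costly facility $i_k \in Costly$ and move to the type $2$ child $z^{(k)}$ of $z^{(k-1)}$ corresponding to $y_{i_k}$. The key observation is that the type $2$ witness corresponding to $z_i$ equals $(Y_0 - Y_i)/(1-z_i)$, and its $i$-th coordinate is $(z_i - Y_{ii})/(1-z_i) = 0$, since the diagonal of $Y$ equals $z$ by Corollary~\ref{corls}. Moreover, Claim~\ref{simplefact1} guarantees that any coordinate already equal to $0$ or $1$ at a node is preserved in every one of its witnesses. A straightforward induction along the path then shows that in $z^{(k)}$ the variable $y_{i_j}$ equals $0$ for every $j \le k$, while $y_i = 1$ for every cheap facility $i$ throughout.

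After the $l$-th descent, $z^{(l)}$ must still lie in the base polytope of (LP-classic); yet it has $y_i = 0$ for every costly facility and $y_i = 1$ for every cheap facility. The constraints $x_{ij} \le y_i$ then force $x_{ij} = 0$ for all $i \in Costly$, so every client must be served from $Cheap$ alone. Summing $\sum_j x_{ij} \le Uy_i = U$ over $i \in Cheap$ yields total available capacity at most $nU = n^4$, while the demand equalities $\sum_i x_{ij} = 1$ over the $|C| = nU+1$ clients give total demand $n^4+1$, a contradiction. Hence such a $z^{(l)}$ cannot exist, and the path in the evolution tree must terminate at some depth strictly less than $l$, so $s$ survives fewer than $l$ rounds.

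The only delicate point is to confirm that the zero-propagation via Claim~\ref{simplefact1} composes cleanly across $l$ successive levels, i.e., that taking a type $2$ witness at level $k$ does not undo any of the zeros $y_{i_1},\ldots,y_{i_{k-1}}$ fixed earlier; this is immediate by applying Claim~\ref{simplefact1} once at each node along the chain. Everything else is arithmetic hard-wired into the choice $|C| = nU+1$: the $n$ cheap facilities alone fall exactly one unit short of covering the total client demand, so closing all $l$ costly facilities instantly produces infeasibility.
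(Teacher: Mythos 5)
Your approach is the same as the paper's: descend through type~$2$ children corresponding to $y$-variables of distinct costly facilities, zero them out, propagate the zeros via Claim~\ref{simplefact1}, and derive a capacity contradiction at depth $l$. The paper carries this out via an unnumbered auxiliary lemma (stated just after Observation~\ref{observation:l-rounds}) that performs exactly this descent.

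The gap is that you never establish that the descent is well-defined at every level. For the type~$2$ witness at $y_{i_k}$ to exist at node $z^{(k-1)}$, you need $y^{(k-1)}_{i_k} < 1$: when $z_i = 1$, Corollary~\ref{corls} forces $Y_i = z$, so the expression $(Y_0 - Y_i)/(1-z_i)$ becomes $0/0$ and no type~$2$ witness is available. Claim~\ref{simplefact1} only asserts that entries already equal to $0$ or $1$ are preserved in witnesses --- it does not prevent a strictly fractional coordinate like $y_{i_k}=b$ from drifting up to $1$ over several steps before you reach it. If an adversarial sequence of protection matrices did so, your descent would stall without producing an infeasible endpoint, and "the path must terminate" would no longer imply non-survival of $s$. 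The paper's auxiliary lemma closes exactly this hole by maintaining the invariant $\sum_{z_i \in S'} z_i < 1$ over the set $S'$ of not-yet-touched costly $y$-variables, which trivially keeps each remaining $y_{i_k} < 1$; the invariant persists because, by Claim~\ref{simplefact3}, any type~$2$ witness satisfies $z'_i \le z_i\bigl(1 + \tfrac{z_j}{1-z_j}\bigr)$ coordinatewise (else its twin type~$1$ witness would have a negative entry), and one then checks $\sum_{z_i\in S'\setminus\{j\}} z'_i < 1$. You should import this invariant, or an equivalent multiplicative upper bound on the growth of the untouched $y$-coordinates, to make the argument complete.
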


For the proof of Observation \ref{observation:l-rounds}
we will actually prove the following stronger Lemma:

\begin{lemma}
Let $S$ be a set of variables $z_1,z_2,...,z_t$ of vector $s_0,$ s.t. $\sum_{z_i \in S}z_i < 1.$
 If  $s_0$ survives $|S|$ rounds of LS, then 
 there is a path $p$ of length $|S|$ starting from the root $s^0$ of the evolution tree 
that ends with a node-solution $s^{(|S|)},$ such that in $s^{(|S|)},$ for all $z_i
\in S,$ $z_{i}=0.$ 
\end{lemma}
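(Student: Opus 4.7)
My plan is a straightforward induction on $|S|$. The base case $|S|=0$ is the trivial path consisting of just the root $s^0$ itself. For the inductive step, with $|S|=t\ge 1$, I would choose some $z_j\in S$ (for definiteness, the variable of largest value in $S$; since $\sum_{z_i\in S} z_i<1$ we have $z_j<1,$ so the descent described below makes sense). Using that $s^0$ survives $t\ge 1$ rounds and hence admits a protection matrix, the type~$2$ child $s'$ of $s^0$ corresponding to $z_j$ is a well-defined node of the evolution tree, and $z'_j=0$ by the very definition of a type~$2$ witness.

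The main work is to show that the shrunken set $S'=S\setminus\{j\}$, viewed as a variable set for $s',$ still satisfies the hypothesis $\sum_{z_i\in S'} z'_i<1$. For this I would invoke Claim~\ref{simplefact3} on the twin type~$1$ witness $s''$ of $s'$ (also corresponding to $z_j$): writing $z''_i=z_i(1+\epsilon_i),$ the claim gives $z'_i=z_i\bigl(1-\tfrac{z_j\epsilon_i}{1-z_j}\bigr)$. Nonnegativity of $z''_i$ then forces $\epsilon_i\ge -1,$ hence
\[
z'_i \;\le\; z_i\Bigl(1+\tfrac{z_j}{1-z_j}\Bigr) \;=\; \frac{z_i}{1-z_j}.
\]
Summing this bound over $i$ with $z_i\in S\setminus\{j\}$ and using the consequence $\sum_{z_i\in S\setminus\{j\}} z_i<1-z_j$ of the strict hypothesis yields $\sum_{z_i\in S'} z'_i<1$, as required.

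With the hypothesis reestablished at $s'$ the induction closes. By Corollary~\ref{corls}, $s'$ survives at least $t-1$ rounds. Applying the inductive hypothesis to $s'$ and $S'$ produces a length-$(t-1)$ path in the evolution tree of $s'$ whose terminal node vanishes on every variable in $S'$. Prepending the edge $s^0\to s'$ gives a length-$t$ path from $s^0,$ and along it the coordinate $z_j$ stays $0$ from $s'$ onwards by Claim~\ref{simplefact1} (any coordinate equal to $0$ or $1$ is preserved in every subsequent witness). Hence the terminal node vanishes on all of $S$.

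The only delicate step is the growth bound $z'_i\le z_i/(1-z_j),$ which rests entirely on the nonnegativity of the twin type~$1$ witness via Claim~\ref{simplefact3}; everything else is bookkeeping. I expect this to be the main obstacle, in particular keeping the direction of the twin relation straight. The \emph{strictness} of the hypothesis is also essential, since the division by $1-z_j$ would not preserve a weak inequality and the induction would collapse. Observation~\ref{observation:l-rounds} then follows at once by taking $S$ to be the $y$-variables of the $l$ costly facilities: their total mass $lH/n^2$ is below $1$ for $l=\Theta(n),$ so surviving $l$ rounds would produce a descendant node at which every costly $y_i=0,$ leaving the $n$ cheap facilities (total capacity $nU$) to cover $nU+1$ clients, a contradiction.
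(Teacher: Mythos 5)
Your proposal is correct and follows essentially the same route as the paper's proof: induct on $|S|$, descend via the type~$2$ child corresponding to a largest-valued $z_j$, use Claim~\ref{simplefact3} and nonnegativity of the twin type~$1$ witness to get $z'_i \le z_i/(1-z_j)$, and thereby preserve the strict hypothesis for $S\setminus\{j\}$. Your write-up is slightly more careful than the paper's in noting that $z_j<1$ (so the type~$2$ descent is defined) and in invoking Claim~\ref{simplefact1} to ensure the zero persists along the rest of the path, but the core argument is identical.
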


\begin{proof}
The proof is by induction on $|S|$.
Suppose that $s^0$ survive $|S|$ rounds. Let $z_{j}$ be the variable in  $S$ with
the highest value. Consider the type 2 $s'$ child of $s^{0}$ corresponding to $z_{j}$.
Then $z'_{i} \leq z_{i}(1+\frac{z_j}{1-z_j})$ otherwise by Claim
$\ref{simplefact3}$ 
$z''_i< 0$ in the twin type $1$ solution
$s''$ of $s'.$ So we have \\
$\sum_{z_i \in S-\{j\}}z'_{i} \leq $\\
$\sum_{z_i \in S-\{j\}}z_i+\sum_{z_i \in S-\{j\}}z_i\frac{z_j}{1-z_j} <$\\
$1-z_j + z_j =1$.\\

\noindent
Setting $S'=S-\{j\}$ we have $\sum_{z_i \in S'}z_i< 1.$ By the  inductive hypothesis
the evolution tree contains a  path $p'$ starting at $s'$ that has 
length $|S|-1.$  By appending  $s^{0}$ before the first node $s'$ of $p'$
we obtain the desired path $p$.
\end{proof}

To prove Observation \ref{observation:l-rounds}, assume that $s^0$ survives $l$ rounds.
Then there is a path of length $l$ starting from the root of the evolution tree,
such that in  the last
node solution $(y,x)^{(l)}$ of the path all the facilities in 
$Costly$ are closed. Clearly this cannot
be a feasible solution.

\noindent 
We are ready to state the main theorem of this section 
which implies that the solution $s$  survives $l/10$ 
 rounds of LS. We  do not make any attempt to optimize the   constant.
At every level of the induction the new witness solutions cannot
differ drastically from  their parent node. 
We identify a  set of invariants that express this controlled
evolution of the values.

\begin{theorem}\label{smalltheorem}
Let $l \leq n$ and $\delta$ be a constant of value $1/H.$ 
 We can construct an evolution tree $T_s$ with root $s$
such that
any  node  $u$ of $T_s$ at depth $k \leq \frac{l}{10}$ 
is associated with a  feasible solution $(y,x)$ that satisfies the following invariants:\\
\vspace*{-0.8cm}
\begin{itemize}
\item[1] For variable $y_i\notin \{0,1\},$ $i \in Costly$, $y_i \geq b-2k\frac{a}{l}$
 and $y_{i}\leq b+2k\frac{a}{l}.$
\item[2] 
\begin{itemize}
\item[(a)] For variable $x_{ij} \notin \{0,1\}$, $i \in Cheap,$
  $\frac{1-a}{n}-2k\frac{a}{nl}b^{-1} \leq  x_{ij} \geq \frac{1-a}{n}+2k\frac{1-a}{n}\max\{1/l,1/n\}.$
\item[(b)] For variable $x_{ij} \neq 0$, $i \in Costly,$ and $y_i \notin \{0,1\}$, $\frac{a}{l} \leq x_{ij}\leq \frac{a}{l}+2k\frac{a(1-a)}{nl}$.
\item[(c)]  For variable $x_{ij} \notin \{0,1\}$, $i \in Costly,$ and
  $y_i=1$, $\frac{a}{l} \leq x_{ij}\leq
  (\frac{a}{l}+2k\frac{a(1-a)}{nl})b^{-1}(1+\nolinebreak \delta)$.
\end{itemize}
\item[3] For $i \in Cheap$, $\sum_{j}x_{ij} \leq (nU+1)\frac{1-a}{n}+2k(nU+1)\frac{a}{nl}$.
\item[4] For $i \in Costly$,
\begin{itemize}
\item[(a)]  if $y_i \neq 1$,$\sum_j x_{ij}\leq (nU+1)\frac{a}{l}+k$.
\item[(b)]  if $y_i = 1$,$\sum_j x_{ij}\leq ((nU+1)\frac{a}{l}+k)(1+\delta)b^{-1}$.
\end{itemize}

\end{itemize}

\end{theorem}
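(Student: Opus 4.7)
The plan is to prove the theorem by induction on the depth $k$ of the evolution tree. The base case $k=0$ is the root solution $s,$ which satisfies all four invariants trivially: the $k$-dependent slack terms vanish, and $s$ meets the capacity and assignment constraints by direct computation from its definition. For the inductive step, given a node $u$ at depth $k$ with an associated solution $(y,x)$ satisfying invariants (1)--(4), I must explicitly construct type~1 and type~2 witness solutions for every non-integer variable of $(y,x),$ and then argue that (i) these witnesses, taken together as the rows of a matrix $Y,$ satisfy the diagonal and symmetry requirements that make $Y$ a legitimate protection matrix per Observation~\ref{obs_diag} and Claims \ref{simplefact1}--\ref{simplefact3}; (ii) each witness is itself feasible for the (LP-classic) relaxation with capacity constraint \eqref{sat1}; and (iii) each witness satisfies invariants (1)--(4) with $k$ replaced by $k+1.$

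The witness construction should exploit the inherent symmetries of the starting solution: Cheap facilities are interchangeable, Costly facilities are interchangeable, and clients are interchangeable. Hence the witness corresponding to a variable $y_i,$ $i \in \text{Costly},$ should treat all other variables in a class-wise uniform manner. For the type~1 witness corresponding to $y_i$ (conditioning on $y_i=1$), raising $y_i$ to $1$ releases $1-y_i$ units of opening mass that must be compensated by slightly reducing $y_{i'}$ for the other Costly facilities, while every client shifts a small amount of $x$-mass away from Cheap facilities onto $i.$ The corresponding type~2 witness (conditioning on $y_i=0$) is then determined by the twin formulas of Claim~\ref{simplefact3}. Witnesses for non-integer $x$-variables are defined analogously by symmetric, uniform mass shifts across the relevant symmetry class.

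With the witnesses in hand, verifying Claims \ref{simplefact1}--\ref{simplefact3} reduces to three routine checks: integer variables are preserved across all witnesses; the shift factor of $z_t$ in the type~1 witness for $z_q$ equals the shift factor of $z_q$ in the type~1 witness for $z_t;$ and the twin's shifts satisfy the formula in Claim~\ref{simplefact3}. All three identities hold by construction because the shifts are parametrized uniformly over each symmetry class. Propagation of invariants (1)--(4) then becomes a bookkeeping exercise: the drift of each variable at depth $k+1$ is additively at most $2a/l$ for a costly $y$-variable (matching the widening $2k \to 2(k+1)$) with corresponding drifts on the $x$-invariants. The assignment equality $\sum_i x_{ij}=1$ is preserved in each type~1 witness by an explicit check, and inherited by the type~2 twin via Lemma~\ref{eqconlemma}.

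The main obstacle is the rigidity imposed by the symmetry requirement of Claim~\ref{simplefact2}: one cannot freely tune each witness independently, because the shift factor of $z_t$ in the type~1 witness for $z_q$ must coincide with the shift factor of $z_q$ in the type~1 witness for $z_t.$ This forces a highly coupled, simultaneous definition of all $2n$ witnesses at each node. The saving grace is that the instance's symmetry collapses the coupling to a small number of parameters per symmetry class, which can be chosen consistently. The binding feasibility condition turns out to be invariant (4a): the bound $\sum_j x_{ij} \leq U y_i$ degrades once the additive drift of order $O(k)$ swamps the nominal value $Ub;$ tracking this precisely, together with the factor $(1+\delta)b^{-1}$ appearing in (2c) and (4b) to absorb the conditioning on $y_i=1,$ yields the threshold $k \leq l/10$ stated in the theorem.
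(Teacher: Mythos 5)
Your high-level skeleton matches the paper: induction on depth, a base case at the root, an inductive step that constructs all type~1/type~2 witnesses as children of a node, verification that these rows form a legitimate protection matrix (via Observation~\ref{obs_diag} and Claims~\ref{simplefact1}--\ref{simplefact3}), and then a feasibility check and invariant propagation with drift of order $O(1/l)$ per step. You also correctly identify the symmetry condition of Claim~\ref{simplefact2} as the source of rigidity and Lemma~\ref{eqconlemma} as the device that transfers the assignment equality to type~2 twins.

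However, the specific witness construction you describe for the central case is wrong, and in a way that would derail the bookkeeping. You write that the type~1 witness for $y_i,$ $i \in \mathit{Costly},$ ``releases $1-y_i$ units of opening mass that must be compensated by slightly reducing $y_{i'}$ for the other Costly facilities.'' No such compensation is needed: (LP-classic) has no constraint fixing $\sum_i y_i,$ so raising one $y$-variable to $1$ imposes no obligation on the others. The paper's construction (case~$1a$) leaves $y_{i'}$ and $x_{i'j}$ for every $i'\in\mathit{Costly}\setminus\{i\}$ completely unchanged; the only adjustment outside column $i$ is a small uniform decrease in the cheap assignment variables $x_{i'j},$ $i'\in\mathit{Cheap},$ chosen precisely to keep $\sum_i x_{ij}=1.$ If you instead reduce the other costly $y_{i'}$'s, you must then also reduce the corresponding $x_{i'j}$'s to preserve $x_{i'j}\le y_{i'}$ and \eqref{sat1}, which forces you to push that deficit back onto the cheap facilities --- pulling against the shift you already made onto $i.$ This creates a second-order coupling across all costly facilities that has to be carried through every symmetry check in Lemma~\ref{symmetry} and every invariant bound; nothing in the statement forces it, and the drift bounds you quote ($2a/l$ per step) would no longer be obvious. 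The correct design principle, which the paper follows consistently in all six subcases, is to perturb as few variable classes as possible: touching a costly $y_i$ as type~1 affects only $y_i,$ $x_{ij}$ for all $j,$ and the cheap $x_{i'j}$'s, nothing else.

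A secondary point: you assert that ``the binding feasibility condition turns out to be invariant (4a),'' but the paper makes no such claim, explicitly states it is not optimizing the constant $1/10,$ and a quick calculation shows several invariants give thresholds around the same order of $l.$ Without the explicit case-by-case algebra (which your sketch defers to ``a bookkeeping exercise'') this assertion is unsubstantiated, and with your modified construction it would likely be false anyway.
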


Setting in our instance $l=n,$ by Theorem~\ref{smalltheorem} we obtain that the solution
$s$ survives $\Omega(n)$ rounds. 
Thus we have proved Theorem \ref{theorem:ls-cfl}.

\subsection{Proof of Theorem ~\ref{smalltheorem}}
\label{subsec:invproof}

The  proof is  by induction  on the  depth of  node $u$.  More  specifically, by
assuming that the invariants hold for an arbitrary node $v$ at depth
less than $l/10$, we
show  how  to construct  all  the  children nodes  of  $v$  so   that they  are
well-defined and 
the invariants are met.  

In the proof, whenever we give the  construction of a type $1$ or type $2$ child
of  $v$  corresponding  to  some  variable  $z_i$, we  refer  to  $z_i$  as  the
\emph{touched  variable} --  we also  say that  $z_i$ is
\emph{touched}  as type 1 or type 2 in the
current step.  We will consider cases according to which variable is touched and
whether it  is touched  as type $1$  or as  type $2$. When  we touch  a variable
$z_i\notin  \{0,1\}$  as type  $1$,  $z_i$  always takes  the  value  $1$ so  by
Observation \ref{obs_diag}  we satisfy  the condition that  the diagonal  of the
underlying protection matrix is equal to the $0$th row.  Note that we will not  give
the construction  for the case in  which $y_i,$ $i\in Cheap,$  is touched, since
$y_i$ is  always $1$ and  the construction is  trivial in those cases.  The same
applies  to  the  cases of  all  variables  that  have  integral values  in  the
node-solution $v$ of the inductive hypothesis, as we simply 
enforce Claim~\ref{simplefact1}.
 
Another   feature of   our construction is  the following: when  a fractional
variable $x_{i'j}$ is touched as type $1$, it is set to  $1,$ and for all
$i \neq i',$  $x_{ij}$
becomes $0$. If $x_{i'j}$ is touched as type $2$, it is set to  $0$ and
in order to maintain feasibility 
its previous value is
distributed  among the  other assignments  of client  $j$. Thus  for  every $j$,
either there is  some $i'$ such that  $x_{i'j}=1$ and for all other  $i \neq i',$
$x_{ij}=0$ (e.g.,  when cases $1b$,  $1c$ below have happened for  an ancestor of  $v$), or
there are at most $k$ facilities to which the assignment of $j$ is $0$ (if there
are type $2$ nodes,  through cases $2a$, $2b$, $2c$, along the  path of the tree
that leads  to $v$). 
In fact, as far as assignments to cheap facilities are concerned,
the upper bound of $k$ holds cumulatively  across all clients,
since 
no more than  $k$ assignment variables can be touched as Type 2 along  a
path of length $k.$ 
Specifically, 
let $C'$ be the set of clients $j$ for which, for all $i \in F,$ $x_{ij} < 1.$ 
We will use the fact that $| \{ x_{ij},  i \in Cheap, j \in C' \mid
x_{ij}=0 \}| <
k.$

Note  that  the  invariants of    Theorem~\ref{smalltheorem}  imply the  satisfaction  of
constraints  \eqref{x<y},\eqref{1>x>0},\eqref{eq:nni} and  \eqref{sat1}  for the
number  of  rounds  we consider.  Thus,  when  proving  the feasibility  of  the
constructed solution each time, we only have to ensure that \eqref{eq} holds.

\begin{lemma}\label{easyconstraints2}
Let  $(y,x)$  be a node-solution defined  at   depth  $k\leq
\frac{l}{10}$ of the evolution tree $T_s.$  If $(y,x)$ satisfies
Invariants 1--4, then $(y,x)$ meets 
constraints \eqref{x<y},\eqref{1>x>0},\eqref{eq:nni} and \eqref{sat1}.
\end{lemma}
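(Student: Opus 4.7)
The plan is to verify each of the four constraints by a direct case analysis (cheap vs.\ costly facility; integral vs.\ fractional variable), substituting the bounds supplied by Invariants~1--4 and checking that the resulting slack is non-negative for $U = n^3$, $a = n^{-2}$, $b = H n^{-2}$, $\delta = 1/H$, $k \leq l/10$ and $l \leq n$. Throughout I will use the property of the construction stated earlier in Section~\ref{subsec:invproof} that whenever $y_i$ or an $x_{ij}$ is set to $0$ by a type~$2$ touch, the dependent assignments are simultaneously forced to $0$, so \eqref{x<y}, \eqref{1>x>0} hold trivially on these zeroed coordinates.

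For \eqref{eq:nni}, cheap facilities have $y_i=1$ by construction; for a costly fractional $y_i$, Invariant~1 yields $y_i \in [b - 2ka/l,\ b + 2ka/l] \subseteq [(H-0.2)/n^2,\ (H+0.2)/n^2] \subset (0,1)$ because $k/l \leq 1/10$. For \eqref{1>x>0}, each of the three sub-cases of Invariant~2 gives an upper bound of the form (principal term)$\,\cdot\,(1 + O(k/l))$; with $k \leq l/10$ the multiplicative factor is at most $6/5$, keeping every bound safely below $1$, while the lower bounds are manifestly positive. Any $x_{ij}$ not covered by one of the three sub-cases of Invariant~2 is integral, and $x_{ij}=1$ occurs only when the construction has simultaneously set $y_i=1$, so \eqref{x<y} holds on those coordinates for free.

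For \eqref{x<y}, if $y_i = 1$ the constraint reduces to $x_{ij} \leq 1$, already established. If $y_i$ is fractional (so $i \in Costly$), I compare the Invariant~2(b) upper bound $(a/l)(1 + 2k(1-a)/n) = \Theta(1/(n^2 l))$ with the Invariant~1 lower bound $b - 2ka/l = \Theta(H/n^2)$; since $H$ is a large constant the inequality holds with constant-factor slack. For \eqref{sat1} I bound $\sum_j x_{ij}$ by Invariant~3 (cheap $i$, $y_i=1$) or Invariant~4 (costly $i$) and compare with $Uy_i$. For $i \in Cheap$, Invariant~3 gives $\sum_j x_{ij} \leq (nU+1)((1-a)/n + 2ka/(nl)) \leq U(1 - 4a/5) + O(1) < U = Uy_i$. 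For $i \in Costly$ with $y_i$ fractional, Invariant~4(a) yields $\sum_j x_{ij} \leq (nU+1)a/l + k = \Theta(n^2/l) + O(l)$, while $Uy_i \geq U(b - 2ka/l) = Hn - O(n)$; for $l = \Theta(n)$ this gap is $\Theta((H - \text{const})\,n)$, positive once $H$ is sufficiently large. For $i \in Costly$ with $y_i = 1$, Invariant~4(b) gives $\sum_j x_{ij} \leq ((nU+1)a/l + k)(1+\delta)b^{-1} \leq (1.21/H)\, n^3 < n^3 = Uy_i$ when $H \geq 10$.

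The main difficulty is not any individual inequality but simultaneously controlling all the ``error'' terms---the $2k\cdot(\cdot)$ corrections that the invariants append to the initial values---uniformly for $k$ up to $l/10$. This is precisely why the setup picked a large constant $H$ with $b = H/n^2$ and $\delta = 1/H$: every additive or multiplicative perturbation in Invariants~1--4 is proportional either to $1/H$ or to $k/l \leq 1/10$ relative to the quantity it must be compared with, so each of the four constraints closes with a constant-factor margin over the entire regime $k \leq l/10$, $l = \Theta(n)$ that is used in Theorem~\ref{theorem:ls-cfl}.
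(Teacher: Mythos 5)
The paper states Lemma~\ref{easyconstraints2} without proof---it is the assertion immediately preceding the lemma (``Note that the invariants \ldots\ imply the satisfaction of constraints \ldots'') promoted to a formal statement and left to the reader. Your proof supplies the explicit verification, and it is substantively correct: you correctly partition the $x_{ij}$ coordinates into zeroed, integral, and fractional, note that every $x_{ij}=1$ is created simultaneously with $y_i=1$ so \eqref{x<y} never bites there, and then close \eqref{eq:nni}, \eqref{1>x>0}, \eqref{x<y}, and \eqref{sat1} by plugging in Invariants~1--4 with $U=n^3$, $a=n^{-2}$, $b=Hn^{-2}$, $\delta=1/H$, $k\le l/10$. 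Your numeric estimates (e.g.\ $(1.21/H)n^3 < U$ for costly open facilities, $\frac{6}{5n}<1$ for cheap assignments, the $\Theta(H/n^2)$ vs.\ $\Theta(1/(n^2 l))$ comparison for \eqref{x<y} at fractional $y_i$) check out.

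One point worth surfacing: in the \eqref{sat1} comparison for a costly facility with fractional $y_i$, you correctly observe that the gap $(H-\mathrm{const})\,n$ against $(nU+1)\frac{a}{l}+k = \Theta(n^2/l)+O(l)$ is positive only when $l = \Theta(n)$. This is not a defect of your argument but an implicit assumption the paper makes (``Think of $l$ as being $\Theta(n)$''; eventually it sets $l=n$). Strictly read, the lemma's hypothesis ``$l\le n$'' is too weak: for small $l$ even the root solution $s$ violates \eqref{sat1} on costly facilities, so the lemma should be read with $l=\Theta(n)$ (or simply $l=n$) as the paper uses. Flagging this makes your verification more honest than the paper's unstated claim.

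Your high-level summary that each sub-case of Invariant~2 has an upper bound of the form $(\text{principal})\cdot(1+O(k/l))$ is slightly loose for 2(c), where the principal term itself already carries the extra factor $b^{-1}(1+\delta)=\Theta(n^2/H)$; but since you redo the arithmetic explicitly for \eqref{1>x>0} and \eqref{sat1}, the conclusion is unaffected.
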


We now explain the  inductive step that constructs the children of
node $v,$ where $v$ is at depth $k < l/10.$ We  distinguish  cases according to the
variable that is touched. 

\medskip
\noindent
{\bf Case 1: type $1$ children}

\noindent 
{\bf \underline{subcase $1a$:} touched variable is  $y_{i_k}$, $i_k \in Costly$}

{\sc Algorithm}

Consider the type $1$ child $(y',x')$ of $v$ corresponding to variable $y_{i_k}$. Variables $y_{i_k},x_{i_kj}$ for all $j$ are multiplied by a factor of $1/y_{i_k}$ and so  $y'_{i_k}=1$.
Note that since we only consider cases where $y_{i_k}$ is fractional, by 
the inductive hypothesis we have that for all variables $x_{i_kj}$, Invariant 2.b holds.
The variables involving facilities $i' \in Costly-\{i_k\}$, namely $y_{i'},x_{i'j}$ for all $j$, remain the same.  For all $j$ and for all $i' \in Cheap$ such that $x_{i'j}\neq 0$  we have $x'_{i'j}=x_{i'j}-\frac{(1/y_{i_k}-1)x_{i_kj}}{t}=x_{i'j}(1-\frac{(1/y_{i_k}-1)x_{i_kj}}{x_{i'j}t})$, where $t$ is
the number of facilities in $Cheap$ for which $j$ is assigned with a non-zero fraction (so $t \geq n-k$).

{\sc Feasibility}

Constraint \eqref{eq} is satisfied by construction:\\

\noindent
$\sum_i x'_{ij}=\sum_i x_{ij} +(1/y_{i_k}-1)x_{i_kj}-\sum_{i \in Cheap| x_{ij}>0}\frac{(1/y_{i_k}-1)x_{i_kj}}{t}=$\\
$\sum_i x_{ij}=1$\\

{\sc Invariants}

{\sf Invariant $1$}

For $i \in Costly-\{i_k\}$, $y_i$ remain unchanged so Invariant $1$ holds by
the inductive hypothesis (from now abbreviated as i.h).

{\sf Invariant $2$}

For $i \in Cheap$ we have $2.a$:

\noindent
$x'_{ij}=x_{ij}-\frac{(1/y_{i_k}-1)x_{i_kj}}{t}\geq$ \hfill  (by Invariants $1$, $2$ of i.h. and being  generous)\\
$\frac{1-a}{n}-2k\frac{a}{nl}b^{-1}-2b^{-1}\frac{a}{nl}\geq$\\
$\frac{1-a}{n}-2(k+1)\frac{a}{nl}b^{-1}$\\

For $i \in Costly-\{i_k\}$, $2.b$ holds since variables $x_{ij}$ were not changed. For $x_{i_kj}$:

\noindent
$x'_{i_kj}=x_{i_kj}\frac{1}{y_{i_k}}\leq$ \hfill (by Invariants $2.b$, $1$)\\
$(\frac{a}{l}+2k\frac{a(1-a)}{nl})b^{-1}(1+o(1))$

{\sf Invariant $3$}

Observe than the total demand assigned to each facility in $Cheap$ was decreased so Invariant $3$ holds by the 
inductive hypothesis.

{\sf Invariant $4$}

For $i \in Costly-\{i_k\}$ Invariant 4 holds by inductive hypothesis. For $i_k$ we have $4.b$:

\noindent
$\sum_j x'_{i_kj}=1/y_{i_k}\sum_j x_{i_kj}\leq$ \hfill  (by the invariants of i.h.)\\
$b^{-1}(1+o(1))((nU+1)\frac{a}{l}+k)$\\

\medskip
\noindent
  {\bf \underline{subcase $1b$:} touched variable is  $x_{i_kj*}$, $i_k \in Costly$}

{\sc Algorithm}

Consider the type $1$ children $(y',x')$ of $v$ corresponding to variable $x_{i_kj*}$. Variable $y_{i_k}$ is multiplied by a factor of $1/y_{i_k}$ and so  $y'_{i_k}=1$ (and  of course $x'_{i_kj*}=1$, and $x'_{ij*}=0$ for $i\neq i_k$). Every other variable remains the same.

{\sc Feasibility}
The feasibility of this case is trivial.

{\sc Invariants}
The Invariants $1,$ $2,$ $3$ in this case are satisfied trivially. For $4$ we have for facility $i_k$:

\noindent
$\sum_j x'_{i_kj}\leq$\hfill (variable $x_{i_kj*}$ becomes $1$)\\
$ \sum_j x_{i_kj} + 1\leq$\hfill (by $4$ of i.h.)\\
$(nU+1)\frac{a}{l}+k+1$\hfill  if $y_{i_k} \neq 1$  or \\
$((nU+1)\frac{a}{l}+k+1)b^{-1}$\hfill if $y_{i_k} =1$\\ 

In either of the two cases Invariant $4.b$ holds for the new value $y'_{i_k}.$

\medskip
\noindent
{\bf \underline{subcase  $1c$:} touched variable is  $x_{i_kj*}$, $i_k \in Cheap$}

{\sc Algorithm}

Consider the type $1$ children $(y',x')$ of $v$ corresponding to variable $x_{i_kj*}$. Variables $y_{i}, i \in Costly$ with $y_{i}\notin \{0,1\}$ are multiplied by a factor of $(1-\frac{(1/y_{i}-1)x_{ij*}}{x_{i_kj*}t})$, where $t$ is again the number of
facilities in $Cheap$ for which $j*$ is assigned with a non zero fraction (so $t \geq n-k$).
Of course $x'_{i_kj*}=1$, and $x'_{ij*}=0$ for $i\neq i_k$ as usual. Every other variable remains the same.

{\sc Feasibility}
Obviously \eqref{eq} is satisfied. All other constraints are satisfied by 
Lemma \ref{easyconstraints2}.

{\sc Invariants}

{\sf Invariant $1$}

For each $i \in Costly$ such that $y_{i}\notin \{0,1\}$ we have:

\noindent
$y'_i=y_i(1-\frac{(1/y_{i}-1)x_{ij*}}{x_{i_kj*}t})\geq$\hfill (by Invariant $1$ of i.h.)\\
$b-2k\frac{a}{l}
-y_i(\frac{(1/y_{i}-1)x_{ij*}}{x_{i_kj*}t})\geq$\hfill (by Invariants
$1$, $2.b$ of i.h.)\\
$b-2k\frac{a}{l}- 2\frac{a}{l} = b-(2k+2)\frac{a}{l}$

{\sf Invariant $2$}

Variables $x'_{ij}$ remain unchanged for $j\neq j*$. For $j*$, $x'_{i_kj*}=1$ while for $i\neq i_k$
we have $x'_{ij*}=0$, so $2$ is trivially satisfied.

{\sf Invariant $3$}

For $i\in Cheap-\{i_k\}$ the total demand is decreased (because of $j*$). For $i_k$:

\noindent
$\sum_j x'_{i_kj}\leq \sum_j x_{i_kj} +1\leq$\hfill (by $3$ of i.h.)\\
$(nU+1)\frac{1-a}{n}+2k(nU+1)\frac{a}{nl}+1\leq (nU+1)\frac{1-a}{n}+2(k+1)(nU+1)\frac{a}{nl}$\\

{\sf Invariant $4$}

The demand assigned to facilities in $Costly-\{i_k\}$ is decreased  (because of $j*$) 
so $4.a,$ $4.b$ trivially hold.

\medskip \medskip
\noindent
{\bf Case 2: type $2$ children}

\noindent 
{\bf \underline{subcase $2a$:} touched variable is  $y_{i_k}$, $i_k \in Costly$}

{\sc Algorithm}

Consider the type $2$ children $(y',x')$ of $v$ corresponding to variable $y_{i_k}\notin \{0,1\}$.
Let $f=\frac{y_{i_k}}{1-y_{i_k}}$. Solution $(y',x')$ is dictated by its twin type $1$ solution (case 1a):
 variables $y_{i_k},x_{i_kj}$ for all $j,$ are multiplied by a factor of $(1-f(1/y_{i_k}-1))$ and so  $y'_{i_k}=0$ and $x'_{i_kj}=0$, that is facility $i_k$ closes. The variables involving facilities $i' \in Costly-\{i_k\}$, namely $y_{i'},x_{i'j}$ for all $j$, remain the same. 
For all $j$ and for $i' \in Cheap$ such that $x_{i'j}\neq 0$ we have $x'_{i'j}=x_{i'j}(1+\frac{f(1/y_{i_k}-1)x_{i_kj}}{x_{i'j}t})$, where $t$ is again the number of
facilities in $Cheap$ for which $j$ is assigned with a non zero fraction (so $t \geq n-k$).

{\sc Feasibility}
Constraint \eqref{eq} is satisfied by Lemma \ref{eqconlemma}.

{\sc Invariants}

{\sf Invariant $1$}

For $i \in Costly-\{i_k\}$, $y_i$ remain unchanged so Invariant $1$ holds by inductive hypothesis.

{\sf Invariant $2$}

For $i \in Cheap$ we have $2.a$:\\

\noindent
$x'_{ij}=x_{i'j}+\frac{f(1/y_{i_k}-1)x_{i_kj}}{t}\leq$ \hfill (by Invariants $1$, $2$ of i.h.)\\
$\frac{1-a}{n}+2k\frac{1-a}{n}\max \{1/l,1/n \}+2\frac{a}{nl}\leq$ \hfill (being very generous)\\
$\frac{1-a}{n}+(2k+2)\frac{1-a}{n}\max \{1/l,1/n \}$

For $i \in Costly-\{i_k\}$, $2.b$ holds since variables $x_{ij}$ were
not changed. \\

{\sf Invariant $3$}

For $i \in Cheap$ we have:\\

\noindent
$\sum_j x'_{ij}=\sum_j x_{ij} + \frac{1}{n}\sum_{j}x_{i_kj} +o(1)\leq$ \hfill (by Invariants $3,$ $4$ of i.h.)\\
$(nU+1)\frac{1-a}{n}+2k(nU+1)\frac{a}{nl} +\frac{(nU+1)\frac{a}{l}+k}{n}+o(1)\leq$\\
$(nU+1)\frac{1-a}{n}+(2k+2)(nU+1)\frac{a}{nl}$\\

The $o(1)$ above is due to the fact that at most $k$ assignment variables
 for some cheap facilities may have been touched as type $2$ and are $0.$ 
For  those same clients the assignment to $i_k$ is fractional,
 so the demand
corresponding to them  that was assigned to  $i_k,$  must be 
distributed among the, at least $n-k$, available
cheap facilities. That additional demand is 
at most $\frac{k (\frac{a}{l}+2k\frac{a(1-a)}{nl})}{n-k} = o(1).$

{\sf Invariant $4$}

For $i \in Costly-{i_k}$ Invariant $4$ holds by inductive hypothesis. For $i_k$ we have $\sum_j x_{i_kj}=0$.\\

\medskip
\noindent
{\bf \underline{subcase $2b$:} touched variable is  $x_{i_kj*}$, $i_k \in Costly$}

{\sc Algorithm}

Consider the type $2$ children $(y',x')$ of $v$ corresponding to variable $x_{i_kj*}$. 
Let $f=\frac{x_{i_kj*}}{1-x_{i_kj*}}$. Solution $(y',x')$ is dictated by its twin type $1$ node-solution (case $1.1b$):
variable $y_{i_k}$ is multiplied by a factor of $(1-f(1/y_{i_k}-1))$ and  for $i\neq i_k$, $x'_{ij*}=x_{ij*}(1+f)$ and
  $x_{i_kj*}=0$. Every other variable remains the same.

{\sc Feasibility}
The feasibility of this case is trivial by Lemma \ref{eqconlemma}.

{\sc Invariants}

{\sf Invariant $1$}

For facilities $i\in Costly-\{i_k\}$ the proof is trivial (no change).
For $i_k$ we have:

\noindent
$y'_{i_k}=y_{i_k}(1-f(1/y_{i_k}-1))=$\\
$y_{i_k}-(1-y_{i_k})\frac{x_{i_kj*}}{1-x_{i_kj*}}\geq$ \hfill (by Invariants $1,$ $2$ of i.h.) \\
$b-2k\frac{a}{l}-2\frac{a}{l}\geq$\\
$b-(2k+2)\frac{a}{l}$\\

{\sf Invariant $2$}

For client $j*$ and facility $i\in Cheap$ we have $2.a$:

\noindent
$x_{ij*}=x_{ij*}(1+f)\leq$ \hfill (by Invariant $2$ of i.h.)\\
$\frac{1-a}{n}+2k\frac{1-a}{n}\max\{1/l,1/n\}+2\frac{(1-a)a}{nl}\leq$\\
$\frac{1-a}{n}+(2k+2)\frac{(1-a)}{n}\max\{1/l,1/n\}$\\

For client $j*$ and facility $i\in Costly$ we have $2.b$:

\noindent
$x_{ij*}=x_{ij*}(1+f)\leq$
$\frac{a}{l}+2k\frac{a(1-a)}{nl}+2\frac{a^2}{l^2}\leq$\\
$\frac{a}{l}+2(k+1)\frac{a(1-a)}{nl}$

Case $2.c$ similarly.

{\sf Invariant $3$}

For $i\in Cheap$:

\noindent
$\sum_j x'_{ij}\leq \sum_j x_{ij} + 1\leq$ \hfill (by $3$ of i.h.)\\
$(nU+1)\frac{1-a}{n}+2(k+1)(nU+1)\frac{a}{nl}$\\

{\sf Invariant $4$}

For $i_k$ the total demand is decreased while for $i\in
Costly-\{i_k\}$:

\noindent
$\sum_j x'_{ij}\leq \sum_j x_{ij} + 1\leq$ \hfill (by $3$ of i.h.)\\
$(nU+1)\frac{a}{l}+k+1$\hfill  if $y_i \neq 1$  or \\
$((nU+1)\frac{a}{l}+k+1)b^{-1}$\hfill if $y_i =1$\\

\medskip 

\noindent
{\bf \underline{subcase $2c$:} touched variable is  $x_{i_kj*}$, $i_k \in Cheap$}

{\sc Algorithm}

Consider the type $2$ children $(y',x')$ of $v$ corresponding to variable $x_{i_kj*}$.
Let $f=\frac{x_{i_kj*}}{1-x_{i_kj*}}$. Solution $(y',x')$ is dictated by its twin type $1$ node-solution (case $1.1c$):
variables $y_{i}\notin \{0,1\}, i \in Costly,$ are multiplied by a
factor of $(1+f\frac{(1/y_{i}-1)x_{ij*}}{x_{i_kj*}t}),$ where 
 $t$ is again the number of
facilities in $Cheap$ for which $j$ is assigned with a non zero fraction (so $t \geq n-k$).
 For $i\neq i_k$, $x'_{ij*}=x_{ij*}(1+f)$ while $x'_{i_kj*}=0$. Every other variable remains the same.

{\sc Feasibility}
The satisfaction of \eqref{eq} is ensured by Lemma \ref{eqconlemma}.

{\sc Invariants}

{\sf Invariant $1$}

For  facility $i \in Costly$ such that $y_i \notin \{0,1\}$ we have:

 \noindent
$y'_{i}=y_{i}(1+f\frac{(1/y_{i}-1)x_{ij*}}{x_{i_kj*}n})=$\\
$y_{i}+(1-y_{i})\frac{x_{ij*}}{(1-x_{i_kj*})t}\leq$\hfill (by Invariants $1,$ $2$ of i.h.) \\
$b+2k\frac{a}{l}+2\frac{a}{nl}\leq$\\
$b+(2k+2)\frac{a}{l}$\\

{\sf Invariant $2$}

For client $j*$ and facility $i\in Cheap$ we have $2.a$:

\noindent
$x_{ij*}=x_{ij*}(1+f)\leq$ \hfill (by Invariant 2 of i.h.)\\
$\frac{1-a}{n}+2k\frac{1-a}{n}\max\{1/l,1/n\}+2\frac{(1-a)^2}{n^2}\leq$\\
$\frac{1-a}{n}+(2k+2)\frac{1-a}{n}\max\{1/l,1/n\}$\\

For client $j*$ and facility $i\in Costly$ we have $2.b$:

\noindent
$x_{ij*}=x_{ij*}(1+f)\leq$\hfill (by Invariant $2$ of i.h.)\\
$\frac{a}{l}+2k\frac{a(1-a)}{nl}+2\frac{(1-a)a}{nl}\leq$\\
$\frac{a}{l}+(2k+2)\frac{a(1-a)}{nl}$\\

{\sf Invariant $3$}

The demand assigned to $i_k$ is decreased. For $i\in Cheap - \{i_k\}$:

\noindent
$\sum_j x'_{ij}\leq \sum_j x_{ij} + 1\leq$\hfill (by $3$ of i.h.)\\
$(nU+1)\frac{1-a}{n}+2(k+1)(nU+1)\frac{a}{nl}$\\

{\sf Invariant $4$}

For $i\in Costly$:

\noindent
$\sum_j x'_{ij}\leq \sum_j x_{ij} + 1\leq$\hfill (by $3$ of i.h.)\\
$(nU+1)\frac{a}{l}+k+1$\hfill if $y_i \neq 1$ or \\
$((nU+1)\frac{a}{l}+k+1)b^{-1}$\hfill if $y_i =1$\\

\medskip \medskip 
The case analysis is complete. 
It remains to show that the witness vectors we constructed for node
$v$ satisfy the symmetry requirements.

\begin{lemma}\label{symmetry}
The symmetry condition, as stated in Claim \ref{simplefact2},
 is satisfied  for  the children of node-solution $v.$
\end{lemma}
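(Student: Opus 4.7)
The plan is to verify that the children of $v$ constructed in Cases 1a--1c and 2a--2c jointly satisfy Claims \ref{simplefact1}, \ref{simplefact2}, and \ref{simplefact3}; together these amount to the symmetry condition on the protection matrix $Y$ of $v.$ Claim \ref{simplefact3} is free by construction: each type 2 child in Cases 2a, 2b, 2c was defined precisely as the Claim \ref{simplefact3} twin of the corresponding type 1 child from Cases 1a, 1b, 1c. Hence it suffices to verify Claims \ref{simplefact1} and \ref{simplefact2} for the type 1 children. Claim \ref{simplefact1} is immediate by inspection of Cases 1a--1c: variables $y_i$ with $i\in Cheap$ are never touched and remain at $1$; any entry $z_i=0$ in $v$ remains $0$ in every child because the algorithms only scale active entries; and the only entries raised to $1$ are the touched variables themselves.

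The main content is Claim \ref{simplefact2}, which I would check by enumerating ordered pairs $(z_q, z_t)$ of fractional entries of $v$ across the three touchable kinds: (A) $y_i$ with $i\in Costly,$ (B) $x_{ij}$ with $i\in Costly,$ and (C) $x_{ij}$ with $i\in Cheap.$ In most pair types the two variables are not coupled by the respective cases: e.g.\ $(A,A),$ $(A,B)$ with distinct facilities, $(B,B)$ with distinct client and facility, $(C,C)$ with distinct clients, and $(B,C)$ with distinct clients all leave both variables unscaled, giving matching factor $1$.

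The genuinely coupled pairs, and the algebraic checks they require, are the following.
(i) $(y_{i_k},\, x_{i_k j})$ with $i_k\in Costly$: Case 1a scales $x_{i_k j}$ by $1/y_{i_k}$, and Case 1b scales $y_{i_k}$ by $1/y_{i_k}$.
(ii) $(y_{i_k},\, x_{i' j})$ with $i_k\in Costly,\ i'\in Cheap$: Case 1a scales $x_{i'j}$ by $1-(1/y_{i_k}-1)x_{i_k j}/(x_{i'j}\,t)$, and Case 1c scales $y_{i_k}$ by the same expression --- this is exactly why the Case 1c formula was written with $t$ and $x_{i_k j^\ast}$ in the denominator.
(iii) Pairs $(x_{i_k j^\ast},\, x_{i' j^\ast})$ sharing a client $j^\ast$: in either direction Case 1b or 1c zeroes out the other entry, so both scale factors are $0$.

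Each of these matches is hardwired into the algorithm descriptions of Cases 1a--1c, so the whole verification reduces to a short finite table of algebraic identities. The main obstacle is bookkeeping rather than substance: the conceptual point is that the formulas in Cases 1a--1c were chosen in symmetric pairs so that every cross-effect from touching one variable is mirrored by an equal cross-effect from touching the other, which is the content of Claim \ref{simplefact2}.
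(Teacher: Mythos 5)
Your proposal is correct and follows essentially the same approach as the paper: both verify Claim~\ref{simplefact1} by noting integral entries are never altered, and then check Claim~\ref{simplefact2} by enumerating ordered pairs of fractional variables and confirming that the multiplicative factors hard-coded into Cases 1a--1c mirror each other (the $1/y_{i_k}$ match between 1a and 1b, the identical $\bigl(1-(1/y_i-1)x_{ij^*}/(x_{i_kj^*}t)\bigr)$ factor between 1a and 1c, the zero factor for assignments sharing a client, and unit factors for all other pairs). The only cosmetic difference is that you frame the check as Claims~\ref{simplefact1}--\ref{simplefact3} jointly, whereas the lemma asks only for Claim~\ref{simplefact2}; your remark that Claim~\ref{simplefact3} is satisfied automatically because the type~2 children were defined as twins is correct but not strictly needed here.
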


\begin{proof}
By construction we never alter integer values of variables, therefore
the condition of Claim~\ref{simplefact1} holds. 

When a variable  $y_i$, $i \in Costly,$ is touched then  for the symmetry between
$y_i$ and each other variable we have:

 For all $j$, variables $x_{ij}$ are multiplied by $1/y_i$ (case $1a$), and when
 some $x_{ij}$ is touched, variable $y_i$ is multiplied by $1/y_i$ (case $1b$).

  For  all  $j$,   variables  $x_{i'j}$,  $i'  \in  Cheap,$   are  multiplied  by
  $(1- (1/y_i-1)\frac{x_{ij}}{x_{i'j}t})$ (case $1a$), and when some $x_{i'j}$ is
  touched, variable $y_i$ is multiplied by $(1- (1/y_i-1)\frac{x_{ij}}{x_{i'j}t})$
  (case $1c$).

 For all $j$, variables $y_{i''}$, $x_{i''j}$, $i'' \in Costly-\{i\},$ are
 multiplied  by  $1$ (case  $1a$),  and when  $y_{i''}$  or  some $x_{i''j}$  is
 touched, variable $y_i$ is multiplied by $1$ (cases $1a$, $1b$). \\\\

When  a variable  $x_{ij},$ $i  \in  Costly,$ is  touched then  for the  symmetry
between $x_{ij}$ and each other variable we have:

For all  $j'\neq j$  and all  $i'$, variables $x_{i'j'}$  are multiplied  by $1$
(case  $1b$),  and  when  some  $x_{i'j'}$  is  touched,  variable  $x_{ij}$  is
multiplied by $1$ (cases $1b$, $1c$).

  For $i'  \neq i$, variables $x_{i'j}$  are multiplied by $0$  (case $1b$), and
  when some $x_{i'j}$ is touched,  variable $x_{ij}$ is multiplied by $0$ (cases
  $1b$, $1c$). \\\\

Finally, when  variable  $x_{ij},$ $i  \in  Cheap,$ is  touched then  for the  symmetry
between $x_{ij}$ and each other variable, the remaining cases that have not been
covered above are: 

For all $j'\neq  j$ and all $i' \in Cheap,$  variables $x_{i'j'}$ are multiplied
by  $1$  (case  1c), and  when  $x_{i'j'}$  is  touched, variable  $x_{ij}$  is
multiplied by $1$ (case $1c$). 

For all  $i' \in  Cheap,$ variables $x_{i'j}$ are multiplied by $0$ (case
$1c$),  and when $x_{i'j}$  is touched,  variable $x_{ij}$  is multiplied  by $0$
(case $1c$).

\end{proof}

The proof of Theorem~\ref{smalltheorem} is now complete.

The proof 
yields a tradeoff between 
the number of rounds as a function of the dimension of the instance  
and the integrality gap, 
which can be obtained by toying with the quantities $U$, $a$, and $b$ that are
left as parameters. One can obtain a higher gap that survives for a
smaller number of rounds.

\section{Discussion}\label{disc}
\label{sec:open}

It is not hard to see that our proof of Theorem~\ref{theorem:ls-cfl} 
also yields the same lower bound 
for the mixed LS$_{+}$ \cite{Cornuejols08}
procedure:  simply restrict  the constructed  protection  matrices  to the  $y$
variables. The  resulting matrices  are of the  form $yy^T +  Diag(y -
y^2)$   which    are   well-known   to    be   positive   semidefinite
(see, e.g., \cite{GoemansT01}).

 The \cfl\ instance for which  the LS procedure fails is essentially a
 Minimum Knapsack instance which can be approximated within a constant
 factor by adding the, exponentially many, knapsack-cover inequalities
 \cite{CarrFLP00}. Note that such  an instance might be a sub-instance
 of a  larger \cfl\ instance  with positive connection costs.   To add
 constraints  of  the knapsack-cover  flavor  would  at least  require
 preprocessing to recognize sub-instances  that are similar to the one
 in our  proof, assuming that  such a task  can be done  in polynomial
 time.  ``Similar''  would mean clusters of closely  located cheap and
 costly facilities  and clients,  where the definition  of "closely'',
 ``cheap'' and ``costly"  would depend somehow on the  actual costs in
 the instance. We would like to emphasize that our proof on the number
 of  rounds  in  Theorem~\ref{smalltheorem}  is  robust  since  it  is
 completely  independent of the  cost structure  of the  instance. One
 could modify all the facility opening and connection costs,
the survival  of the
  fractional solution $(y,x)$ is guaranteed.

 Theorem~\ref{theorem:ls-cfl} implies that the LS lift-and-project method
 fails to capture an efficient strong formulation for \cfl, 
including any useful preprocessing steps as sought by \cite{AnBS13}. 
It would be interesting to complement our  result with a similar one
on the SA hierarchy.

Theorems~\ref{theorem:proper}
and \ref{theorem:proper2} on proper relaxations rule out a constant
integrality gap for ``configuration''-type symmetric LPs of
superpolynomial size, without any  assumptions on the time required to
solve them.  Obtaining a non-symmetric proper LP with a small
gap, if one  exists, seems to require looking  into the cost structure
of the instance, which would entail again some sort of preprocessing.
Of course, one should be careful about calling an algorithm with 
 drastic preprocessing    relaxation-based.

Finally, we conjecture that there is a bad fractional solution for
\lbfl\  that  survives $\omega(1)$  rounds  of  the  LS
procedure.

\bibliographystyle{plain}

\bibliography{bibliography-ver1}

\begin{thebibliography}{10}

\bibitem{AbramsMMP02}
Zo{\"{e}} Abrams, Adam Meyerson, Kamesh Munagala, and Serge Plotkin.
\newblock The integrality gap of capacitated facility location.
\newblock Technical Report CMU-CS-02-199, Carnegie Mellon University, 2002.

\bibitem{SheraliA90}
Warren~P. Adams and Hanif~D. Sherali.
\newblock Linearization strategies for a class of zero-one mixed integer
  programming problems.
\newblock {\em Oper. Res.}, 38(2):217--226, April 1990.

\bibitem{AggarwalLBGGJ12}
Ankit Aggarwal, Anand Louis, Manisha Bansal, Naveen Garg, Neelima Gupta,
  Shubham Gupta, and Surabhi Jain.
\newblock A 3-approximation algorithm for the facility location problem with
  uniform capacities.
\newblock To appear in Mathematical Programming, Ser. A. Extended abstract in
  Proc. IPCO 2010.

\bibitem{AhmadianS12}
Sara Ahmadian and Chaitanya Swamy.
\newblock Improved approximation guarantees for lower-bounded facility
  location.
\newblock {\em CoRR}, abs/1104.3128, 2011.

\bibitem{AnBS13}
Hyung-Chan An, Aditya Bhaskara, and Ola Svensson.
\newblock Centrality of trees for capacitated k-center.
\newblock {\em CoRR}, abs/1304.2983, 2013.

\bibitem{AroraBL02}
Sanjeev Arora, B{\'e}la Bollob\'{a}s, and L\'{a}szl\'{o} Lov\'{a}sz.
\newblock Proving integrality gaps without knowing the linear program.
\newblock In {\em Proceedings of the 43rd Symposium on Foundations of Computer
  Science}, FOCS '02, pages 313--322, Washington, DC, USA, 2002. IEEE Computer
  Society.

\bibitem{AroraBLT06}
Sanjeev Arora, B{\'e}la Bollob{\'a}s, L{\'a}szl{\'o} Lov{\'a}sz, and Iannis
  Tourlakis.
\newblock Proving integrality gaps without knowing the linear program.
\newblock {\em Theory of Computing}, 2(1):19--51, 2006.

\bibitem{AsadpourFS11}
Arash Asadpour, Uriel Feige, and Amin Saberi.
\newblock Santa claus meets hypergraph matchings.
\newblock {\em ACM Trans. Algorithms}, 8(3):24:1--24:9, July 2012.

\bibitem{BalasCC93}
Egon Balas, Sebasti\'{a}n Ceria, and G{\'e}rard Cornu{\'e}jols.
\newblock A lift-and-project cutting plane algorithm for mixed 0-1 programs.
\newblock {\em Math. Program.}, 58(3):295--324, February 1993.

\bibitem{BansalGG12}
Manisha Bansal, Naveen Garg, and Neelima Gupta.
\newblock A 5-approximation for capacitated facility location.
\newblock In Leah Epstein and Paolo Ferragina, editors, {\em Algorithms – ESA
  2012}, volume 7501 of {\em Lecture Notes in Computer Science}, pages
  133--144. Springer Berlin Heidelberg, 2012.

\bibitem{BansalS06}
Nikhil Bansal and Maxim Sviridenko.
\newblock The {S}anta {C}laus problem.
\newblock In {\em Proceedings of the thirty-eighth annual ACM symposium on
  Theory of computing}, STOC '06, pages 31--40, New York, NY, USA, 2006. ACM.

\bibitem{BateniH12}
Mohammadhossein Bateni and Mohammadtaghi Hajiaghayi.
\newblock Assignment problem in content distribution networks: Unsplittable
  hard-capacitated facility location.
\newblock {\em ACM Trans. Algorithms}, 8(3):20:1--20:19, July 2012.

\bibitem{BhaskaraCVGZ12}
Aditya Bhaskara, Moses Charikar, Aravindan Vijayaraghavan, Venkatesan
  Guruswami, and Yuan Zhou.
\newblock {P}olynomial integrality gaps for strong {SDP} relaxations of
  {D}ensest {K}-subgraph.
\newblock In {\em Proceedings of the Twenty-Third Annual ACM-SIAM Symposium on
  Discrete Algorithms}, SODA '12, pages 388--405. SIAM, 2012.

\bibitem{BraunFPS12}
G{\'a}bor Braun, Samuel Fiorini, Sebastian Pokutta, and David Steurer.
\newblock Approximation limits of linear programs (beyond hierarchies).
\newblock In {\em FOCS}, pages 480--489, 2012.

\bibitem{BravermanM13}
Mark Braverman and Ankur Moitra.
\newblock An information complexity approach to extended formulations.
\newblock {\em Electronic Colloquium on Computational Complexity (ECCC)},
  19:131, 2012.
\newblock To appear in Proc. STOC 2013.

\bibitem{CarrFLP00}
Robert~D. Carr, Lisa Fleischer, Vitus~J. Leung, and Cynthia~A. Phillips.
\newblock Strengthening integrality gaps for capacitated network design and
  covering problems.
\newblock In {\em SODA}, pages 106--115, 2000.

\bibitem{CharikarG99}
M.~Charikar and S.~Guha.
\newblock Improved combinatorial algorithms for the facility location and
  k-median problems.
\newblock In {\em Proceedings of the 40th Annual {IEEE} Symposium on
  Foundations of Computer Science}, pages 378--388, 1999.

\bibitem{CharikarMM09}
Moses Charikar, Konstantin Makarychev, and Yury Makarychev.
\newblock Integrality gaps for {S}herali-{A}dams relaxations.
\newblock In {\em Proceedings of the 41st annual ACM {S}ymposium on {T}heory of
  {C}omputing}, STOC '09, pages 283--292, New York, NY, USA, 2009. ACM.

\bibitem{ChudakS99}
Fabi\'{a}n~A. Chudak and David~B. Shmoys.
\newblock Improved approximation algorithms for a capacitated facility location
  problem.
\newblock In {\em Proceedings of the tenth annual ACM-SIAM symposium on
  Discrete algorithms}, SODA '99, pages 875--876, Philadelphia, PA, USA, 1999.
  Society for Industrial and Applied Mathematics.

\bibitem{ChudakW05}
Fabi\'{a}n~A. Chudak and David~P. Williamson.
\newblock Improved approximation algorithms for capacitated facility location
  problems.
\newblock {\em Math. Program.}, 102(2):207--222, March 2005.

\bibitem{Cornuejols08}
G{\'e}rard Cornu{\'e}jols.
\newblock Valid inequalities for mixed integer linear programs.
\newblock {\em Math. Program.}, 112(1):3--44, 2008.

\bibitem{CyganHK12}
Marek Cygan, MohammadTaghi Hajiaghayi, and Samir Khuller.
\newblock Lp rounding for k-centers with non-uniform hard capacities.
\newblock In {\em FOCS}, pages 273--282, 2012.

\bibitem{FernandezdlVKM07}
Wenceslas~Fernandez de~la Vega and Claire Kenyon-Mathieu.
\newblock Linear programming relaxations of maxcut.
\newblock In {\em Proceedings of the eighteenth annual ACM-SIAM symposium on
  Discrete algorithms}, SODA '07, pages 53--61, Philadelphia, PA, USA, 2007.
  Society for Industrial and Applied Mathematics.

\bibitem{Feige98}
Uriel Feige.
\newblock A threshold of ln {\it n} for approximating set cover.
\newblock {\em J. ACM}, 45(4):634--652, 1998.

\bibitem{FioriniMPTD12}
Samuel Fiorini, Serge Massar, Sebastian Pokutta, Hans~Raj Tiwary, and Ronald
  de~Wolf.
\newblock Linear vs. semidefinite extended formulations: exponential separation
  and strong lower bounds.
\newblock In {\em Proceedings of the 44th Symposium on Theory of Computing},
  STOC '12, pages 95--106, New York, NY, USA, 2012. ACM.

\bibitem{GeorgiouMPT10}
Konstantinos Georgiou, Avner Magen, Toniann Pitassi, and Iannis Tourlakis.
\newblock Integrality gaps of 2-o(1) for vertex cover {SDP}s in the
  {L}ov\'{a}sz-{S}chrijver {H}ierarchy.
\newblock {\em SIAM J. Comput.}, 39(8):3553--3570, October 2010.

\bibitem{GoemansT01}
Michel~X. Goemans and Levent Tun\c{c}el.
\newblock When does the positive semidefiniteness constraint help in lifting
  procedures?
\newblock {\em Math. Oper. Res.}, 26(4):796--815, 2001.

\bibitem{GuhaK99}
S.~Guha and S.~Khuller.
\newblock Greedy strikes back: improved facility location algorithms.
\newblock {\em Journal of Algorithms}, 31:228--248, 1999.

\bibitem{GuhaMM00}
S.~Guha, A.~Meyerson, and K.~Munagala.
\newblock Hierarchical placement and network design problems.
\newblock In {\em Proceedings of the 41st Annual Symposium on Foundations of
  Computer Science}, FOCS '00, pages 603--, Washington, DC, USA, 2000. IEEE
  Computer Society.

\bibitem{HaeuplerSS11}
Bernhard Haeupler, Barna Saha, and Aravind Srinivasan.
\newblock New constructive aspects of the {L}ov\'{a}sz local lemma.
\newblock {\em J. ACM}, 58(6):28:1--28:28, December 2011.

\bibitem{Hochbaum82}
D.~S. Hochbaum.
\newblock Heuristics for the fixed cost median problem.
\newblock {\em Mathematical Programming}, 22:148--162, 1982.

\bibitem{JainMMSV03}
Kamal Jain, Mohammad Mahdian, Evangelos Markakis, Amin Saberi, and Vijay~V.
  Vazirani.
\newblock Greedy facility location algorithms analyzed using dual fitting with
  factor-revealing {LP}.
\newblock {\em J. ACM}, 50(6):795--824, November 2003.

\bibitem{JainV01}
Kamal Jain and Vijay~V. Vazirani.
\newblock Approximation algorithms for metric facility location and k-median
  problems using the primal-dual schema and {L}agrangian relaxation.
\newblock {\em J. ACM}, 48(2):274--296, March 2001.

\bibitem{KargerM00}
D.~R. Karger and M.~Minkoff.
\newblock Building {S}teiner trees with incomplete global knowledge.
\newblock In {\em Proceedings of the 41st Annual Symposium on Foundations of
  Computer Science}, FOCS '00, pages 613--, Washington, DC, USA, 2000. IEEE
  Computer Society.

\bibitem{KoropuluPR00}
Madhukar~R. Korupolu, C.~Greg Plaxton, and Rajmohan Rajaraman.
\newblock Analysis of a local search heuristic for facility location problems.
\newblock In {\em Proceedings of the ninth annual ACM-SIAM Symposium on
  Discrete Algorithms}, SODA '98, pages 1--10, Philadelphia, PA, USA, 1998.
  Society for Industrial and Applied Mathematics.

\bibitem{KumarMTV11}
Amit Kumar, Rajsekar Manokaran, Madhur Tulsiani, and Nisheeth~K. Vishnoi.
\newblock On {LP}-based approximability for strict {CSP}s.
\newblock In {\em Proceedings of the Twenty-Second Annual ACM-SIAM Symposium on
  Discrete Algorithms}, SODA '11, pages 1560--1573. SIAM, 2011.

\bibitem{Lasserre01}
Jean~B. Lasserre.
\newblock An explicit exact {SDP} relaxation for nonlinear 0-1 programs.
\newblock In {\em Proceedings of the 8th International IPCO Conference on
  Integer Programming and Combinatorial Optimization}, pages 293--303, London,
  UK, UK, 2001. Springer-Verlag.

\bibitem{laurent}
Monique Laurent.
\newblock A {C}omparison of the {S}herali-{A}dams, {L}ov\'{a}sz-{S}chrijver,
  and {L}asserre relaxations for 0--1 {P}rogramming.
\newblock {\em Math. Oper. Res.}, 28(3):470--496, July 2003.

\bibitem{LeviSS12}
Retsef Levi, David~B. Shmoys, and Chaitanya Swamy.
\newblock {LP}-based approximation algorithms for capacitated facility
  location.
\newblock {\em Math. Program.}, 131(1-2):365--379, 2012.

\bibitem{Li11}
Shi Li.
\newblock A 1.488 approximation algorithm for the uncapacitated facility
  location problem.
\newblock {\em Inf. Comput.}, 222:45--58, January 2013.

\bibitem{LiS13}
Shi Li and Ola Svensson.
\newblock Approximating k-median via pseudo-approximation.
\newblock In Dan Boneh, Tim Roughgarden, and Joan Feigenbaum, editors, {\em
  Proc. 45th STOC}, pages 901--910. ACM, 2013.

\bibitem{LovaszS91}
L.~Lov\'{a}sz and A.~Schrijver.
\newblock Cones of matrices and set-functions and 0-1 optimization.
\newblock {\em SIAM Journal on Optimization}, 1:166--190, 1991.

\bibitem{MahdianP03}
Mohammad Mahdian and Martin P\'{a}l.
\newblock Universal facility location.
\newblock In Giuseppe Battista and Uri Zwick, editors, {\em Algorithms - ESA
  2003}, volume 2832 of {\em Lecture Notes in Computer Science}, pages
  409--421. Springer Berlin Heidelberg, 2003.

\bibitem{MahdianYZ03}
Mohammad Mahdian, Yinyu Ye, and Jiawei Zhang.
\newblock Approximation algorithms for metric facility location problems.
\newblock {\em SIAM J. Comput.}, 36(2):411--432, August 2006.

\bibitem{MathieuS09}
Claire Mathieu and Alistair Sinclair.
\newblock Sherali-adams relaxations of the matching polytope.
\newblock In {\em Proceedings of the 41st annual ACM symposium on Theory of
  computing}, STOC '09, pages 293--302, New York, NY, USA, 2009. ACM.

\bibitem{PalTW01}
M.~P\'{a}l, \'{E}. Tardos, and T.~Wexler.
\newblock Facility location with nonuniform hard capacities.
\newblock In {\em Proceedings of the 42nd IEEE symposium on Foundations of
  Computer Science}, FOCS '01, pages 329--, Washington, DC, USA, 2001. IEEE
  Computer Society.

\bibitem{Raghavendra08}
Prasad Raghavendra.
\newblock Optimal algorithms and inapproximability results for every {CSP}?
\newblock In {\em Proceedings of the 40th annual ACM symposium on Theory of
  Computing}, STOC '08, pages 245--254, New York, NY, USA, 2008. ACM.

\bibitem{RazS97}
R.~Raz and S.~Safra.
\newblock A sub-constant error-probability low-degree test, and a sub-constant
  error-probability {PCP} characterization of~{NP}.
\newblock In {\em Proceedings of the 29th Annual {ACM} Symposium on Theory of
  Computing}, pages 475--484, 1997.

\bibitem{SchoenebeckTT07b}
Grant Schoenebeck, Luca Trevisan, and Madhur Tulsiani.
\newblock A linear round lower bound for {L}ovasz-{S}chrijver {SDP} relaxations
  of {V}ertex {C}over.
\newblock In {\em IEEE Conference on Computational Complexity}, pages 205--216.
  IEEE Computer Society, 2007.

\bibitem{SchoenebeckTT07}
Grant Schoenebeck, Luca Trevisan, and Madhur Tulsiani.
\newblock Tight integrality gaps for {L}ovasz-{S}chrijver {LP} relaxations of
  {V}ertex {C}over and {M}ax {C}ut.
\newblock In {\em Proceedings of the thirty-ninth annual ACM symposium on
  Theory of computing}, STOC '07, pages 302--310, New York, NY, USA, 2007. ACM.

\bibitem{ShmoysTA97}
D.~B. Shmoys, \'{E}. Tardos, and K.~I. Aardal.
\newblock Approximation algorithms for facility location problems.
\newblock In {\em Proceedings of the 29th Annual {ACM} Symposium on Theory of
  Computing}, pages 265--274, 1997.

\bibitem{Svensson11}
Ola Svensson.
\newblock Santa {C}laus schedules jobs on unrelated machines.
\newblock In {\em Proceedings of the 43rd annual ACM symposium on Theory of
  computing}, STOC '11, pages 617--626, New York, NY, USA, 2011. ACM.

\bibitem{SviridenkoW13}
Maxim Sviridenko and Andreas Wiese.
\newblock Approximating the {C}onfiguration-{LP} for minimizing weighted sum of
  completion times on unrelated machines.
\newblock In {\em Proceedings of the 16th international conference on Integer
  Programming and Combinatorial Optimization}, IPCO'13, pages 387--398, Berlin,
  Heidelberg, 2013. Springer-Verlag.

\bibitem{Svitkina08}
Z.~Svitkina.
\newblock Lower-bounded facility location.
\newblock In {\em Proceedings of the 19th {ACM-SIAM} Symposium on Discrete
  Algorithms}, pages 1154--1163, 2008.

\bibitem{Tulsiani09}
Madhur Tulsiani.
\newblock {CSP} gaps and reductions in the {L}asserre hierarchy.
\newblock In {\em Proceedings of the 41st annual ACM symposium on Theory of
  computing}, STOC '09, pages 303--312, New York, NY, USA, 2009. ACM.

\bibitem{Tulsiani11-chapter}
Madhur Tulsiani.
\newblock Lov\'{a}sz-{S}chrijver reformulation.
\newblock In J.~J. Cochran, Jr. L.~Anthony~Cox, P.~Keskinocak, J.~P. Kharoufeh,
  and J.~Cole Smith, editors, {\em Encyclopedia of Operations Research and
  Management Science}. John Wiley and Sons, 2011.

\bibitem{Vygen05}
J.~Vygen.
\newblock Approximation algorithms for facility location problems ({L}ecture
  {N}otes).
\newblock Report 05950-OR, Research Institute for Discrete Mathematics,
  University of Bonn, 2005.
\newblock \texttt{URL: www.or.uni-bonn.de/}$\sim${\tt vygen/files/fl.pdf}.

\bibitem{ShmoysWbook}
David~P. Williamson and David~B. Shmoys.
\newblock {\em The Design of Approximation Algorithms}.
\newblock Cambridge University Press, New York, NY, USA, 1st edition, 2011.

\bibitem{Yannakakis91}
Mihalis Yannakakis.
\newblock Expressing combinatorial optimization problems by linear programs.
\newblock {\em Journal of Computer and System Sciences}, 43(3):441 -- 466,
  1991.

\bibitem{ZhangCY04}
Jiawei Zhang, Bo~Chen, and Yinyu Ye.
\newblock A multi-exchange local search algorithm for the capacitated facility
  location problem.
\newblock In Daniel Bienstock and George Nemhauser, editors, {\em Integer
  Programming and Combinatorial Optimization}, volume 3064 of {\em Lecture
  Notes in Computer Science}, pages 219--233. Springer Berlin Heidelberg, 2004.

\end{thebibliography}

\appendix

\section{Appendix to Section~\ref{sec:prel}}

\begin{lemma} {(\bf Folklore)}  \label{lemma:ap-classic}
Let $I(F,C)$ be an instance of \lbfl\ (\cfl\/) and $z_I^c, z_I^s$ the
corresponding optimal values of 
relaxations (LP-classic) and (LP-star). Then $z_I^c = z_I^s.$
\end{lemma}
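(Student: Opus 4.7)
The plan is to prove the two inequalities $z_I^c \leq z_I^s$ and $z_I^s \leq z_I^c$ by direct translations between feasible solutions of the two LPs that preserve cost.

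For $z_I^c \leq z_I^s$, given an optimal star solution $(x_s)_{s\in\mathcal{S}}$, I would define
\[
y_i := \sum\nolimits_{s \ni i} x_s, \qquad x_{ij} := \sum\nolimits_{s : i \in s, \, j \in s} x_s.
\]
Constraints \eqref{x<y}, \eqref{eq}, \eqref{1>x>0}, \eqref{eq:nni} follow directly from \eqref{star:client}--\eqref{star:nn}, and the capacity (resp. bound) constraint \eqref{sat1} (resp. \eqref{sat2}) follows because every star in $\mathcal{S}$ contains at most $U$ (resp. at least $B$) clients. Expanding $c_s$ shows cost equality.

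For the harder direction $z_I^s \leq z_I^c$, the key idea is to decompose, for each facility $i$ separately, the fractional assignment vector into a convex combination of $0$--$1$ star-vectors. Take an optimal classic solution $(y,x)$ and for each $i$ with $y_i > 0$ consider the vector $\bar v^{(i)} \in [0,1]^m$ with coordinates $\bar v^{(i)}_j := x_{ij}/y_i$. For \cfl{} this vector lies in the polytope $P_U := \{v \in [0,1]^m : \sum_j v_j \leq U\}$, and for \lbfl{} in $P_B := \{v \in [0,1]^m : \sum_j v_j \geq B\}$. Since $U,B$ are integral and each polytope is the intersection of the unit cube with a single half-space (its defining matrix is totally unimodular), both $P_U$ and $P_B$ are integral; their vertices are precisely the $0$--$1$ vectors $v^{(i,k)}$ with $\sum_j v^{(i,k)}_j \leq U$ (resp.\ $\geq B$). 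Hence one can write $\bar v^{(i)} = \sum_k \lambda^{(i)}_k v^{(i,k)}$ with $\lambda^{(i)}_k \geq 0$ and $\sum_k \lambda^{(i)}_k = 1$. Each pair $(i, v^{(i,k)})$ determines a star $s_{i,k} \in \mathcal{S}$, and I set $x_{s_{i,k}} := y_i \lambda^{(i)}_k$ (and $x_s := 0$ for all other stars).

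The last step is to verify feasibility and cost preservation. The facility inequality \eqref{star:facility} follows from $\sum_k x_{s_{i,k}} = y_i \leq 1$. The client equality \eqref{star:client} holds because
\[
\sum\nolimits_{s \ni j} x_s \;=\; \sum\nolimits_i \sum\nolimits_k y_i \lambda^{(i)}_k \,[v^{(i,k)}_j = 1] \;=\; \sum\nolimits_i y_i \bar v^{(i)}_j \;=\; \sum\nolimits_i x_{ij} \;=\; 1.
\]
A symmetric expansion of $\sum_{s} c_s x_s$ recovers $\sum_i f_i y_i + \sum_{i,j} c_{ij} x_{ij}$, giving the matching cost. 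The only potential obstacle is the integrality of $P_U$ and $P_B$; this is clean because each is defined by a single side of the cube together with box constraints, but I would explicitly remark on the total unimodularity (or simply on the shape of the vertices) to make the decomposition rigorous.
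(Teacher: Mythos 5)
Your proof is correct, and for the hard direction $z_I^s \le z_I^c$ it follows the same high-level plan as the paper (decompose the fractional assignment to each facility $i$ separately into $0$--$1$ star indicator vectors, scaled by $y_i$), but the mechanism you use for the decomposition is genuinely different. The paper argues constructively via a geometric wrap-around packing: the quantity $\sum_j x_{ij}$ is packed into a rectangle of height $y_i$ and width $w_i=\lceil\sum_j x_{ij}/y_i\rceil$, clients are laid down as contiguous blocks that may wrap to the next column, and each maximal horizontal strip between consecutive break heights yields one star with weight equal to the strip height. You instead observe that $\bar v^{(i)}=x_{i\cdot}/y_i$ lies in the polytope $\{v\in[0,1]^m:\sum_j v_j\le U\}$ (resp.\ $\ge B$), whose constraint matrix (a single all-ones row together with the box constraints) is totally unimodular, so the polytope is integral and $\bar v^{(i)}$ is a convex combination of $0$--$1$ vectors that are exactly the admissible star shapes for facility $i$. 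Your version is shorter and avoids the case analysis hidden in the paper's packing (clients straddling strip boundaries, the count of non-empty strips per horizontal region), at the cost of being nonconstructive as written; the paper's packing argument can be read as an explicit algorithm producing the same convex combination, which is mildly useful if one wants to actually solve (LP-star) via (LP-classic). One small point worth making explicit in your writeup: for the LBFL case you implicitly use $B\le m$ so that $P_B\neq\varnothing$; this follows from $x_{ij}\le y_i$ (constraint \eqref{x<y}) together with $\sum_j x_{ij}\ge B y_i$, and is anyway necessary for a feasible integral solution to exist.
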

\begin{proof}
It is easy to see that for any feasible solution to (LP-star) that
satisfies  $\sum_{s \ni i} x_s \leq 1,$ for all $i \in F,$ we can 
construct a solution to (LP-classic) of the same cost. We set $y_i =
\sum_{s \ni i} x_s,$  and $x_{ij} = \sum_{s \mid \{i,j\} \in s} x_s.$

For the converse, we are given a feasible solution $(y,x)$ to
(LP-classic) and we wish to produce a solution $x'$ to (LP-star) of
the same cost.  We proceed to define the stars in the support of $x'.$ 

Fix a facility $i \in F,$ with $y_i > 0.$ Consider  a rectangle $R_i$ of height 
$y_i$ and width $w_i= \lceil \sum_j x_{ij}/y_i \rceil.$ By the
feasibility of $(y,x),$ $w_i \geq B.$  We consider  the quantity
$\sum_j x_{ij}$ as fractional weight that we will pack 
within $R_i.$ 
We divide the rectangle $R_i$ into $w_i$ vertical strips of width $1$
and height $y_i$ that are initially
empty.  
We
start packing from height $h_1=0.$ Let $1 \leq P \leq w_i$ be the
current strip position. For the current client
$j$ we pack weight within the current strip   starting from the current height
$h_{l-1}$ and we
update $h_l$ to $\min \{ y_i, h_{l-1} + x_{ij} \},$ $l >1.$ If 
$h_l = y_i,$ this means that we can pack  no more weight   at the
current position $P;$ we set $h_{l+1}=0$ and pack the remaining quantity $x_{ij} - (y_i
- h_{l-1})$ in the next strip at position $P+1.$ Because $y_i \geq
x_{ij},$ every client $j$ will be fully packed by using at most 
two 
consecutive strips. By  the definition of $w_i$ we have enough
area to pack all of $\sum_j x_{ij}$ within $R_i.$ 

For every value of $h_l$ that was used by the packing algorithm   draw a horizontal
line that stabs $R_i$ at this height. These lines partition $R_i$  into
regions that are rectangles of width $w_i.$ 
 Each of them intersects at least $\max \{ w_i -1, B
\}$ non-empty vertical strips. Because for every $j,$ $x_{ij} \leq
y_i$ no two of 
these non-empty strips contain fractional weight corresponding to the  same $j \in C.$ The clients
corresponding to  those strips, together with $i$ form a star $s.$ We
set $x_{s}'$ equal to the height of the horizontal region. 
We repeat the process above for every $i \in F.$ 
It is easy
to see that in this way we have produced a solution $x'$ that is
feasible  for (LP-star) and has the  same cost as $(y,x).$ 
In the case of \cfl\ the proof is similar.
\end{proof}

\end{document}